\documentclass[letterpaper,11pt]{article} 
\usepackage{amsmath, amssymb, amsthm}
\usepackage[margin=1in]{geometry}     
\usepackage{hyperref}
\hypersetup{hypertexnames=false}
\usepackage{booktabs}
\usepackage{float}
\usepackage{algorithm}
\usepackage{algpseudocode}
\usepackage{tikz}
\usepackage{graphicx}
\usepackage{indentfirst}
\usepackage{multirow}

\newtheorem{theorem}{Theorem}[section]
\newtheorem{lemma}[theorem]{Lemma}
\newtheorem{proposition}[theorem]{Proposition}
\newtheorem{corollary}[theorem]{Corollary}
\newtheorem{definition}[theorem]{Definition}

\newtheorem{remark}[theorem]{Remark}

\newcommand{\softO}{\widetilde{O}}

\begin{document}
\title{Improved Quantum Algorithms for Black-Box Abelian Group Decomposition}
\author{%
  Junrong Luo\thanks{School of Mathematics and Statistics,
  Wuhan University.
  Email: \href{mailto:junrong_luo@whu.edu.cn}{\texttt{junrong\_luo@whu.edu.cn}}.}
  \and
  Yinan Li\thanks{School of Artificial Intelligence,
  Wuhan University.
  Email: \href{mailto:Yinan.Li@whu.edu.cn}{\texttt{Yinan.Li@whu.edu.cn}}.}
  \and
  Fran\c{c}ois Le Gall\thanks{Graduate School of Mathematics,
  Nagoya University.
  Email: \href{mailto:legall@math.nagoya-u.ac.jp}{\texttt{legall@math.nagoya-u.ac.jp}}.}
}
\date{}
\maketitle
\begin{abstract}
Decomposing finite Abelian black-box groups into cyclic
factors is a basic problem in quantum computation, with
applications to the Abelian hidden subgroup problem.
Cheung and Mosca (\emph{Quantum Inf. Comput.}, 2001)
developed a polynomial-time quantum algorithm for this problem. Recently, Regev (\emph{J.~ACM}, 2025) proposed a factoring algorithm
that uses smaller quantum circuits than Shor's algorithm,
assuming a number-theoretic conjecture.
Ragavan and Vaikuntanathan (\emph{CRYPTO}, 2024)
reduced its quantum space requirements.
Pilatte (\emph{Forum Math. Pi}, 2026) proved a version
of the conjecture and established unconditional correctness
for modified versions of these algorithms.

We give a quantum algorithm for decomposing finite Abelian
black-box groups by adapting the quantum sampling and classical
lattice-reduction method of Regev's factoring algorithm.
The algorithm computes an invariant-factor decomposition and
corresponding cyclic generators with high probability.
For a group $G$ of order at most $2^n$ with unique encodings
and reversible group-operation cost
$T_{\mathrm{op}}=\Omega(\sqrt n)$, our algorithm uses
$O(\sqrt n)$ quantum circuits, each with
$\widetilde O(nT_{\mathrm{op}})$ gates and executed at most $O(n)$ times.
For comparison, we consider the algorithm of Cheung and
Mosca (Quantum Inf. Comput. 1(3), 26--32, 2001), which
decomposes the Sylow $p$-subgroups separately.
We also consider the common-modulus implementation of the
extended Cheung--Mosca algorithm of Bermejo-Vega, Lin, and
Van den Nest.
Our algorithm reduces the sum of the circuit gate counts from
$\widetilde O(n^2T_{\mathrm{op}})$ to
$\widetilde O(n^{3/2}T_{\mathrm{op}})$.
The total quantum time bound decreases from
$\widetilde O(n^3T_{\mathrm{op}})$ to
$\widetilde O(n^{5/2}T_{\mathrm{op}})$, and the quantum space
bound decreases from $O(n^2)$ to $O(n)$ qubits.
The classical computation uses polynomially many bit operations
and group-operation queries. These improvements rely on two technical ingredients. We prove that the integer relation lattices used in our
algorithm admit, with high probability, integral bases
whose vectors have Euclidean norm at most $e^{O(\sqrt n)}$.
We preserve the circuit-size advantage of Regev's algorithm
for group decomposition by incorporating all $O(n)$ sampled
generators while keeping the lattice-reduction dimension
at $O(\sqrt n)$.
\end{abstract}

\section{Introduction}\label{sec:introduction}

\subsection{Background}
\label{sec:background}

\subsubsection*{Quantum algorithms for arithmetic and hidden subgroup problems}

Shor developed polynomial-time quantum algorithms for integer factoring and discrete logarithms~\cite{Shor97}. For factoring an integer \(N\), the quantum subroutine determines the order of an element \(a\in(\mathbb Z/N\mathbb Z)^\times\) by period finding for the function \(x\mapsto a^x\bmod N\). After the quantum Fourier transform, the measurement outcomes  are processed using continued fractions to recover the order. For discrete logarithms, Shor uses modular exponentiation and quantum Fourier transforms to obtain relations from which the discrete logarithm is recovered. Kitaev developed a polynomial-time quantum algorithm for the Abelian stabilizer problem, which includes order finding and discrete logarithms as special cases. And his method is based on eigenvalue measurement~\cite{Kitaev95}. Mosca and Ekert subsequently described the hidden subgroup problem as a more general formulation of order finding, discrete logarithms, and the Abelian stabilizer problem. And they showed that the Abelian hidden subgroup problem can also be described and analysed in terms of eigenvalue estimation~\cite{mosca1999hidden}. Jozsa further reviewed Shor's factoring and discrete-logarithm algorithms and drew out the Abelian hidden subgroup problem as their unifying generalization~\cite{jozsa2001quantum}.

 Hallgren developed quantum algorithms for periodic functions over the reals with irrational periods and applied them to Pell's equation and the principal ideal problem in real quadratic number fields~\cite{hallgren2007pell}. Then he treated constant-degree number fields, where under the logarithmic embedding the unit group forms a real-valued lattice; computing the unit group is reduced to finding a basis of this lattice, using Fourier sampling to obtain information from its dual lattice~\cite{Hallgren2005}. Schmidt and Vollmer independently obtained a quantum algorithm for the irrational period lattice of a function on \(\mathbb Z^n\) and applied it to unit-group computation in fixed-degree number fields~\cite{schmidtVollmer2005unit}. For number fields of arbitrary degree, Eisenträger, Hallgren, Kitaev, and Song removed the constant-degree restriction by extending the hidden subgroup problem to continuous groups and giving a quantum algorithm for the HSP over \(\mathbb R^n\). Their construction uses Gaussian-weighted superpositions of lattice points to represent a real-valued lattice. The algorithm for computing the unit group is polynomial in the field degree and the logarithm of its discriminant~\cite{eisentragerHallgrenKitaevSong2014unit}.
 Biasse and Song then reduced class-group computation and the principal ideal problem in arbitrary-degree number fields to \(S\)-unit-group computation, and reduced the latter to the continuous hidden subgroup problem introduced by Eisenträger, Hallgren, Kitaev, and Song~\cite{eisentragerHallgrenKitaevSong2014unit,biasseSong2025Sunits}. Later work supplied a detailed treatment of the \(S\)-unit algorithm and explicit polynomial bounds for its quantum gate and memory complexity~\cite{biasseSong2025Sunits,deBoerFelderhoff2026}.

Regev~\cite{Regev25} proposed a multidimensional variant of
Shor's factoring algorithm with asymptotically smaller quantum gate complexity. For an $n$-bit integer, the algorithm uses $\softO(n^{3/2})$
quantum gates per circuit execution, improving the $\softO(n^2)$ quantum gates
in Shor's algorithm~\cite{Shor97}. The main idea behind Regev's quantum algorithm is to design a lattice for implementing the modular exponentiation and recover the period via classical LLL post-processing.
The original implementation of Regev's algorithm requires $O(n^{3/2})$ qubits, and its
correctness and complexity analysis depends on a number-theoretic assumption. Ragavan and Vaikuntanathan improve the space efficiency and noise tolerance of Regev's algorithm. They used reversible exponentiation based on Fibonacci numbers to reduce the quantum space to \(O(n\log n)\) qubits while using \(O(n^{3/2}\log n)\) gates~\cite{Ragavan26}. And they also modified the classical post-processing so that it tolerates a constant fraction of corrupted runs. Ragavan and Vaikuntanathan further replaces Fibonacci numbers by more general sequences defined through linear recurrences, obtaining constant-factor improvements in space and/or circuit size~\cite{Ragavan26}. Pilatte proves a version of Regev's number-theoretic conjecture using tools from analytic number theory, including zero-density estimates~\cite{Pilatte26}. His results establish unconditional correctness for slightly
modified versions of Regev's factoring algorithm and its
extensions to discrete logarithms in multiplicative groups
modulo integers. 
The resulting factoring algorithm uses
$O(n^{3/2}\log^3 n)$ gates and $O(n\log^3 n)$ qubits per
circuit execution, with $O(\sqrt n)$ executions. Eker\aa{} and G\"artner extended method to discrete logarithms and discuss extensions to order finding and complete factorization~\cite{Ekera24}. Further work studies extensions to discrete logarithms on
elliptic curves and hyperelliptic
Jacobians~\cite{BarbulescuBarcauPasol2026,BarbulescuBisson2026}.

Beyond Abelian groups, Ettinger and Høyer studied the HSP for non-abelian groups and gave an algorithm for the dihedral case using a linear number of oracle calls~\cite{EttingerHoyer2000}. Ettinger, Høyer, and Knill then showed that the hidden subgroup of an arbitrary finite group can be identified with polynomially many oracle calls, although their algorithm requires exponential time~\cite{EttingerHoyerKnill1999,Ettinger04}. Kuperberg gave a subexponential-time algorithm for the dihedral HSP with time and query complexity \(2^{O(\sqrt{\log N})}\); the algorithm also applies to the hidden-shift problem over arbitrary finitely generated Abelian groups~\cite{Kuperberg05}. Regev modified Kuperberg's algorithm so that the running time remains subexponential while the space requirement becomes polynomial~\cite{regev2004dihedral}. Childs and van Dam studied a generalized hidden-shift problem whose cases \(M=N\) and \(M=2\) correspond respectively to the Abelian HSP and the dihedral HSP, and gave a polynomial-time quantum algorithm when \(M>N^\varepsilon\) for any fixed \(\varepsilon>0\)~\cite{ChildsVanDam2007}.

\subsubsection*{Quantum algorithms for black-box group computation}

Black-box group computation is an important theoretical model in quantum computing, particularly for studying quantum speedups for algebraic problems. In the black-box group model, group elements are represented by unique strings, while group multiplication and inversion are provided through quantum oracle access.
Even for Abelian groups, this model includes problems such as
order finding, discrete logarithms, and group
decomposition~\cite{Shor97,Cheung01}. 

The Fourier-sampling techniques underlying Shor's algorithms
extend naturally to general finite Abelian
groups~\cite{Shor97,mosca1999hidden}.
The standard implementation of the quantum Fourier transform
over $G$ uses cyclic coordinates from an explicit decomposition
\[
G\cong \mathbb Z_{n_1}\times\cdots\times\mathbb Z_{n_r}.
\]
In the black-box model, this structural information can be
difficult to recover classically.
Cheung and Mosca~\cite{Cheung01} developed a polynomial-time
quantum algorithm for decomposing a finite Abelian black-box
group into a direct product of cyclic subgroups. 
Their algorithm decomposes the Sylow $p$-subgroups separately,
using the Abelian hidden subgroup algorithm and Smith normal form.
It explicitly computes generators of the cyclic factors
and their orders.
Bermejo-Vega, Lin, and Van den Nest~\cite{BermejoVegaLinVanDenNest2014}
presented an extended Cheung--Mosca algorithm.
It uses a common modulus to recover relations among all input
generators and solves linear equations to express these
generators in the resulting cyclic coordinates.
These result extends the applicability of Fourier-based quantum
algorithms to finite Abelian black-box groups. Under
the Generalized Riemann Hypothesis
Cheung and Mosca algorithm can be used
to compute class numbers of imaginary quadratic number fields~\cite{Cheung01}. For the group isomorphism problem, the Abelian case can be solved` efficiently. Le Gall later gave a polynomial-time quantum algorithm in the black-box model for a non-Abelian class consisting of extensions of an Abelian group \(A\) by a cyclic group \(\mathbb Z_m\), under the condition \(\gcd(|A|,m)=1\)~\cite{LeGall10}. Bermejo-Vega, Lin, and Van den Nest subsequently introduced black-box normalizer circuits and showed that finite Abelian group decomposition is complete for the associated complexity class. They also showed that black-box normalizer circuits can be efficiently simulated classically if a subroutine for solving the group-decomposition problem is provided~\cite{BermejoVegaLinVanDenNest2014}.

Watrous gave a polynomial-time quantum algorithm for computing the order of a solvable black-box group with unique encoding~\cite{watrous2001solvable}. Membership testing, testing equality of subgroups, and testing normality of a subgroup reduce to computing orders of solvable groups and therefore also admit polynomial-time quantum algorithms. His algorithm also produces a quantum state approximating the uniform superposition over the elements of a chosen subgroup, which allows existing quantum algorithms for Abelian groups to be applied to Abelian factor groups of solvable groups~\cite{watrous2001solvable}. Ivanyos, Magniez, and Santha subsequently gave quantum implementations of black-box group algorithms of Beals and Babai. For black-box groups with unique encoding, their algorithms solve constructive membership, compute the group order and a presentation, find generators for the center, construct a composition series, and find Sylow subgroups in time polynomial in the input size and \(\nu(G)\), where \(\nu(G)=1\) for solvable groups~\cite{IvanyosMagniezSantha2003}.

For non-Abelian black-box groups, Friedl, Ivanyos, Magniez, Santha, and Sen introduced the Translating Coset problem, which generalizes Hidden Translation and Hidden Subgroup. Their results give efficient quantum algorithms for certain solvable groups and subexponential-time algorithms for arbitrary solvable groups~\cite{FriedlEtAl2014}. Ivanyos, Sanselme, and Santha later gave an efficient quantum algorithm for the Hidden Subgroup Problem in groups of nilpotency class at most two~\cite{IvanyosSanselmeSantha2012}.

For black-box groups, Watrous showed that Group Non-Membership has polynomial-size quantum proofs that can be verified in polynomial time, placing the problem in \(\mathrm{QMA}\)~\cite{Watrous2000}. Aaronson and Kuperberg later gave a \(\mathrm{QCMA}\) protocol under group-theoretic assumptions. Without these assumptions, their protocol uses only polynomially many queries to the group oracle. They conjectured that Group Non-Membership is in \(\mathrm{QCMA}\)~\cite{AaronsonKuperberg2007}. In 2025, Le Gall, Nishimura, and Thakkar proved that Group Order Verification for black-box groups is in \(\mathrm{QCMA}\cap\mathrm{coQCMA}\)~\cite{LeGallNishimuraThakkar2025}. This implies that Group Non-Membership is in \(\mathrm{QCMA}\). They also showed that Group Isomorphism is in \(\mathrm{QCMA}\). Homomorphism, Minimal Normal Subgroup, Proper Subgroup, and Simple Group are in \(\mathrm{QCMA}\cap\mathrm{coQCMA}\), while Intersection, Centralizer, and Maximal Normal Subgroup are in \(\mathrm{coQCMA}\).

A natural question is whether Regev's algorithm can be generalized
to other algebraic problems with improved quantum resource bounds.
For factoring an $n$-bit integer, its circuit uses
$\widetilde O(n^{3/2})$ gates per execution and is executed
$O(\sqrt n)$ times~\cite{Regev25}.
The total quantum time complexity is therefore
$\widetilde O(n^2)$, which matches the asymptotic bound for the
standard modular-exponentiation implementation of Shor's
algorithm~\cite{Shor97}.
This leads to the following question:
\begin{quote}
\itshape
Can Regev's algorithm be generalized to black-box finite Abelian
group decomposition with lower total quantum time complexity
than the Cheung--Mosca algorithm and extended Cheung--Mosca algorithm?
\end{quote}

In this paper, for a finite Abelian black-box group \(G\) with \(|G|\le 2^n\), unique encodings and reversible group-operation cost \(T_{\mathrm{op}}=\Omega(\sqrt n)\), we give a quantum algorithm that computes an invariant-factor decomposition of \(G\) and corresponding generators with high probability using $
\widetilde O(n^{5/2}T_{\mathrm{op}})
$ quantum time complexity and \(O(n)\) qubits. For comparison, we consider the Cheung--Mosca
algorithm~\cite{Cheung01}, which decomposes the Sylow
$p$-subgroups separately and also consider the common-modulus implementation of the
extended Cheung--Mosca
algorithm~\cite{BermejoVegaLinVanDenNest2014}.
Our algorithm reduces the sum of the circuit gate counts from
$\widetilde O(n^2T_{\mathrm{op}})$ to
$\widetilde O(n^{3/2}T_{\mathrm{op}})$.
The total quantum time bound reduces from
$\widetilde O(n^3T_{\mathrm{op}})$ to
$\widetilde O(n^{5/2}T_{\mathrm{op}})$ and the quantum space bound reduces from $O(n^2)$ to $O(n)$
qubits.

\subsection{The Cheung--Mosca Method and Its Extension}
\label{sec:cheung-mosca}

Let \(G\) be a finite Abelian black-box group with \(|G|\leq 2^n\),
unique \(O(n)\)-bit encodings, and an oracle for the group operation.
Write \(r=r(G)\) for its minimum number of generators.
Taking \(t=n+O(\log n)\) independent uniform samples
\(a_1,\ldots,a_t\) gives a generating list with high probability.
The decomposition problem is to find invariant factors
\(2\leq n_1\mid\cdots\mid n_r\) and corresponding cyclic generators.

\paragraph{Cheung--Mosca's algorithm.}
Cheung and Mosca~\cite[Section~4]{Cheung01} reduce group
decomposition to the Abelian hidden subgroup problem over Sylow 
p
p-subgroups.
They first compute the orders \(o_i=\operatorname{ord}(a_i)\)
and their prime factorizations using Shor's algorithms~\cite{Shor97}.
For each prime \(p\mid o_i\), let \(p^e\) be the largest
power of \(p\) dividing \(o_i\), and then
the element \((o_i/p^e)a_i\) has order \(p^e\).
For each nonidentity \(a_i\), B\'ezout's identity guarantees that these \(p\)-power order elements generate the cyclic subgroup \(\langle a_i \rangle\). Since the original elements \(a_i\) generate \(G\), this full collection still generates \(G\).
Collecting its elements of \(p\)-power order constructs
a generating set for the Sylow \(p\)-subgroup \(G_p\).

Set this generating set as \(a_{p,1},\ldots,a_{p,t_p}\),
and set \(Q_p=\max_i\operatorname{ord}(a_{p,i})\).
Every generator order divides \(Q_p\), so the homomorphism
\(f_p:\mathbb Z_{Q_p}^{t_p}\to G_p\) is well defined by
$
  f_p(\mathbf z)=\sum_{i=1}^{t_p}z_i a_{p,i}.
$ It hides the subgroup \(K_p=\ker f_p\).
The Abelian HSP algorithm computes generators of \(K_p\).
Since \(G_p\cong\mathbb Z_{Q_p}^{t_p}/K_p\), these kernel
relations determine an integer presentation of \(G_p\).
Smith normal form computes its cyclic factors and expresses
their generators as integer combinations of the \(a_{p,i}\). The cyclic factors of all the Sylow subgroups form a direct-sum
decomposition of \(G\), since \(G=\bigoplus_p G_p\).
Combining cyclic factors of coprime orders classically produces
the invariant factors and corresponding generators of \(G\).

Let $T_{\mathrm{op}}$ denote the gate cost of a reversible group
operation using $O(n)$ work qubits, and assume
$T_{\mathrm{op}}=\Omega(\sqrt n)$.
Each order finding circuit uses $\softO(nT_{\mathrm{op}})$ gates and $O(n)$
qubits, and is executed $\softO(1)$ times.
The order-finding step therefore takes
$\softO(n^2T_{\mathrm{op}})$ quantum time, as we have \(O(n)\) generators.

After computing the orders $o_1,\ldots,o_t$ of the input
generators, we factor them using Shor's algorithm~\cite{Shor97}.
Before factoring each order, we classically divide out all
powers of previously found prime factors.
Since every $o_i$ divides $|G|\leq 2^n$, the remaining integers
require at most $O(n)$ successful recursive splits in total.
With fast reversible integer arithmetic, each quantum
execution uses $\softO(n^2)$ gates and $\softO(n)$ qubits.
Including repetitions, this step takes $\softO(n^3)$ quantum
time with high probability.
These factorizations allow us to replace the input generators
by elements of prime-power order.
Grouping these elements by prime gives generating sets for
the Sylow subgroups~\cite[p.~30]{Cheung01}.
Once the problem has been reduced to a Sylow $p$-subgroup $G_p$,
the relation-sampling circuit uses
$\softO(t_p\log Q_p\,T_{\mathrm{op}})$ gates and $O(n^2)$ qubits
per execution.
To efficiently recover \(K_p\) with high probability, the circuit requires \(O(r(G_p)+\log n)\) executions.
Since $\sum_p r(G_p)\leq n$ and $\sum_p\log_2 Q_p\leq n$,
processing all Sylow subgroups uses $\softO(n)$ executions
in total and $\softO(n^3T_{\mathrm{op}})$ quantum time.

The gate counts of the distinct quantum circuits sum to
$\softO(n^2T_{\mathrm{op}}+n^3)$.
Including all executions, the total quantum time complexity
is $\softO(n^3T_{\mathrm{op}})$ with $O(n^2)$ qubits used.
\paragraph{The extended Cheung--Mosca algorithm.}
Bermejo-Vega, Lin, and Van den Nest give a formulation that uses
one modulus for the entire group
\cite[Section~5.5, Algorithm~6]{BermejoVegaLinVanDenNest2014}.
They compute \(o_i=\operatorname{ord}(a_i)\) by quantum order finding
and set \(Q=\operatorname{lcm}(o_1,\ldots,o_t)\).
Since \(Qa_i=0_G\) for every \(i\), the homomorphism
\(F_Q:\mathbb Z_Q^t\to G\) is well defined by
$
F_Q(\mathbf z)=\sum_{i=1}^t z_i a_i.
$ The map \(F_Q\) is constant on each coset of
\(K=\ker F_Q\) and takes distinct values on distinct cosets.
Finding generators of \(K\) is therefore an Abelian HSP
over \(\mathbb Z_Q^t\), whose cyclic decomposition is known.
The corresponding integer relation lattice is
\[
\mathcal L_{\mathrm{CM}}
=
\left\{
\mathbf z\in\mathbb Z^t:
\sum_{i=1}^t z_i a_i=0_G
\right\}.
\]
An integer vector belongs to \(\mathcal L_{\mathrm{CM}}\)
precisely when its reduction modulo \(Q\) belongs to \(K\).
Since \(a_1,\ldots,a_t\) generate \(G\),
$
G\cong\mathbb Z^t/\mathcal L_{\mathrm{CM}}.
$
From these relations, classical post-processing based on
Smith normal form computes the invariant factors and
the corresponding cyclic generators in polynomial time.

Under the same resource assumptions, 
each circuit uses $\softO(nT_{\mathrm{op}})$ gates and $O(n)$
qubits, and is executed $\softO(1)$ times.
The order-finding steps take $\softO(n^2T_{\mathrm{op}})$
quantum time. For relation sampling, evaluating $F_Q$ requires
$O(t\log Q)=O(n^2)$ controlled group operations.
Thus the sampling circuit uses \(\softO(n^2T_{\mathrm{op}})\) gates per execution and \(O(n^2)\) qubits, and is executed \(O(n)\) times to recover the kernel with high probability.
The gate counts of all distinct quantum circuits used by the
algorithm sum to $\softO(n^2T_{\mathrm{op}})$.
Including all executions, the total quantum time complexity is \(\softO(n^3T_{\mathrm{op}})\), and the algorithm uses \(O(n^2)\) qubits.

\subsection{Regev-Type Approach to Group Decomposition}
\label{sec:direct-regev-approach}

We consider applying Regev's method~\cite{Regev25} to the integer
relation lattice in the extended Cheung--Mosca calgorithm~\cite{BermejoVegaLinVanDenNest2014}.
Let \(H=\langle g_1,\ldots,g_d\rangle\), where \(g_i\in G\), and define \(\Phi:\mathbb Z^d\to G\) by
$
  \Phi(\mathbf z)=\sum_{i=1}^d z_i g_i.
$
Its kernel \(\mathcal L=\ker\Phi\) is the relation lattice, and \(H\cong\mathbb Z^d/\mathcal L\).
When elements $\{g_i\}$ are the full generating list from
Section~\ref{sec:cheung-mosca}, we have $H=G$ and
$\mathcal L=\mathcal L_{\mathrm{CM}}$.

Regev's quantum procedure prepares an approximation to the state
\[
|\psi_0\rangle\propto
\sum_{\mathbf z\in\{-D/2,\ldots,D/2-1\}^d}
\rho_R(\mathbf z)|\mathbf z\rangle|0\rangle,
\]
where $\rho_R(\mathbf z)=\exp(-\pi\|\mathbf z\|_2^2/R^2)$
is the Gaussian amplitude with width $R>0$, and $D$ is a
power of two.
The procedure then evaluates $\Phi$ in the second register,
mapping $|\mathbf z\rangle|0\rangle$ to
$|\mathbf z\rangle|\Phi(\mathbf z)\rangle$. Measuring that register leaves a superposition supported
on a coset of $\mathcal L$ within the box. Fourier transforms of
the coefficient registers then produce approximations to uniformly
sampled classes in $\mathcal L^*/\mathbb Z^d$~\cite{Regev25}.
The evaluation does not require the orders of the $g_i$.
Ragavan and Vaikuntanathan reduce the space used by Regev's factoring
circuit through reversible exponentiation based on Fibonacci
numbers~\cite{Ragavan26}.

Regev constructs an auxiliary lattice from the approximate
dual samples~\cite[Lemma~4.4]{Regev25}.
Every vector of $\mathcal L$ with norm at most a prescribed
bound $T$ has a short lift to this auxiliary lattice.
His LLL-based procedure selects vectors whose integer span
contains these lifts~\cite[Claim~5.1]{Regev25}.
For sufficiently accurate samples, the selected vectors are
short enough that their first $d$ coordinates belong to
$\mathcal L$, with constant probability.
These projected vectors express every vector of $\mathcal L$
of norm at most $T$ as an integer combination.
If $\mathcal L$ admits an integral basis satisfying this
norm bound, the projected vectors therefore generate
$\mathcal L$.
These relations give an integer presentation of $H$ and
Smith normal form then computes the invariant factors and
corresponding cyclic generators of $H$ in polynomial classical
time~\cite{KannanBachem79}.

Let $T$ bound the Euclidean lengths of an integral basis of
$\mathcal L$.
Following Regev's analysis~\cite{Regev25}, for $d\geq\sqrt n$
a sufficient length for each coefficient register is
$\ell=\Theta(d+n/d+\log T)$.
Applying the same recovery argument with
$d+O(\log n)=O(d)$ independent samples reduces the failure
probability to inverse-polynomial size.
With reversible Fibonacci evaluation~\cite{Ragavan26},
one execution uses $O(n+d\ell)$ qubits and
$\softO(d\ell T_{\mathrm{op}}+d\ell^2+n\ell)$ elementary gates,
where each group operation has cost $T_{\mathrm{op}}$.

To apply this procedure to the full generating list in the
extended Cheung--Mosca calgorithm~\cite{BermejoVegaLinVanDenNest2014}, take
$d=\Theta(n)$.
The general bound $\log T=O(n)$ on an integral basis then permits
$\ell=\Theta(n)$.
The sampling circuit has
$\softO(n^2T_{\mathrm{op}}+n^3)$ gates and is executed $O(n)$ times.
The total quantum time is therefore
$\softO(n^3T_{\mathrm{op}}+n^4)$, using $O(n^2)$ qubits.
The coefficient registers alone contain $\Theta(n^2)$ qubits.
These are upper bounds for the stated implementation.

To preserve the circuit-size advantage of Regev's algorithm,
we take $d=\Theta(\sqrt n)$ and require $\log T=O(\sqrt n)$.
Under the assumption $T_{\mathrm{op}}=\Omega(\sqrt n)$,
each execution uses $\softO(nT_{\mathrm{op}})$ gates and
$O(n)$ qubits.
The circuit is executed $O(\sqrt n)$ times, so the total
quantum time is $\softO(n^{3/2}T_{\mathrm{op}})$.
However, the relation lattice then has only $d$ dimensions,
corresponding to the $d$ selected group elements.
If $r(G)>d$, these elements cannot generate $G$, and the
procedure only decomposes a proper subgroup of $G$.

To apply Regev's algorithm to group decomposition while
preserving this circuit-size advantage, we must prove that
the relevant relation lattices admit integral bases satisfying
the required length bound.
We must also process a generating list of $O(n)$ elements
while keeping each relation lattice $O(\sqrt n)$-dimensional.

\subsection{Techniques and Main Result}
\label{sec:results-and-techniques}

\paragraph{Controlling the lattice dimension.}
We process the sampled list $g_1,\ldots,g_L$ in batches of
$b=O(\sqrt n)$ elements, computing decompositions along
the chain
\[
H_0\subseteq H_1\subseteq\cdots\subseteq
H_t=\langle g_1,\ldots,g_L\rangle,
\]
where $H_0=\langle g_1,\ldots,g_d\rangle$ for
$d=\Theta(\sqrt n)$.
The final subgroup equals $G$ with high probability.
At each extension, the next batch $w_1,\ldots,w_b$ defines
$H_{k+1}=\langle H_k,w_1,\ldots,w_b\rangle$.
We partition the known cyclic factors as
\[
H_k=H^{-}\oplus H^{+}
=
\left(\bigoplus_{i=1}^{r}\langle y_i\rangle\right)
\oplus
\left(\bigoplus_{i=1}^{s}\langle y'_i\rangle\right),
\]
where $\operatorname{ord}(y_i)\leq2^{\sqrt n}$ and
$\operatorname{ord}(y'_i)>2^{\sqrt n}$.
The known cyclic coordinates of $H^{-}$ use $O(n)$ qubits
in total.
Since the product of the factor orders is at most $2^n$,
we have $s\leq\sqrt n$.
We form $d$ auxiliary random elements $h_1,\ldots,h_d$
as integer combinations of $y'_1,\ldots,y'_s$ and the
new batch, recording the coefficients.
Their relation lattice is
\[
\mathcal L
=
\left\{
(\mathbf z,\mathbf u)\in\mathbb Z^d\times\mathbb Z^b:
\sum_{j=1}^{d}z_jh_j
+\sum_{l=1}^{b}u_lw_l=0_G
\right\}.
\]
Its dimension is $d+b=O(\sqrt n)$, independently of the
rank of $H_k$.
The quantum circuit combines Fourier sampling on the
cyclic coordinates of $H^{-}$ with Regev's Gaussian
Fourier sampling~\cite{Regev25} on the integer coordinates.
The joint samples contain cyclic character labels and
approximate phases corresponding to the same character
of $H_{k+1}$.
Classical lattice reduction and character reconstruction
then recover a presentation of $H_{k+1}$, from which
Smith normal form computes its cyclic decomposition.

\paragraph{Short bases for the relation lattices.}
We adapt Pilatte's lattice-point counting
argument~\cite{Pilatte26} to relations among independent
uniform samples from a finite Abelian subgroup and fixed
elements in that subgroup.
Character orthogonality expresses the number of lattice
points in a box as a sum over characters.
We bound the contribution of small-order characters
deterministically.
For the large-order contribution, uniform sampling allows
character orthogonality to supply the required
second-moment estimate.
The factors arising from the fixed elements are bounded
deterministically, so the estimate applies to every fixed
batch.
Comparing the counts at two box sizes establishes the
existence of linearly independent short relations.
A bound in terms of successive minima then implies that
the lattices used here admit integral bases with vector
lengths at most $e^{O(\sqrt n)}$ with high probability.
This bound permits lattice recovery with $O(\sqrt n)$
bits per integer coordinate.

\begin{theorem}[Black-box Abelian group decomposition]
\label{thm:group-decomposition}
Let $G$ be a finite Abelian black-box group with $|G|\leq2^n$, unique encodings and reversible group-operation cost $T_{\mathrm{op}}=\Omega(\sqrt n)$. For all sufficiently large $n$, there is a quantum algorithm with classical post-processing that computes an invariant-factor decomposition of $G$ and corresponding cyclic generators with probability at least $1-O(n^{-3/2})$. The algorithm uses $O(\sqrt n)$ quantum circuits,
each with $\softO(nT_{\mathrm{op}})$ gates.
Their gate counts sum to
$\softO(n^{3/2}T_{\mathrm{op}})$ when each circuit is
counted once.
With each circuit executed $O(n)$ times, the quantum time complexity is
$\softO(n^{5/2}T_{\mathrm{op}})$, and the quantum space
is $O(n)$.
The classical computation uses polynomially many bit
operations and group-operation queries.
\end{theorem}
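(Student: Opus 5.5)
The proof assembles the batching scheme of Section~\ref{sec:results-and-techniques} with a short-basis estimate for the relation lattices and a per-extension correctness argument. A union bound then finishes it.

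\emph{Step 1: a generating list.} Sample $L=n+O(\log n)$ independent uniform elements $g_1,\dots,g_L\in G$. Each new element lies outside a proper subgroup with probability at least $1/2$, and a chain of proper subgroups has length at most $n$. A Chernoff bound therefore gives $\langle g_1,\dots,g_L\rangle=G$ except with probability $O(n^{-2})$. Split the list into the initial block of $d=\Theta(\sqrt n)$ elements and $t=O(\sqrt n)$ batches of size $b=\Theta(\sqrt n)$. Each $H_{k}$ then needs its own circuit, so there are $O(\sqrt n)$ distinct circuits.

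\emph{Step 2: the short-basis lemma.} The main obstacle is showing that every lattice $\mathcal L$ of dimension $d+b$ arising at an extension has, with probability $1-O(n^{-2})$, an integral basis with vectors of norm $T=e^{O(\sqrt n)}$. The randomness comes from the auxiliary elements $h_j$. I will take them to be uniform in $\langle y'_1,\dots,y'_s,w_1,\dots,w_b\rangle$, via uniformly random coefficients modulo the known orders. The batch $w_l$ is treated as fixed, so conditioning on earlier batches is harmless.

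Following Pilatte, I write $N(B)=\#\{v\in\mathcal L:\|v\|_\infty\le B\}$ through orthogonality as
\[
N(B)=\frac{1}{|H|}\sum_{\chi}\prod_j S_B(\chi(h_j))\prod_l S_B(\chi(w_l)),
\]
where $S_B$ is a Dirichlet-type kernel and $H$ is the subgroup spanned by the $h_j$ and the $w_l$.
\begin{itemize}
\item Characters of order at most $e^{c\sqrt n}$ are bounded deterministically, using $|S_B|\le 2B+1$ and counting characters of small order.
\item For the remaining characters, the second moment over the uniform $h_j$ factorises. Each $\mathbb E|S_B(\chi(h_j))|^2$ is $O(B)$ when $\chi$ has large order, and the fixed $w$-factors are bounded by $(2B+1)^b$.
\end{itemize}
Comparing $N(B_1)$ and $N(B_2)$ for $B_2/B_1=e^{O(\sqrt n)}$ and $B_2=e^{O(\sqrt n)}$ yields $d+b$ linearly independent vectors of length $e^{O(\sqrt n)}$, so $\lambda_{d+b}(\mathcal L)\le e^{O(\sqrt n)}$. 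The standard conversion from successive minima to a basis costs a factor $\sqrt{d+b}$, which gives the integral basis with $\log T=O(\sqrt n)$. Markov's inequality then yields failure probability $O(n^{-2})$ per extension, hence $O(n^{-3/2})$ after the union over the $O(\sqrt n)$ extensions.

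\emph{Step 3: correctness of each extension.} Given a decomposition of $H_k$, the circuit is built as follows. Put a uniform superposition on the cyclic coordinates of $H^-$; these registers total $O(n)$ qubits because $\prod\operatorname{ord}(y_i)\le 2^n$. Put a Gaussian superposition with $\ell=\Theta(\sqrt n)$ bits per coordinate on the $d+b$ integer coordinates. Evaluate the combined group element with Fibonacci-style reversible exponentiation~\cite{Ragavan26}, measure it, and apply QFTs to all registers.

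The joint outcome is an exact character label on $H^-$ together with an approximate dual vector for the remaining part, both coming from the same character of $H_{k+1}$. Regev's Lemma~4.4 and Claim~5.1~\cite{Regev25} apply with the bound $T$ from Step~2. Using $d+b+O(\log n)$ samples recovers generators of $\mathcal L$ with probability $1-O(n^{-2})$. Together with the character data and the known relations $\operatorname{ord}(y_i)y_i=0$ and $\operatorname{ord}(y'_i)y'_i=0$, this gives a presentation of $H_{k+1}$. Smith normal form~\cite{KannanBachem79} then produces its invariant factors and generators, and the final stage $H_t$ yields the decomposition of $G$. Regev-type recovery needs only $O(\sqrt n)$ samples; repeating to boost confidence, and handling the character-label reconstruction on $H^-$, keeps the execution count per circuit within $O(n)$.

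\emph{Step 4: resources.} Each circuit uses $O(n+(d+b)\ell)=O(n)$ qubits. Its gate count is $\softO((d+b)\ell\,T_{\mathrm{op}}+(d+b)\ell^2+n\ell+nT_{\mathrm{op}})$, where the last term covers the $H^-$ part, and this is $\softO(nT_{\mathrm{op}})$ because $T_{\mathrm{op}}=\Omega(\sqrt n)$. Summing over the $O(\sqrt n)$ circuits gives $\softO(n^{3/2}T_{\mathrm{op}})$, and with $O(n)$ executions of each the total quantum time is $\softO(n^{5/2}T_{\mathrm{op}})$. The classical work consists of LLL in dimension $O(\sqrt n)$ and Smith normal forms, both polynomial, plus group queries used to form the $h_j$. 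The overall failure probability is $O(n^{-3/2})$.
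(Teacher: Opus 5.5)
\textbf{Verdict: genuine gap.} Your architecture matches the paper's: batches, the $H^{-}/H^{+}$ split at $2^{\sqrt n}$, random auxiliary combinations, Pilatte-type counting, and joint sampling. The gap is in Step~3. It omits the mechanism that makes the joint samples usable, and the sentence ``together with the character data \dots\ this gives a presentation'' does not go through as stated.

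\textbf{The missing step: exact phases.} LLL recovery yields only $\mathcal L$, i.e.\ relations among the $h_j$ and $w_l$. Relations that mix the $H^{-}$ coordinates with the lattice coordinates (e.g.\ $w_1=y_1$) are visible only through the joint characters. Extracting them by exact modular linear algebra needs the exact phases $\mathbf v_t$, but you hold only $\widetilde{\mathbf v}_t$, with error $\delta=e^{-\Theta(\sqrt n)}$. The paper closes this as follows.
\begin{itemize}
\item It keeps the short generators $\mathbf b$ of $\mathcal L$ returned by LLL; by the recovery condition these satisfy $\delta\|\mathbf b\|_2<1/2$.
\item For \emph{every} sample it rounds $\langle\widetilde{\mathbf v}_t,\mathbf b\rangle$ to the integer $\langle\mathbf v_t,\mathbf b\rangle$, then transports these integers through the HNF transformation to obtain $\mathbf v_t\bmod\mathbb Z^m$ exactly.
\item The exact characters are pulled back to the $y'_i$ coordinates via the congruences of Lemma~\ref{lem:character-extension}. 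These are uniquely solvable only on the event $\langle a_1,\dots,a_d\rangle=H^{+}$, which needs $d-s\ge c\log_2 n$ and also guarantees that $\Psi$ is onto $H_{k+1}$. You never state this event.
\item The common kernel of the resulting rows is the relation module of $H_{k+1}$.
\end{itemize}
That kernel is correct only if the $Q$ exact characters generate $\widehat{H_{k+1}}$, whose rank can be $r+s+b$ with $r=\Theta(n)$. This forces $Q=r+m+O(\log n)$ samples, and that is the real reason each circuit runs $O(n)$ times, not confidence boosting. You also need Lemma~\ref{lem:uniform-projection} (the lattice part of a uniform element of $P^\perp$ is uniform on $\mathcal L^*/\mathbb Z^m$) before Regev's recovery lemmas apply to the joint samples.

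\textbf{Smaller errors, all fixable.}
\begin{itemize}
\item \emph{Step~1.} A per-draw success probability of only $\ge1/2$, against a subgroup chain of length up to $n$, needs about $2n$ draws. So Chernoff gives nothing for $L=n+O(\log n)$. Use $r(G)\le n$ with Corollary~\ref{cor:generation-probability} instead, or take $L\approx2n$.
\item \emph{Step~2, coefficients.} The orders of the targets $w_l$ are unknown, so their coefficients cannot be drawn ``modulo the known orders''. Computing those orders by order finding would add $\Theta(n)$ circuits and break the $O(\sqrt n)$-circuit and $\widetilde O(n^{3/2}T_{\mathrm{op}})$ claims. The paper draws all coefficients from $\{0,\dots,2^{2n}-1\}$ and transfers the short-basis and generation bounds through Lemma~\ref{lem:random-combinations}.
\item \emph{Step~2, second moment.} $\mathbb E|S_B(\chi(h))|^2$ is not $O(B)$ when $o=\operatorname{ord}\chi$ lies between your cutoff and $B$. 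It counts the pairs $(u,v)\in[-B,B]^2$ with $o\mid u-v$, roughly $(2B+1)^2/o$. The cutoff must also sit far below $B$ (about $B^{1/3}$) for the deterministic low-order bound to hold, so you cannot push it up to $B$. The correct bound still suffices: the product over $d\ge\sqrt n$ random coordinates gives $e^{-\Omega(d\sqrt n)}$, which beats the at most $2^n$ characters.
\item \emph{Step~2, orthogonality.} Run the character sum over $\widehat S$ for the fixed sampling group $S$, with normalisation $1/|S|$. Do not sum over the characters of the random span $H$.
\end{itemize}
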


\section{Preliminaries}
\label{sec:preliminaries}

Throughout the paper, \(G\) denotes a finite Abelian group of order \(N\), written additively unless the application is naturally multiplicative. Unless specified otherwise, \(n\) is an integer size parameter satisfying \(N\leq2^n\). The rank \(r(G)\) is the minimum number of generators of \(G\). All logarithms are base two unless written as \(\ln\), and \(\|\cdot\|_2\) denotes the Euclidean norm. The notation \(\softO\) suppresses polylogarithmic factors in the size parameters. We write \(\mathbb T=\mathbb R/\mathbb Z\). Bold lowercase letters denote vectors; matrices are written in plain uppercase type.

\subsection{Computational Settings and Resource Conventions}
\label{sec:computational-model}

In the black-box model, group elements have unique \(O(n)\)-bit encodings, and the group operation is supplied as a reversible circuit
\[
  U_+:\ |a\rangle|b\rangle\longmapsto |a\rangle|a+b\rangle.
\]
Its controlled form, inverse, and specializations to a classically fixed operand have gate cost at most \(T_{\mathrm{op}}\) and use \(O(n)\) work qubits, returned to zero.

We count elementary one- and two-qubit gates. The gate count of a circuit is the number of elementary
gates in one execution.
We also report the sum of the circuit gate counts,
counting each circuit once.
Quantum time complexity is the total number of elementary
gates applied over all executions, including repetitions.
Quantum space is the maximum number of simultaneously
used qubits.Classical costs are measured separately in bit operations and group-operation queries.

\subsection{Finite Abelian Groups, Characters, and Lattices}
\label{sec:groups-characters-lattices}

An invariant-factor decomposition of \(G\) is an isomorphism
$
  G\cong\prod_{i=1}^r\mathbb Z_{n_i},
$
where the invariant factors satisfy $2\leq n_1\mid n_2\mid\cdots\mid n_r$. The output also includes generators realizing this isomorphism. Trivial factors are omitted, and the trivial group is represented by an empty product. If \(a_1,\ldots,a_t\) generate a subgroup \(H\leq G\), their relation lattice
\[
  \Lambda=\left\{\mathbf z\in\mathbb Z^t:
                  \sum_{i=1}^t z_i a_i=0_G\right\}
\]
gives the presentation \(H\cong\mathbb Z^t/\Lambda\).

Lattice basis matrices are written with basis vectors as columns. If \(A\) is a basis matrix for \(\Lambda\), Smith normal form computes unimodular matrices \(U,V\) such that \(UAV\) is diagonal, with positive entries in divisibility order. The nontrivial diagonal entries are the invariant factors of \(H\), and the corresponding columns of \(U^{-1}\) express cyclic generators in the original generators \(a_i\). Hermite normal form extracts a basis from an integer generating set. Both forms and the required transformations can be computed in polynomial time~\cite{KannanBachem79}.

\begin{theorem}[Finite Abelian Generation~\cite{ref3}]\label{thm:random-generation}
Let \(G\) be a finite Abelian group with minimal number of generators \(r\). For each prime \(p\) dividing \(|G|\), let \(r_p\) denote the \(p\)-rank of \(G\). If \(q=r+j\) elements, with \(j\geq1\), are chosen uniformly and independently from \(G\), the probability \(P_j\) that they generate \(G\) is
\[
  P_j=\prod_{p\mid |G|}\prod_{i=1}^{r_p}
      \left(1-p^{-(r-r_p+j+i)}\right).
\]
Moreover,
\[
  P_j>\prod_{k=j+1}^{\infty}\zeta(k)^{-1}.
\]
\end{theorem}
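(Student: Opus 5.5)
The formula follows from reducing to Sylow subgroups and then to vector spaces over $\mathbb F_p$. The proof then reduces to counting spanning tuples in $\mathbb F_p^{r_p}$, followed by comparison with an Euler product.

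\textbf{Step 1: reduction to Sylow subgroups.} Write $G=\bigoplus_{p\mid |G|}G_p$. A uniform element of $G$ has independent uniform components in each $G_p$, because the projection map is an isomorphism onto the product. A list $x_1,\ldots,x_q$ generates $G$ if and only if, for every $p$, the projections $\pi_p(x_1),\ldots,\pi_p(x_q)$ generate $G_p$.
\begin{itemize}
\item Suppose the list generates $G$. Then its images under the surjection $\pi_p$ generate $G_p$.
\item Conversely, suppose the projections generate each $G_p$. Multiplying by $|G|/|G_p|$, which is invertible on $G_p$ and kills the other components, shows that $\langle x_1,\ldots,x_q\rangle\supseteq G_p$ for every $p$.
\end{itemize}
By independence across primes, $P_j$ factors as a product over $p$ of the probability that $q$ uniform elements of $G_p$ generate $G_p$.

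\textbf{Step 2: the $p$-group case via the Frattini quotient.} For a finite Abelian $p$-group, a subset generates $G_p$ if and only if its image generates $G_p/pG_p\cong\mathbb F_p^{r_p}$. This is Nakayama's lemma, or equivalently the statement that $pG_p$ is the Frattini subgroup: if $S+pG_p=G_p$ with $S=\langle$subset$\rangle$, then iterating gives $G_p=S+p^kG_p=S$ for large $k$.

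Uniform elements of $G_p$ map to independent uniform vectors in $\mathbb F_p^{r_p}$. The probability that $q\ge r_p$ uniform vectors span $\mathbb F_p^{r_p}$ is obtained by counting surjective linear maps $\mathbb F_p^q\to\mathbb F_p^{r_p}$, equivalently $r_p\times q$ matrices of full row rank. Choosing the rows successively outside the span of the previous ones gives
\[
\prod_{k=0}^{r_p-1}\bigl(1-p^{k-q}\bigr).
\]
Substituting $i=r_p-k$ and $q=r+j$ turns the exponent $k-q$ into $-(r-r_p+j+i)$, which yields the stated formula for $P_j$.

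\textbf{Step 3: the lower bound.} Since $r=\max_p r_p$, for fixed $p$ the exponents $m_i=r-r_p+j+i$ with $i=1,\ldots,r_p$ are distinct integers, all at least $j+1$. Every factor $1-p^{-m}$ lies in $(0,1)$. Therefore replacing the product over $i$ by $\prod_{m\ge j+1}(1-p^{-m})$ only adds further factors less than $1$, which cannot increase the value. This gives
\[
P_j\ \ge\ \prod_{p\mid |G|}\ \prod_{m\ge j+1}\bigl(1-p^{-m}\bigr).
\]
Next we extend the product to all primes. Including the infinitely many primes not dividing $|G|$ multiplies by factors in $(0,1)$, which strictly decreases the value.

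The resulting double product converges absolutely, since $j\ge1$ forces $m\ge2$ and hence $\sum_p\sum_{m\ge2}p^{-m}<\infty$. We may therefore swap the order of the products. The Euler product $\prod_p(1-p^{-m})=\zeta(m)^{-1}$ for $m\ge2$ then gives $P_j>\prod_{m\ge j+1}\zeta(m)^{-1}$.

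The main point needing care is Step 2, namely that generation of a $p$-group is detected modulo $pG_p$. The rest consists of the standard spanning count and justifying the rearrangement of the infinite product, where absolute convergence must be checked.
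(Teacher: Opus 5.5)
Your proof is correct. The paper does not prove this theorem at all; it quotes it from the cited reference and uses it only through the lower bound in Corollary~\ref{cor:generation-probability}. So there is no in-paper argument to compare against, and yours is the standard self-contained derivation:
\begin{itemize}
\item the Sylow decomposition makes the generation events independent across primes;
\item the Frattini-type criterion (a subset generates $G_p$ if and only if it spans $G_p/pG_p\cong\mathbb F_p^{r_p}$) reduces each factor to counting $r_p\times q$ matrices of full rank;
\item the inequality comes from distinct exponents $\ge j+1$ together with the Euler product.
\end{itemize}

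Your absolute-convergence justification for interchanging the products over $p$ and $m$ is the step the paper uses implicitly, in the reverse direction. Its corollary rewrites $\prod_{k\ge j+1}\zeta(k)^{-1}$ as $\prod_p\prod_{k\ge j+1}(1-p^{-k})$ without comment, so your argument also covers that gap.

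One minor refinement: you only need $r\ge r_p$, not $r=\max_p r_p$. That inequality already follows from your Step~2, because any generating set of $G$ must map onto a spanning set of $\mathbb F_p^{r_p}$.
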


\begin{corollary}\label{cor:generation-probability}
Let \(G\) be a finite Abelian group with minimal number of generators \(r\). If \(q=r+j\) elements are chosen uniformly and independently from \(G\), where \(j\geq1\), then
\[
  P_j>1-2^{-(j-1)}.
\]
Consequently, \(1-P_j<2^{-(j-1)}=O(2^{-j})\).
\end{corollary}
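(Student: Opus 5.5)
The statement to prove is Corollary~\ref{cor:generation-probability}. The plan is to derive it directly from the lower bound $P_j>\prod_{k=j+1}^{\infty}\zeta(k)^{-1}$ of Theorem~\ref{thm:random-generation}. The whole task is then an elementary estimate showing $\prod_{k\ge j+1}\zeta(k)^{-1}\ge 1-2^{-(j-1)}$.

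\textbf{Step 1 (bound each factor).} For $k\ge2$ we have
\[
\zeta(k)-1=\sum_{m\ge2}m^{-k}\le 2^{-k}+\int_2^\infty x^{-k}\,dx=2^{-k}+\frac{2^{1-k}}{k-1}\le 2^{-k}\Bigl(1+\tfrac{2}{k-1}\Bigr).
\]
For $k\ge3$ this gives $\zeta(k)-1\le 2^{1-k}$. For $k=2$ we check directly that $\zeta(2)-1\approx0.645\le 2^{-1}\cdot 2=1$, so $\zeta(k)-1\le 2^{1-k}$ holds for all $k\ge2$. Since $\zeta(k)^{-1}=1-(1-\zeta(k)^{-1})$ and $1-\zeta(k)^{-1}\le\zeta(k)-1$, each factor satisfies
\[
\zeta(k)^{-1}\ge 1-\epsilon_k,\qquad \epsilon_k\le 2^{1-k}.
\]

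\textbf{Step 2 (Weierstrass product inequality).} For $\epsilon_k\in[0,1]$ we have $\prod(1-\epsilon_k)\ge1-\sum\epsilon_k$. The infinite product converges because $\sum_k\epsilon_k<\infty$. Hence
\[
P_j>\prod_{k\ge j+1}\zeta(k)^{-1}\ge 1-\sum_{k\ge j+1}2^{1-k}=1-2^{1-j}=1-2^{-(j-1)}.
\]
The consequence $1-P_j<2^{-(j-1)}=O(2^{-j})$ follows immediately.

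\textbf{Main obstacle.} The only delicate point is the per-factor constant, and it matters most at the smallest index. At $j=1$ the target is merely $P_1>0$, which is trivially true, so the claim is vacuous there. Any slack in bounding $\zeta(j+1)-1$ is therefore harmless for $j\ge2$, provided the tail sum stays at most $2^{1-j}$. If the crude bound at $k=3$ looked tight, I would instead use $\zeta(k)-1\le 2^{-k}\cdot\frac{k+1}{k-1}$ together with the numerical value $\zeta(3)-1\approx0.202\le 1/4$.
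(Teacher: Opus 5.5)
Your argument is correct after a one-line repair (noted below), and it takes a different route from the paper. The paper does not bound the factors $\zeta(k)^{-1}$ one at a time. Instead it expands them by the Euler product, $\prod_{k\ge j+1}\zeta(k)^{-1}=\prod_p\prod_{k\ge j+1}(1-p^{-k})$, and applies the Weierstrass inequality over all pairs $(p,k)$ to get $P_j>1-\sum_p p^{-j}/(p-1)$. It then isolates $p=2$, which contributes exactly $2^{-j}$. The odd primes contribute at most $3^{-(j-1)}\sum_{p\ge3}\frac{1}{p(p-1)}<3^{-(j-1)}(1-\ln2)<3^{-j}$. This gives the sharper bound $P_j>1-2^{-j}-3^{-j}$, uniformly for $j\ge1$.

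Your integral comparison needs no prime arithmetic and no numerical constant such as $1-\ln2$, so it is more elementary. However, it uses up the whole factor of $2$: your tail sum is exactly $2^{1-j}$, leaving no slack. It also breaks down at the smallest index.

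Concretely, the claim in Step 1 that $\zeta(k)-1\le2^{1-k}$ for all $k\ge2$ is false at $k=2$. There $\zeta(2)-1\approx0.645>2^{1-2}=1/2$, and your check compares $\zeta(2)-1$ with $1$ instead of $1/2$. Either of two fixes works:
\begin{itemize}
\item Note that the quantity you actually use is $\epsilon_2=1-\zeta(2)^{-1}=1-6/\pi^2\approx0.392<1/2$, so the bound $\epsilon_k\le2^{1-k}$ does hold at $k=2$.
\item Observe, as you do at the end, that $k=2$ enters only when $j=1$. In that case the claim is $P_1>0$, which follows from Theorem~\ref{thm:random-generation} because $\prod_{k\ge2}\zeta(k)^{-1}\ge0$.
\end{itemize}
With Step 1 restricted to $k\ge3$ and Step 2 applied for $j\ge2$, the rest of your argument goes through as written.
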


\begin{proof}
Theorem~\ref{thm:random-generation} and Euler's product give
\[
  P_j>
  \prod_p\prod_{k=j+1}^{\infty}(1-p^{-k})
  \geq
  1-\sum_p\sum_{k=j+1}^{\infty}p^{-k}
  =1-\sum_p\frac{p^{-j}}{p-1}.
\]
The term for \(p=2\) is \(2^{-j}\). For the odd primes,
\[
  \sum_{p\geq3}\frac{p^{-j}}{p-1}
  =\sum_{p\geq3}p^{-(j-1)}\frac{1}{p(p-1)}
  \leq3^{-(j-1)}\sum_{p\geq3}\frac{1}{p(p-1)}
  <3^{-(j-1)}(1-\ln2)<3^{-j}.
\]
Hence \(P_j>1-2^{-(j-1)}\).
\end{proof}

\begin{definition}[Group Characters and the Dual Group]
\label{def:group-characters}
A character of a finite Abelian group \(G\) is a homomorphism \(\chi:G\to\mathbb C^\times\). The characters form an Abelian group under pointwise multiplication, called the dual group and denoted by \(\widehat G\). Its identity is the principal character \(\chi_0\), which satisfies \(\chi_0(g)=1\) for every \(g\in G\).
\end{definition}

Every character value has modulus one. For a subgroup \(K\leq G\), its annihilator is
\[
  K^\perp=\{\chi\in\widehat G:\chi(k)=1\text{ for all }k\in K\}.
\]
The annihilator is canonically isomorphic to \(\widehat{G/K}\), and
\[
  K=\bigcap_{\chi\in K^\perp}\ker\chi.
\]
We use character orthogonality in the form
\[
  \mathbf1_K(g)=\frac1{|K^\perp|}\sum_{\chi\in K^\perp}\chi(g).
\]
Every character of a subgroup \(H\leq G\) extends to a character of \(G\); equivalently, the restriction map \(\widehat G\to\widehat H\) is surjective~\cite{Terras99}. Under the identification \(t\mapsto e^{2\pi i t}\), characters may also be written as additive phase homomorphisms with values in \(\mathbb T\).

Characters of \(\mathbb Z^m\) are parametrized by \(\mathbf v\in\mathbb T^m\) through \(\mathbf z\mapsto e^{2\pi i\langle\mathbf v,\mathbf z\rangle}\). Hence a character of \(A\times\mathbb Z^m\), for a finite Abelian group \(A\), may be represented by a pair \((\eta,\mathbf v)\in\widehat A\times\mathbb T^m\).

A full-rank lattice \(\Lambda\subseteq\mathbb R^m\) is generated by \(m\) linearly independent vectors. Its determinant is the covolume of a fundamental parallelepiped. For \(\Lambda\subseteq\mathbb Z^m\), this is the index \([\mathbb Z^m:\Lambda]\). The dual lattice is
\[
  \Lambda^*=\{\mathbf y\in\mathbb R^m:
       \langle\mathbf y,\mathbf z\rangle\in\mathbb Z
       \text{ for all }\mathbf z\in\Lambda\}.
\]
The group \(\Lambda^*/\mathbb Z^m\) is finite of order \(\det\Lambda\) and is the character group of \(\mathbb Z^m/\Lambda\) through the pairing \(e^{2\pi i\langle\mathbf y,\mathbf z\rangle}\). Pairings of a class in \(\mathbb T^m\) with an integer vector are understood modulo one. We use the LLL algorithm~\cite{LLL82} for lattice reduction and geometry-of-numbers estimates~\cite{Cassels59} for short-vector and basis bounds.

\subsection{Relation Lattices and Dual Sampling}
\label{sec:relation-lattices}

For an evaluation sequence \(\mathcal E=(g_1,\ldots,g_m)\), define the homomorphism $\Phi:\mathbb Z^m\longrightarrow G$ by
$
  \Phi(\mathbf z)=\sum_{i=1}^m z_i g_i.
$
Write $H=\operatorname{im}\Phi$ for its image.
Its relation lattice is \(\mathcal L=\ker\Phi\), and the First Isomorphism Theorem gives \(H\cong\mathbb Z^m/\mathcal L\). In particular, \(\mathcal L\) has full rank and \(\det\mathcal L=|H|\). For \(R>0\), write
\[
  \rho_R(\mathbf z)=\exp(-\pi\|\mathbf z\|_2^2/R^2).
\]
Distances between classes in \(\mathbb T^m\) are measured by
$\operatorname{dist}_{\mathbb T^m}(\mathbf x,\mathbf y)
  =\min_{\mathbf a\in\mathbb Z^m}\|\mathbf x-\mathbf y-\mathbf a\|_2.
$
For probability distributions \(\mu,\nu\) on a finite set, their total variation distance is
\[
  \operatorname{TV}(\mu,\nu)=\frac12\sum_x|\mu(x)-\nu(x)|.
\]
Total variation distance cannot increase after applying the same deterministic map to both distributions. We also use the standard product bound

$$
  \operatorname{TV}\!\left(
     \bigotimes_{i=1}^t\mu_i,
     \bigotimes_{i=1}^t\nu_i
  \right)
  \leq
  \sum_{i=1}^t\operatorname{TV}(\mu_i,\nu_i).
$$

\begin{lemma}[Dual Sampling Properties~{\cite[Prop.~A.6, Claim~A.7]{Regev25}}]
\label{lem:dual-sampling}
Let \(\mathcal L\subseteq\mathbb Z^m\) be a full-rank lattice. Let \(R>\sqrt{2m}\), and let \(D\) be a power of two satisfying \(2\sqrt m\,R\leq D<4\sqrt m\,R\). Consider Gaussian Fourier sampling on \(\mathbb Z_D^m\) with amplitudes proportional to \(\rho_R\), and let \(\widetilde{\mathbf v}=\mathbf w/D\) be the scaled Fourier outcome. If the state-preparation and Fourier-transform errors are \(O(2^{-m})\), then there is a joint distribution of \(\widetilde{\mathbf v}\) and a class \(\mathbf v\in\mathcal L^*/\mathbb Z^m\) such that, for an absolute constant \(c_0>0\),
\[
  \operatorname{dist}_{\mathbb T^m}
  (\widetilde{\mathbf v},\mathbf v)
  \leq c_0\sqrt m/R
\]
except with probability \(O(2^{-m})\). The marginal law of \(\mathbf v\) is \(O(2^{-m})\)-close to uniform on \(\mathcal L^*/\mathbb Z^m\) in total variation distance.
\end{lemma}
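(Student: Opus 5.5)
This is Regev's dual sampling lemma (Prop.~A.6 and Claim~A.7 of \cite{Regev25}). I would prove it by computing the ideal post-measurement state exactly and then absorbing all approximations into total variation distance.

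\textbf{Step 1: the ideal state after evaluation.} Start from the exact truncated Gaussian state on $\{-D/2,\ldots,D/2-1\}^m$ and apply $\Phi$. Measuring the image register gives a coset $\mathbf c+\mathcal L$. The residual state is proportional to
\[
\sum_{\mathbf z\in(\mathbf c+\mathcal L)\cap\text{box}}\rho_R(\mathbf z)|\mathbf z\rangle .
\]
Since $D\geq 2\sqrt m\,R$, the Gaussian tail outside the box has mass $2^{-\Omega(m)}$. This uses the standard bound $\rho_R(\mathbb Z^m\setminus \sqrt m R\,B)\leq 2^{-\Omega(m)}\rho_R(\mathbb Z^m)$, here applied to shifted lattices. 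We can therefore replace the truncated sum by the full sum over the coset, viewed modulo $D$, at cost $O(2^{-m})$ in $\ell_2$ distance.

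\textbf{Step 2: Fourier transform via Poisson summation.} The Fourier transform over $\mathbb Z_D^m$ of $\sum_{\mathbf z\in\mathbf c+\mathcal L}\rho_R(\mathbf z)|\mathbf z\bmod D\rangle$ has amplitude at $\mathbf w$ proportional to
\[
\sum_{\mathbf y\in\mathcal L^*}e^{2\pi i\langle \mathbf c,\mathbf y\rangle}\,\rho_{1/R}\bigl(\mathbf w/D-\mathbf y\bigr)\,\rho\text{-weights},
\]
up to normalization. We use periodization and the identity $\widehat{\rho_R}=R^m\rho_{1/R}$. The amplitude is therefore a sum of Gaussians of width $1/R$ centered at the points of $\mathcal L^*$. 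These points are reduced modulo $\mathbb Z^m$, because $\mathbb Z^m\subseteq\mathcal L^*$ and the grid $\frac1D\mathbb Z^m$ is compatible with this reduction. The peaks are indexed by classes $\mathbf v\in\mathcal L^*/\mathbb Z^m$ and carry phases $e^{2\pi i\langle\mathbf c,\mathbf y\rangle}$.

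\textbf{Step 3: the coupling.} Define $\mathbf v$ to be the class whose peak contains $\widetilde{\mathbf v}$, that is, the class at toroidal distance at most $c_0\sqrt m/R$ from it, if one exists. Otherwise set $\mathbf v$ arbitrarily, and count that outcome as failure. A Gaussian tail bound shows that the mass of $\rho_{1/R}$ outside radius $c_0\sqrt m/R$ is $2^{-\Omega(m)}$, which controls the failure probability.

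\textbf{Step 4: uniformity of $\mathbf v$.} The measured coset $\mathbf c$ is random. Averaging over $\mathbf c\in\mathbb Z^m/\mathcal L$ with its induced weights, character orthogonality kills the cross terms between distinct classes. This leaves each class $\mathbf v$ with total weight proportional to $\sum_{\mathbf w}\rho_{1/R}(\mathbf w/D-\mathbf v)^2$. The sum is essentially independent of $\mathbf v$ up to a relative error $O(2^{-m})$. To see this, note that $D\geq 2\sqrt mR$ makes the grid spacing $1/D$ small compared with the width $1/R$, and $R>\sqrt{2m}$ makes the discrete Gaussian sum shift-invariant up to $2^{-\Omega(m)}$, by the smoothing parameter of $\mathbb Z^m$. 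Hence the marginal of $\mathbf v$ is $O(2^{-m})$-close to uniform on $\mathcal L^*/\mathbb Z^m$.

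\textbf{Step 5: implementation errors.} State-preparation and Fourier-transform errors of size $O(2^{-m})$ change the output distribution by $O(2^{-m})$ in total variation distance. The coupling statement survives this perturbation by the data-processing property of total variation.

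\textbf{Main obstacle.} The delicate step is Step 4. We need uniformity of the class marginal together with the $\sqrt m/R$ peak localization, and the peaks from different classes can overlap on the torus when $\mathcal L^*$ has short vectors modulo $\mathbb Z^m$. I would handle this as Regev does. Rather than claiming the peaks are disjoint, define the coupling by sampling $\mathbf v$ first, uniformly, and then drawing $\widetilde{\mathbf v}$ from the corresponding single-peak distribution. Then show that the resulting mixture is $O(2^{-m})$-close to the true outcome distribution. This step requires bounding the interference terms using the phases $e^{2\pi i\langle\mathbf c,\mathbf y\rangle}$ averaged over the coset.
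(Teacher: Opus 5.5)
Your ingredients are the same as in the paper's own argument. The paper cites \cite{Regev25} for this lemma, but reproves it in the proof of Lemma~\ref{lem:joint-dual-sampling}, which gives this statement with $c_0=1/\sqrt2$ when $H^{-}$ is trivial. Poisson summation turns each dual class into a single peak with law $p_{\mathbf v}$, orthogonality of the characters $\mathbf c\mapsto e^{2\pi i\langle\mathbf c,\mathbf v\rangle}$ of $\mathbb Z^m/\mathcal L$ separates the classes, and \cite[Claim~A.7]{Regev25} gives the tail.

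\textbf{Step~1 fails as written.} The truncation error is not $O(2^{-m})$ coset by coset. The shifted-lattice tail bound controls the mass of $\mathbf c+\mathcal L$ outside the ball of radius $\sqrt m\,R$ relative to $\rho_R(\mathcal L)$, not relative to the coset's own mass. For example, take $\mathcal L=(D+1)\mathbb Z^m$ and $\mathbf c=(-D/2,0,\ldots,0)$. The box meets this coset only at $\mathbf c$. The periodized vector also puts weight $\rho_R(\mathbf c+(D+1)\mathbf e_1)=e^{-\pi(D+1)/R^2}\rho_R(\mathbf c)$ on $\mathbf c+\mathbf e_1$. Since $(D+1)/R^2$ is bounded, the two normalized states are at distance bounded below by an absolute constant.

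\textbf{The fix is an aggregate estimate.} Let $\phi_{\mathbf c}$ be the unnormalized truncated coset vector, $\widetilde\phi_{\mathbf c}$ its periodization, and $Z=\sum_{\mathbf c}\|\phi_{\mathbf c}\|_2^2$. The coset probabilities cancel the per-coset normalizations, so $\Pr[\mathbf w]=Z^{-1}\sum_{\mathbf c}|\langle\mathbf w|\mathrm{QFT}|\phi_{\mathbf c}\rangle|^2$ exactly. This is an unweighted sum over all of $\mathbb Z^m/\mathcal L$. Moreover, $\sum_{\mathbf c}\|\phi_{\mathbf c}-\widetilde\phi_{\mathbf c}\|_2^2\leq\sum_{\mathbf t}\bigl(\sum_{\mathbf r\neq\mathbf 0}\rho_R(\mathbf t+D\mathbf r)\bigr)^2$, which is the unmodulated truncation error. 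Hence replacing $\phi_{\mathbf c}$ by $\widetilde\phi_{\mathbf c}$ in that sum, including cosets that miss the box, costs $O(2^{-m})$ in total variation by Cauchy--Schwarz.

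\textbf{The interference obstacle is not real.} After this replacement, orthogonality removes the cross terms exactly, not approximately. The ``interference terms'' of your last paragraph need no bounding, and the nearest-peak coupling of Step~3 is never needed. What breaks the exactness is normalizing each coset state separately, as your Step~1 does: that leaves non-uniform weights $\|\phi_{\mathbf c}\|_2^2/\|\widetilde\phi_{\mathbf c}\|_2^2$ in the average.

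\textbf{The paper's ordering avoids this bookkeeping.} It applies orthogonality before any Fourier analysis. Substitute $\mathbf 1_{\mathcal L}(\mathbf z-\mathbf z')=|\mathcal L^*/\mathbb Z^m|^{-1}\sum_{\mathbf v}e^{-2\pi i\langle\mathbf v,\mathbf z-\mathbf z'\rangle}$ into the reduced density matrix of the exact truncated state. This gives exactly $|\mathcal L^*/\mathbb Z^m|^{-1}\sum_{\mathbf v}|\psi_{\mathbf v}\rangle\langle\psi_{\mathbf v}|$, where $|\psi_{\mathbf v}\rangle$ is the normalized $\mathbf v$-modulated truncated Gaussian.
\begin{itemize}
\item This identity is already the coupling, with $\mathbf v$ exactly uniform, so your Step~4 uniformity estimate is also unnecessary.
\item All remaining approximations are estimated for a single pure state: periodization, Poisson summation to one peak, comparison with $p_{\mathbf v}$, and the AQFT and state-preparation errors. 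They hold uniformly in $\mathbf v$, because modulation does not change absolute values.
\end{itemize}

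Your coset-first route is the Fourier-dual form of the same identity. With the aggregate estimate in place of Step~1, it also proves the lemma. The cost is that the error terms must be tracked through the sum over cosets rather than state by state.
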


The parameter choice used in the group-decomposition algorithm is given in Section~\ref{sec:parameters-and-evaluation}. The recovery of \(\mathcal L\) from independent noisy dual samples is treated in Section~\ref{sec:lattice-recovery}.

\section{Black-Box Finite Abelian Group Decomposition}\label{sec:group-decomposition}
\subsection{Overveiw of the algorithm}\label{sec:subgroup-extensions}

The algorithm constructs a sequence of decomposed subgroups by adjoining batches of sampled elements. Fix a constant $c\geq2$. The candidate list contains $L=n+\lceil c\log_2n\rceil$ elements, of which the first $d=\lceil\sqrt n\rceil+\lceil c\log_2n\rceil$ are used for initialization. This $d$ is the parameter denoted by $d_0$ in Section~\ref{sec:results-and-techniques}. The remaining elements are processed in batches of size $b=\lceil\sqrt n\rceil$. At each extension, the order threshold $B=2^{\sqrt n}$ separates the small- and large-order cyclic factors.

\paragraph{Initialization.}
Draw a list $\mathcal S$ of $L$ independent uniform elements of $G$. Let $g_1,\ldots,g_{d}$ be its first $d$ elements and set $H_0=\langle g_1,\ldots,g_{d}\rangle$. Define the homomorphism $\Phi_0:\mathbb Z^{d}\longrightarrow G$ by
$
\Phi_0(\mathbf z)=\sum_{i=1}^{d}z_i g_i.
$
Apply Gaussian Fourier sampling to $\Phi_0$ and collect $d+\lceil c\log_2 n\rceil$ noisy samples from $(\ker\Phi_0)^*/\mathbb Z^{d}$. Using the short-basis bound in Corollary~\ref{cor:fixed-elements-short-basis}, the classical lattice-recovery procedure reconstructs $\ker\Phi_0$ with high probability. Smith normal form then gives an invariant-factor decomposition of $H_0\cong\mathbb Z^{d}/\ker\Phi_0$, together with cyclic generators and their orders. These data initialize the first subgroup extension.

\paragraph{Cyclic-factor partition.}
Process the remaining candidates in batches of size $b$, padding the last batch with $0_G$ if necessary. Suppose that a decomposition of $H_k$ is known and that $w_1,\ldots,w_b$ is the next batch. The goal is to compute a decomposition of $H_{k+1}=\langle H_k,w_1,\ldots,w_b\rangle$. Partition the known cyclic factors as
$
H_k=H^{-}\oplus H^{+},
$
where
$
H^{-}=\bigoplus_{i=1}^{r}\langle y_i\rangle
$
is formed by the factors with $\operatorname{ord}(y_i)=N_i\le B$. The factors with $\operatorname{ord}(y'_i)=M_i>B$ form
$
H^{+}=\bigoplus_{i=1}^{s}\langle y'_i\rangle.
$ The generators and orders in both components are known. Here $r=r(H^{-})$ and $s=r(H^{+})$. All auxiliary notation in this description refers to the fixed extension $k$; only $H_k$ and $H_{k+1}$ retain the iteration index. Since $\prod_i M_i\le |H_k|\le2^n$, the number of large-order factors satisfies $s\le\sqrt n$. We retain the cyclic coordinates of $H^{-}$ and use the following lattice construction for $H^{+}$ and the new targets.

\paragraph{Local relation lattice.}
Set $m=d+b$. Independently sample coefficients $\alpha_{j,i}$ and $\beta_{j,l}$ uniformly from $[0,2^{2n})\cap\mathbb Z$. For each $1\le j\le d$, first form
$
a_j=\sum_{i=1}^{s}\alpha_{j,i}y'_i,
$
and then add the target contributions to obtain
$
h_j=a_j+\sum_{l=1}^b\beta_{j,l}w_l.
$
Retain these coefficients for the later change of coordinates. By Lemma~\ref{lem:random-combinations} and Corollary~\ref{cor:generation-probability}, the $a_j$ generate $H^{+}$ with high probability, since $s\le\sqrt n$ and $d=\lceil\sqrt n\rceil+\lceil c\log_2 n\rceil$. On this event, \(\langle h_1,\ldots,h_d,w_1,\ldots,w_b\rangle =\langle H^{+},w_1,\ldots,w_b\rangle.\) The relations among the auxiliary elements and the targets form the local lattice
\[
\mathcal L
=\left\{(\mathbf z,\mathbf u)\in\mathbb Z^d\times\mathbb Z^b:
       \sum_{j=1}^d z_jh_j+\sum_{l=1}^b u_lw_l=0_G\right\},
\]
of dimension $m=O(\sqrt n)$.

For each fixed current subgroup presentation and target batch, Lemma~\ref{lem:random-combinations} also shows that the joint law of the $h_j$ is exponentially close to independent uniform sampling from $\langle H^{+},w_1,\ldots,w_b\rangle$. Combined with Corollary~\ref{cor:fixed-elements-short-basis} for $d$ random elements and $b$ fixed targets, this estimate gives an integral basis of $\mathcal L$ with maximum norm less than $e^{33m}$ with high probability. This supplies the short-basis bound needed for lattice recovery.

\paragraph{Joint Fourier sampling.}
To recover relations involving $H^{-}$ as well, combine its known cyclic coordinates with the integer coordinates above through the map $\Psi:H^{-}\times\mathbb Z^m\longrightarrow H_{k+1}$ defined by
$
\Psi(a,\mathbf z,\mathbf u)
=a+\sum_{j=1}^d z_jh_j+\sum_{l=1}^b u_lw_l.
$
Let $P=\ker\Psi$. When the auxiliary elements satisfy the generation condition above, $\Psi$ is surjective and
$(H^{-}\times\mathbb Z^m)/P\cong H_{k+1}$. The subgroup $P$ records all relations among these coordinates, with $\mathcal L$ corresponding to those whose $H^{-}$ coordinate is zero.

Each execution prepares a uniform superposition over the cyclic coordinates of $H^{-}$ and a discrete Gaussian over the $m$ integer coordinates, with width $R=e^{Cm}$ for fixed $C\ge35$. After evaluating $\Psi$ and discarding its output, apply exact Fourier transforms to the cyclic coordinates and an approximate Fourier transform to the integer grid. Collect
$
Q=r+m+\lceil c\log_2 n\rceil
$
independent joint samples using the same evaluation elements. Lemma~\ref{lem:joint-dual-sampling}, applied with lattice dimension $m$, couples these outcomes to independent uniform characters of $P^\perp$. Outside the sampling failure event, the cyclic labels agree with those of the ideal characters and the lattice phases have error $O(\sqrt m/R)$.

\paragraph{Lattice recovery and exact phases.}
By Lemma~\ref{lem:uniform-projection}, the ideal lattice components of the joint samples are independent uniform elements of $\mathcal L^*/\mathbb Z^m$. Apply the classical lattice-recovery procedure to the first
$
q=m+\lceil c\log_2 n\rceil
$
measured lattice phases. Together with the short-basis bound above, Lemmas~\ref{lem:lattice-recovery} and~\ref{lem:sublattice-extraction} give recovery of $\mathcal L$ with high probability. The augmented lattice used by LLL has dimension $m+q=O(\sqrt n)$.

Retain the projected short generators returned by LLL and the integer transformation to an HNF basis of $\mathcal L$. For each of the $Q$ measured lattice-phase vectors, round its pairings with these generators to integers. On the sampling and recovery events, these are the exact pairings with a representative of the ideal dual class near the measured vector. The reconstruction procedure of Section~\ref{sec:exact-character-recovery} transports the pairings through the recorded integer transformations and recovers the exact lattice phases. Together with the cyclic labels, these phases give the $Q$ joint characters used in the subgroup update.

\paragraph{Smith normal form update.}
Compute the determinant of the recovered lattice: \(\det\mathcal L =\bigl|\langle h_1,\ldots,h_d,w_1,\ldots,w_b\rangle\bigr|.\) Each target belongs to this subgroup, so its order divides $\det\mathcal L$. The old generators and new targets can therefore be evaluated on the finite coordinate group
\[
F=
\left(\prod_{i=1}^{r}\mathbb Z_{N_i}\right)
\times
\left(\prod_{i=1}^{s}\mathbb Z_{M_i}\right)
\times
\mathbb Z_{\det\mathcal L}^{\,b}.
\]
The exact joint samples determine character values on $y_i,h_j,w_l$. Using the recorded coefficients $\alpha_{j,i}$ and $\beta_{j,l}$, solve the character-extension congruences in Equation~\eqref{eq:character-pullback} to recover their values on the original generators $y'_i$. The generation of $H^{+}$ by the $a_j$ makes these values unique, as shown in Lemma~\ref{lem:annihilator-isomorphism}.

This expresses the $Q$ characters in the coordinates of $F$. Their common kernel equals the kernel of the evaluation map to $H_{k+1}$ with high probability. Lifting generators of this kernel to integer coordinates and adjoining the coordinate-period relations gives an integer presentation of $H_{k+1}$. Smith normal form yields its invariant factors and cyclic generators, which are used in the next extension.

After all batches have been processed, the resulting decomposition is that of $\langle\mathcal S\rangle$. Corollary~\ref{cor:generation-probability} shows that this subgroup equals $G$ with high probability. The detailed reconstruction, size checks, and proof of Theorem~\ref{thm:group-decomposition} are given in Section~\ref{sec:main-theorem-proof}.

\subsection{Quantum Resources and Algorithm Comparison}
\label{sec:decomposition-resources}

Table~\ref{tab:decomposition-resources} compares the quantum
resources of the Cheung--Mosca algorithm~\cite{Cheung01},
the extended algorithm of Bermejo-Vega, Lin, and Van den
Nest~\cite{BermejoVegaLinVanDenNest2014}, and our algorithm.
It lists circuit counts, gate counts, execution counts,
total quantum time, and qubits.

\begin{table}[H]
  \centering
  \small
  \renewcommand{\arraystretch}{1.2}
  \caption{Quantum resource upper bounds for
    $T_{\mathrm{op}}=\Omega(\sqrt n)$.}
  \label{tab:decomposition-resources}
  \resizebox{\textwidth}{!}{%
  \begin{tabular}{@{}lccc@{}}
    \toprule
    Resource & Cheung--Mosca & Extended CM & This work \\
    \midrule
    Number of circuits
      & $O(n)$ & $O(n)$ & $O(\sqrt n)$ \\
    Maximum gates per circuit
      & $\softO(n^2T_{\mathrm{op}})$
      & $\softO(n^2T_{\mathrm{op}})$
      & $\softO(nT_{\mathrm{op}})$ \\
    \shortstack[l]{Total gates}
      & $\softO(n^2T_{\mathrm{op}}+n^3)$
      & $\softO(n^2T_{\mathrm{op}})$
      & $\softO(n^{3/2}T_{\mathrm{op}})$ \\
    Maximum executions per circuit
      & $O(n)$ & $O(n)$ & $O(n)$ \\
    Total circuit executions
      & $\softO(n)$ & $\softO(n)$ & $O(n^{3/2})$ \\
    Total quantum time
      & $\softO(n^3T_{\mathrm{op}})$
      & $\softO(n^3T_{\mathrm{op}})$
      & $\softO(n^{5/2}T_{\mathrm{op}})$ \\
    Qubits
      & $O(n^2)$ & $O(n^2)$ & $O(n)$ \\
    \bottomrule
  \end{tabular}%
  }
\end{table}

\section{Regev-Type Sampling and Lattice Recovery}\label{sec:sampling-and-recovery}

\subsection{Quantum Sampling and Classical Recovery}\label{sec:lattice-algorithm}

This subsection describes Gaussian Fourier sampling for a group homomorphism and the classical recovery of its relation lattice. We give a reversible Fibonacci evaluation circuit, bound its quantum resources, and recover an integral lattice basis from the measured phases. Smith normal form then gives the decomposition of the image subgroup.

\subsubsection{Parameters and Reversible Evaluation}\label{sec:parameters-and-evaluation}

We use the black-box model and resource conventions of Section~\ref{sec:computational-model}, with $|G|\leq2^n$. The sampling length is $d=\lceil\sqrt n\rceil+\lceil c\log_2n\rceil$, and the target batch size is $b=\lceil\sqrt n\rceil$, where $c\geq2$ is the constant used in Section~\ref{sec:subgroup-extensions}. Write the evaluation sequence from Section~\ref{sec:subgroup-extensions} as $\mathcal E=(e_1,\ldots,e_m)$. Its entries are the group elements used by the circuit. The number of integer coordinates is the sequence length $m\in\{d,d+b\}$, so $m=\Theta(\sqrt n)$.

The circuit evaluates the homomorphism $\Phi:\mathbb Z^m\longrightarrow G$ defined by
\begin{equation}\label{eq:evaluation-map}
\Phi(\mathbf z)=\sum_{i=1}^m z_i e_i.
\end{equation}
Its image is the subgroup $H=\operatorname{im}\Phi=\langle e_1,\ldots,e_m\rangle$. The integer relations among the evaluation elements form the lattice
\begin{equation}\label{eq:relation-lattice}
\mathcal L=\ker\Phi
=\left\{\mathbf z\in\mathbb Z^m:
             \sum_{i=1}^m z_i e_i=0_G\right\}.
\end{equation}
Thus $\mathbb Z^m/\mathcal L\cong H$ and $\det\mathcal L=|H|\leq2^n$.

The Gaussian width is $R=e^{35m}$. Each coefficient register has $\ell=\lceil\log_2(2\sqrt m\,R)\rceil$ qubits, and the grid modulus is $D=2^\ell$. Hence $2\sqrt m\,R\leq D<4\sqrt m\,R$ and $\ell=O(\sqrt n)$. The grid $\mathbb Z_D^m$ uses centered representatives $\{-D/2,\ldots,D/2-1\}^m$ for Gaussian preparation and representatives $\{0,\ldots,D-1\}^m$ for the measured Fourier labels. Classical lattice recovery uses $q=m+\lceil c\log_2n\rceil$ independent samples.

Let $c_0>0$ be the absolute constant in Lemma~\ref{lem:dual-sampling}. The dyadic bound on the error of a measured dual sample is $\delta=2^{1+\lfloor\log_2(c_0\sqrt m/R)\rfloor}$, and the scale used in the augmented lattice is $\sigma=\delta^{-1}$. These choices give $c_0\sqrt m/R<\delta\leq2c_0\sqrt m/R$. For sufficiently large $n$, the scale $\sigma$ is a power of two with $O(\sqrt n)$ bits.

Group registers store the unique $O(n)$-bit encodings. The reversible addition circuit acts as $U_+:|a\rangle|b\rangle\mapsto|a\rangle|a+b\rangle$. The gate-cost bound for its controlled form, inverse, and specializations to a classically fixed operand is $T_{\mathrm{op}}$, with $T_{\mathrm{op}}=\Omega(\sqrt n)$. Each uses $O(n)$ work qubits, returned to zero. Classical precomputation of fixed group elements is charged separately in group-operation queries.

For the reversible evaluation, write $F_0=0$, $F_1=1$, and $F_j=F_{j-1}+F_{j-2}$ for $j\geq2$. The largest Fibonacci index used is $J=\max\{j:F_j\leq D\}=\Theta(\log D)=O(\sqrt n)$. The greedy recoding of~\cite[Lemma~5.4]{Ragavan26} represents each shifted coefficient by bits $\epsilon_{i,j}\in\{0,1\}$ satisfying
\begin{equation}\label{eq:fibonacci-recoding}
 z_i+\frac D2=\sum_{j=2}^{J}\epsilon_{i,j}F_j.
\end{equation}
Set $\epsilon_{i,1}=0$. The comparisons and conditional subtractions are performed reversibly on each coordinate; reversing them clears the digit register.

The coefficient register $|\mathbf z\rangle$ consists of $m$ registers of $\ell$ qubits, for a total of $m\ell=O(n)$ qubits. The auxiliary register $|\boldsymbol\epsilon\rangle$ stores the Fibonacci digits in $O(mJ)=O(n)$ qubits. It starts in the zero state and is cleared before the Fourier transform. Recoding uses a further $O(\ell)=O(\sqrt n)$ arithmetic qubits, reused across coordinates.

Two group registers $|A_1\rangle$ and $|A_2\rangle$, each initialized to $|0_G\rangle$, hold the accumulators in the Fibonacci recurrence. A further group register, also initialized to $|0_G\rangle$, is reused to hold the slice sum $c_j$ defined in the proof below. These registers each use $O(n)$ qubits. An $O(n)$-qubit output register, initialized to the all-zero bit string, receives the unique encoding of $\Phi(\mathbf z)$.

\begin{lemma}[Reversible additive Fibonacci evaluation]\label{lem:fibonacci-evaluation}
Let $G$, $\mathcal E$, and $\Phi$ be as defined above, with the parameters in Section~\ref{sec:parameters-and-evaluation}. The map \(\Phi(\mathbf z)=\sum_{i=1}^m z_i e_i\) can be evaluated cleanly using the registers above. The evaluation and its uncomputation use $O(n)$ reversible group operations,
$
\softO\!\left(nT_{\mathrm{op}}+n^{3/2}\right)
=\softO\!\left(nT_{\mathrm{op}}\right)
$
gates, and $O(n)$ qubits.
\end{lemma}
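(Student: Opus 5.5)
The plan follows the Fibonacci exponentiation of~\cite{Ragavan26}, transposed to additive notation. Three ingredients are needed: a digit recoding of the coefficients, a slice-by-slice accumulation driven by the recurrence, and uncomputation.

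\emph{Digit recoding.} For each coordinate $i$, the digits $\epsilon_{i,j}$ of Equation~\eqref{eq:fibonacci-recoding} are computed from $z_i+D/2$ by greedy comparison and conditional subtraction. This uses $J=O(\sqrt n)$ comparisons of $\ell$-bit integers per coordinate, hence $\softO(m J\ell)=\softO(n^{3/2})$ gates in total. The subtracted copy is then restored, so that only $|\mathbf z\rangle|\boldsymbol\epsilon\rangle$ remains. The classically precomputed constants are the offset element $s=(D/2)\sum_i e_i$, which is charged to classical queries and subtracted at the end by one fixed-operand operation.

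\emph{Slice accumulation.} Define the slice sum $c_j=\sum_{i=1}^m\epsilon_{i,j}e_i$, computed into the reusable register with $m$ controlled fixed-operand additions. Process $j=J,J-1,\ldots,1$ with the pair $(A_1,A_2)$ and the update
\[
(A_1,A_2)\mapsto(A_1+A_2+c_j,\;A_1).
\]
By induction, after processing index $j$ the first accumulator equals $\sum_i\sum_{k\ge j}\epsilon_{i,k}F_{k-j+1}e_i$. After $j=1$, using $\epsilon_{i,1}=0$ and $F_{k}$ indexing, it equals $\sum_i(z_i+D/2)e_i$. One further step, or an index shift, aligns the indices correctly; this is the routine bookkeeping in~\cite[Lemma~5.4]{Ragavan26}.

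\emph{Reversibility of each step.} Each step is realized reversibly without garbage growth. Add $c_j$ into $A_2$, then add $A_1$ into $A_2$ via $U_+$, then swap the registers. This maps $(A_1,A_2)\mapsto(A_1+A_2+c_j,A_1)$ with every operation invertible. Then uncompute $c_j$ by the inverse controlled additions. Each index therefore costs $O(m)$ group operations, and the total is $O(mJ)=O(n)$ group operations, i.e.\ $O(nT_{\mathrm{op}})$ gates.

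\emph{Output and cleanup.} Subtract $s$ from $A_1$ and copy $\Phi(\mathbf z)$ into the output register with bitwise CNOTs, which is possible because encodings are unique. Then run the whole accumulation and recoding in reverse to return $A_1,A_2$, the slice register and $|\boldsymbol\epsilon\rangle$ to zero. This at most doubles the cost.

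\emph{Resource count.} The qubits are $m\ell+O(mJ)+O(\ell)+O(n)$, which is $O(n)$. The gates are $O(n)T_{\mathrm{op}}+\softO(n^{3/2})$, and this equals $\softO(nT_{\mathrm{op}})$ since $T_{\mathrm{op}}=\Omega(\sqrt n)$.

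The main obstacle is the clean reversibility of the recurrence. The map $(A_1,A_2)\mapsto(A_1+A_2,A_1)$ must be implemented in place with only $U_+$, its inverse, and swaps, without an extra group register per step. The decomposition above into ``add into $A_2$, then swap'' is what I would try first. I would also verify that the final state of $(A_1,A_2)$ can be uncomputed from $|\boldsymbol\epsilon\rangle$ alone, which holds because every step is a bijection controlled only by $\boldsymbol\epsilon$.
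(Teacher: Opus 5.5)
Your proposal is correct and is essentially the paper's proof: the same greedy Fibonacci recoding, slice sums $c_j$ computed and uncomputed in a reusable buffer, the in-place recurrence step realized by two $U_+$ additions followed by a register exchange, a fixed translation by $-(D/2)\sum_i e_i$, a CNOT copy, and full reversal. The only difference is cosmetic. You add into $A_2$ and then swap, while the paper adds into $A_1$ and then swaps, so the two accumulators trade roles. Your hedge about ``one further step, or an index shift'' is unnecessary: your induction already gives $\sum_{k\ge1}F_kc_k=\sum_i(z_i+D/2)e_i$ after round $j=1$, and that round, with $c_1=0$, is exactly the extra Fibonacci shift.
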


\begin{proof}
The recoding in Equation~\eqref{eq:fibonacci-recoding} is the additive counterpart of the Fibonacci exponent recoding in~\cite[Lemmas~5.3 and~5.4]{Ragavan26}. For $1\leq j\leq J$ define \(c_j:=\sum_{i=1}^m \epsilon_{i,j}e_i,\) so $c_1=0_G$. Initialize $A_1=A_2=0_G$. For $j=J,J-1,\ldots,1$, first perform
$
A_1\leftarrow A_1+A_2, $ then add the slice sum,
$A_1\leftarrow A_1+c_j,$ and finally exchange the accumulators,
$(A_1,A_2)\leftarrow(A_2,A_1).$
Here $c_j$ is computed in the reusable slice buffer before the second update and uncomputed immediately afterwards. The same induction as in the Fibonacci multi-exponentiation recurrence of~\cite[Lemma~5.3]{Ragavan26} gives, at the end of round $j$,
$
A_1=\sum_{r=j+1}^{J}F_{r-j}c_r.
$
For the second accumulator, it gives
$
A_2=\sum_{r=j}^{J}F_{r+1-j}c_r.
$
Hence after the last round,
\[
A_2=\sum_{j=1}^{J}F_jc_j
   =\sum_{i=1}^m\left(z_i+\frac D2\right)e_i
   =\Phi(\mathbf z)+a,
\]
where the fixed element \(a:=\frac D2\sum_{i=1}^m e_i\) depends only on the evaluation sequence. A fixed translation by $-a$ therefore leaves $A_2=\Phi(\mathbf z)$.

Every operation in Each of these three updates is reversible.In particular,
$(A_1,A_2)\mapsto(A_1+A_2,A_2)$ has inverse
$(A_1,A_2)\mapsto(A_1-A_2,A_2)$. Copy the unique encoding in $A_2$ to the output register by bitwise controlled-NOT gates, and undo the fixed translation. Reversing the rounds and then the Fibonacci recoding restores all work registers to their initial states.

For each of the $J$ rounds, computing and uncomputing $c_j$ uses $O(m)$ controlled additions by the fixed elements $e_i$. Each such addition costs at most $T_{\mathrm{op}}$ gates. The two accumulator updates also cost at most $T_{\mathrm{op}}$ gates each; the register exchange costs $O(n)$ gates. The translations by the fixed element $a$ and its inverse cost $O(T_{\mathrm{op}})$ in total. Including the inverse computation, the number of group operations is therefore \(O\!\left((m+1)J\right)=O(n),\) since $m,J=O(\sqrt n)$. The reversible recoding and its inverse use $O(mJ\ell)=O(n^{3/2})$ elementary gates, and the register exchanges use $O(nJ)=O(n^{3/2})$ gates. Copying the output requires $O(n)$ gates. Summing these costs gives $\softO(nT_{\mathrm{op}}+n^{3/2})=\softO(nT_{\mathrm{op}})$, using $T_{\mathrm{op}}=\Omega(\sqrt n)$. The coefficient and digit registers use $O(m\ell)=O(n)$ qubits, and the group accumulators and slice buffer use $O(n)$, giving the stated space bound.
\end{proof}

\paragraph{Clean evaluation oracle.}
The evaluation oracle is implemented as follows. First compute the Fibonacci digits $\epsilon_{i,j}$ of each shifted coefficient $z_i+D/2$. Next execute the Fibonacci recurrence, computing each $c_j$ only when it is needed and clearing the slice buffer before the next round. After the final round, translate the second accumulator by $-a$, so that it contains $\Phi(\mathbf z)$. Copy its unique encoding to the output register. Finally translate by $a$, reverse the Fibonacci recurrence, and reverse the Fibonacci recoding. With registers ordered as coefficients, workspace, and output, write $|0\rangle$ for the prescribed initial workspace states, including $|0_G\rangle$ for the group registers. The resulting clean action is
\begin{equation}\label{eq:clean-evaluation}
U_\Phi:\
|\mathbf z\rangle|0\rangle|0\rangle
\longmapsto
|\mathbf z\rangle|0\rangle
|\Phi(\mathbf z)\rangle.
\end{equation}
Thus the coefficient register is restored to its original binary representation before the oracle output is measured and before the AQFT is applied.

\subsubsection{Gaussian Fourier Sampling and Quantum Resources}\label{sec:gaussian-sampling}

Using the parameters and oracle defined above, the quantum algorithm proceeds in three sequential steps to output a vector approximating a sample from $\mathcal L^*/\mathbb Z^m$.

\paragraph{Gaussian state preparation.}
We prepare an approximation to the $m$-dimensional discrete Gaussian state $|\psi_0\rangle$ over the grid $\mathbb Z_D^m$:
\begin{equation}\label{eq:gaussian-state}
    |\psi_0\rangle\propto
    \sum_{\mathbf z\in\mathbb Z_D^m}\rho_R(\mathbf z)|\mathbf z\rangle,
\end{equation}
where the Gaussian amplitude is given by $\rho_R(\mathbf z)=\exp(-\pi\|\mathbf z\|_2^2/R^2)$. The amplitudes factor across coordinates, so the normalized
state is a tensor product of $m$ one-dimensional discrete
Gaussian states.
We use the Grover--Rudolph procedure~\cite{GroverRudolph02},
as applied to discrete Gaussian states
in~\cite[Section~3]{Regev25}.
Each normalized factor is prepared with Euclidean error
at most $n^{-(c+1)}/m$.
The Euclidean error of the tensor product is then at most
$n^{-(c+1)}$.

\paragraph{Evaluation and coset measurement.}
The algorithm applies the evaluation oracle $U_\Phi$, entangling the input coefficient register with the output register. Up to normalization, its action on the target state is
\begin{equation}
\sum_{\mathbf z\in\mathbb Z_D^m}
 \rho_R(\mathbf z)|\mathbf z\rangle|0\rangle
\xrightarrow{U_\Phi}
\sum_{\mathbf z\in\mathbb Z_D^m}
 \rho_R(\mathbf z)|\mathbf z\rangle|\Phi(\mathbf z)\rangle.
\end{equation}
Measuring the output register in the computational basis yields the encoding of an element $h\in H$. Since the encodings are unique and $\Phi$ is a homomorphism, \(\Phi(\mathbf z)=\Phi(\mathbf z') \quad\Longleftrightarrow\quad \mathbf z-\mathbf z'\in\mathcal L.\) Thus every nonempty fiber $\Phi^{-1}(h)$ is a coset $\mathbf z_0+\mathcal L$. For the target Gaussian state, the measurement leaves a discrete Gaussian state on the intersection of this coset with the centered sampling grid. 

\paragraph{Fourier measurement.}
We apply an approximate quantum Fourier transform over
$\mathbb Z_D$ to each coefficient register, using the
construction of~\cite{Coppersmith94}.
Each transform has operator-norm error at most
$n^{-(c+1)}/m$ relative to the exact Fourier transform.
The tensor product therefore has operator-norm error
at most $n^{-(c+1)}$.
Measuring the coefficient registers in the computational
basis produces
$\mathbf w\in\{0,\ldots,D-1\}^m$.

The scaled vector $\widetilde{\mathbf v}=\mathbf w/D\in[0,1)^m$ approximates a class in $\mathcal L^*/\mathbb Z^m$. Repeating this procedure with the same evaluation elements gives $q=m+\lceil c\log_2n\rceil$ independent noisy dual samples. Under the short-basis assumption in Theorem~\ref{thm:subgroup-decomposition}, the classical post-processing recovers $\mathcal L$ and computes the invariant factors and cyclic generators of $H$.

The state-preparation and Fourier-transform errors change
the output distribution by at most $2n^{-(c+1)}$
in total variation distance, relative to exact preparation
and the exact grid Fourier transform.
This follows from the triangle inequality and the
contractivity of trace distance under quantum channels.
For $O(q)$ independent executions, the total variation
distance between the joint output distributions is
$O(n^{-c})$.

We next bound the resources of the quantum sampling subroutine in the black-box model.

\begin{proposition}[Quantum Circuit Complexity]\label{prop:sampling-resources}
Let $G$ be a finite Abelian black-box group with $|G|\leq2^n$, unique encodings, and reversible group-operation cost $T_{\mathrm{op}}=\Omega(\sqrt n)$. Let $\Phi:\mathbb Z^m\to G$ be the evaluation homomorphism defined above. There is a quantum circuit for Gaussian Fourier sampling from $\Phi$ whose every execution outputs a sample $\mathbf w\in\mathbb Z_D^m$ using $O(n)$ qubits, $O(n)$ reversible group operations, and
$
\softO\!\left(nT_{\mathrm{op}}+n^{3/2}\right)
=\softO\!\left(nT_{\mathrm{op}}\right)
$
elementary gates.
\end{proposition}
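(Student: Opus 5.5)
The circuit consists of three stages: Gaussian state preparation, the clean evaluation oracle $U_\Phi$ of Equation~\eqref{eq:clean-evaluation}, and the approximate Fourier transform on the coefficient registers, followed by measurement. I will bound qubits, group operations and elementary gates for each stage separately and add the bounds. Throughout I use the parameter facts already fixed in Section~\ref{sec:parameters-and-evaluation}: $m=\Theta(\sqrt n)$, $R=e^{35m}$, and $\ell=\lceil\log_2(2\sqrt m R)\rceil=O(\sqrt n)$. Hence $m\ell=O(n)$.

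\textbf{Evaluation.} Lemma~\ref{lem:fibonacci-evaluation} already gives the bounds for the evaluation and its uncomputation. It uses $O(n)$ reversible group operations, $\softO(nT_{\mathrm{op}}+n^{3/2})$ gates, and $O(n)$ qubits, counting the coefficient, digit, accumulator, slice-buffer and output registers. The output measurement only reads the $O(n)$-qubit output register, so it adds nothing beyond $O(n)$ gates. No other stage performs group operations, so the group-operation count is exactly the one from the lemma.

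\textbf{State preparation and Fourier transform.} The Gaussian state factorizes into $m$ one-dimensional discrete Gaussians on $\ell$ qubits each. With Grover--Rudolph preparation at Euclidean error $n^{-(c+1)}/m$, each factor needs $\ell$ controlled rotations. The rotation angles are computed reversibly to $O(\log(nm)+\ell)$ bits of precision using polynomial-size arithmetic in $\ell$ and $\log n$. This costs $\mathrm{poly}(\ell)$ gates per coordinate. The point needing care is that this must stay $\softO(\ell)$ per qubit level, which is the step I expect to be the main obstacle. I would follow \cite[Section~3]{Regev25}: the conditional probabilities of a one-dimensional Gaussian halving are computed by truncated series or theta-function evaluations on $O(\ell)$-bit numbers, with near-linear-time arithmetic. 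That gives $\softO(\ell^2)$ per coordinate, and $\softO(m\ell^2)=\softO(n^{3/2})$ in total. Even a cruder $\softO(\ell^3)$ per coordinate yields $\softO(mn)=\softO(n^{3/2})$, so the target survives. Workspace is $O(\ell+\log n)$ reusable ancillas. The Coppersmith approximate QFT over $\mathbb Z_D$ with error $n^{-(c+1)}/m$ uses $O(\ell\log(\ell m n^{c+1}))=\softO(\ell)$ gates per coordinate and no extra qubits. Over all coordinates this is $\softO(\sqrt n)\cdot m=\softO(n)$. Measuring $\mathbf w$ costs $O(m\ell)$.

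\textbf{Summation.} Summing the three stages gives $\softO(nT_{\mathrm{op}}+n^{3/2})$ gates. Since $T_{\mathrm{op}}=\Omega(\sqrt n)$, we have $n^{3/2}=O(nT_{\mathrm{op}})$, so the bound is $\softO(nT_{\mathrm{op}})$. Qubits are at most $m\ell+O(mJ)+O(n)+O(\ell+\log n)=O(n)$, because the preparation and QFT ancillas are reused and returned to zero before $U_\Phi$ is applied. Group operations total $O(n)$. Every execution therefore outputs some $\mathbf w\in\mathbb Z_D^m$ within the stated resources, which proves Proposition~\ref{prop:sampling-resources}.
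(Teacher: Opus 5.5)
Your proof is essentially the paper's: the same three-stage accounting, with Lemma~\ref{lem:fibonacci-evaluation} for the evaluation, \cite[Section~3]{Regev25} for the $\softO(m\ell^2)=\softO(n^{3/2})$ Gaussian preparation, and a Fourier-transform count (your $\softO(m\ell)$ from Coppersmith's AQFT is sharper than the paper's $O(m\ell^2)$ bound, but either suffices). One claim is wrong and should be removed: the fallback that $\softO(\ell^3)$ gates per coordinate would still give $\softO(n^{3/2})$ fails, since $m\ell^3=\Theta(n^2)$, so the bound depends on the $\softO(\ell^2)$-per-coordinate preparation cost taken from Regev, as in the paper.
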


\begin{proof}
The circuit is the three-step sampling procedure described above. The coefficient and Fibonacci digit registers use $O(m\ell)=O(n)$ qubits, and the temporary arithmetic register uses $O(\ell)=O(\sqrt n)$ qubits. The group accumulators, slice buffer, and oracle output and work registers use $O(n)$ qubits. Thus the total space is $O(n)$.

By Lemma~\ref{lem:fibonacci-evaluation}, computing $\Phi(\mathbf z)$ and uncomputing its work registers use $O(n)$ reversible group operations and $\softO(nT_{\mathrm{op}}+n^{3/2})$ gates. The Gaussian preparation uses $\softO(m\ell^2)=\softO(n^{3/2})$ gates at the prescribed precision~\cite[Section~3]{Regev25}. The Fourier transforms on the $m$ registers require at most $O(m\ell^2)=O(n^{3/2})$ gates, as for the full power-of-two Fourier circuits; omitting controlled rotations to obtain the prescribed AQFT cannot increase this count. Copying the group encoding to the output is already included in the evaluation cost, and no hidden-function oracle is used. Summing these costs gives $\softO\!\left(nT_{\mathrm{op}}\right)$
\end{proof}

\subsubsection{Lattice Recovery and Subgroup Decomposition}\label{sec:lattice-recovery}
Following the quantum subroutine, the classical post-processing recovers an integral basis of $\mathcal L$ from noisy dual samples and computes an invariant-factor decomposition of $H=\operatorname{im}\Phi$ by Smith normal form. The random-generation estimate used below is Corollary~\ref{cor:generation-probability}, and the distribution of the measured phases is given by Lemma~\ref{lem:dual-sampling}. The lattice-recovery statement is the following.

\begin{lemma}[Lattice Recovery from Noisy Dual Samples]\label{lem:lattice-recovery}
Let $\mathcal L\subset\mathbb{Z}^m$ be a full-rank lattice and $\mathbf v_1,\dots,\mathbf v_q$ independent uniform elements of $\mathcal L^*/\mathbb{Z}^m$, where $q\ge m+4$. Set 
\[
    \varepsilon:=\frac{1}{3}\left(2^{q-m-2}\det\mathcal L\right)^{-1/q}.
\]
 For some $\delta>0$, let $\widetilde{\mathbf v}_1,\dots,\widetilde{\mathbf v}_q\in[0,1)^m$ satisfy $\operatorname{dist}_{\mathbb T^m}(\widetilde{\mathbf v}_i,\mathbf v_i)<\delta$ for every $i$. With $\sigma=\delta^{-1}$, let $\Gamma\subset\mathbb{R}^{m+q}$ be the lattice generated by the columns of
\begin{equation}
    A=
    \begin{pmatrix}
        I_{m\times m} & 0_{m\times q}\\
        \sigma W & \sigma  I_{q\times q}
    \end{pmatrix},
\end{equation}
where the $i$-th row of $W\in\mathbb{R}^{q\times m}$ is $\widetilde{\mathbf v}_i^T$.

For every $\mathbf u\in\mathcal L$, there exists $\mathbf u'\in\Gamma$ whose first $m$ coordinates are $\mathbf u$ and such that $\|\mathbf u'\|_2\le\sqrt{q+1}\,\|\mathbf u\|_2$. Moreover, with probability greater than $1-6\cdot2^{-(q-m)}$ over the choice of $\mathbf v_1,\dots,\mathbf v_q$, every nonzero $\mathbf u'\in\Gamma$ satisfying
\[
    \|\mathbf u'\|_2<\frac{\delta^{-1}\varepsilon}{2}
    =\frac{\delta^{-1}}{6}\left(2^{q-m-2}\det\mathcal L\right)^{-1/q}
\]
has first $m$ coordinates equal to a nonzero vector of $\mathcal L$.
\end{lemma}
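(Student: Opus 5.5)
The proof splits into the deterministic lifting statement and the probabilistic soundness statement. This is the same structure as Regev's Lemma~4.4 and Claim~5.1 in~\cite{Regev25}.

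\textbf{Lifting.} Fix $\mathbf u\in\mathcal L$. For each $i$, choose a representative $\mathbf y_i\in\mathbb R^m$ of the class $\mathbf v_i$ with $\|\widetilde{\mathbf v}_i-\mathbf y_i\|_2<\delta$; this is possible by the definition of $\operatorname{dist}_{\mathbb T^m}$. Since $\mathbf y_i\in\mathcal L^*$, the pairing $k_i:=-\langle\mathbf y_i,\mathbf u\rangle$ is an integer. Take $\mathbf u'=A(\mathbf u,\mathbf k)$. Its first $m$ coordinates are $\mathbf u$, and its $i$-th lower coordinate is $\sigma\langle\widetilde{\mathbf v}_i-\mathbf y_i,\mathbf u\rangle$. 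By Cauchy--Schwarz this coordinate has absolute value at most $\sigma\delta\|\mathbf u\|_2=\|\mathbf u\|_2$. Summing squares gives $\|\mathbf u'\|_2^2\le(q+1)\|\mathbf u\|_2^2$.

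\textbf{Soundness.} Let $\mathbf u'=A(\mathbf x,\mathbf k)\ne0$ with $\|\mathbf u'\|_2<\delta^{-1}\varepsilon/2$. Then $\|\mathbf x\|_2<\delta^{-1}\varepsilon/2$. Also $\mathbf x\neq0$: otherwise $\mathbf u'=\sigma(0,\mathbf k)$ with $\mathbf k\neq0$, which has norm at least $\sigma=\delta^{-1}>\delta^{-1}\varepsilon/2$, since $\varepsilon<1$. For each $i$ the lower coordinate gives $|\langle\widetilde{\mathbf v}_i,\mathbf x\rangle+k_i|<\varepsilon/2$. With the representatives $\mathbf y_i$ as above, $|\langle\widetilde{\mathbf v}_i-\mathbf y_i,\mathbf x\rangle|<\delta\|\mathbf x\|_2<\varepsilon/2$. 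Hence $\|\langle\mathbf v_i,\mathbf x\rangle\|_{\mathbb T}<\varepsilon$ for every $i$. It therefore suffices to prove the following event has probability at least $1-6\cdot2^{-(q-m)}$: every $\mathbf x\in\mathbb Z^m\setminus\mathcal L$ with $\|\mathbf x\|_2<\delta^{-1}\varepsilon/2$ satisfies $\|\langle\mathbf v_i,\mathbf x\rangle\|_{\mathbb T}\ge\varepsilon$ for at least one $i$.

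\textbf{Probability bound (main obstacle).} I expect this step to be the crux, since the number of candidate vectors $\mathbf x$ is large, roughly $(\delta^{-1}\varepsilon)^m$, and the radius depends on $\delta$. A naive union bound over all $\mathbf x$ therefore fails. I would not union-bound over $\mathbf x$; instead I would count over the finite group $\mathbb Z^m/\mathcal L$, which has size $\det\mathcal L$, by the following steps.
\begin{enumerate}
\item The pairing $\mathbf x\mapsto(\langle\mathbf v_i,\mathbf x\rangle)_i$ depends only on the coset $\mathbf x+\mathcal L$, i.e.\ on the element $g=\Phi(\mathbf x)\in H\setminus\{0\}$.
\item Since $\mathbf v_i$ is uniform on $\mathcal L^*/\mathbb Z^m\cong\widehat H$, the value $\langle\mathbf v_i,\mathbf x\rangle$ is uniform on the cyclic subgroup $\frac1{\operatorname{ord}g}\mathbb Z/\mathbb Z$. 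Hence $\Pr[\|\langle\mathbf v_i,\mathbf x\rangle\|<\varepsilon]\le(2\varepsilon\operatorname{ord}g+1)/\operatorname{ord}g\le 2\varepsilon+1/\operatorname{ord}g$.
\item This per-sample bound is weak for elements of small order. So group the elements by order $o$: there are at most $\det\mathcal L$ elements in total and each has $\Pr\le(2\varepsilon+1/o)^q$ over the $q$ independent samples.
\item For $o\ge 1/\varepsilon$, the union bound gives at most $\det\mathcal L\,(3\varepsilon)^q=2^{-(q-m-2)}$, which is where the definition of $\varepsilon$ is used.
\item For small orders, the condition $\|\langle\mathbf v_i,\mathbf x\rangle\|<\varepsilon<1/o$ forces $\langle\mathbf v_i,\mathbf x\rangle=0$ exactly, so all $q$ characters $\mathbf v_i$ must annihilate $g$. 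Equivalently, the $\mathbf v_i$ fail to generate $\widehat H$ with enough "room". I would bound this by Corollary~\ref{cor:generation-probability}, applied with rank at most $m$ and $j=q-m$, giving failure probability $O(2^{-(q-m)})$. Alternatively, I would sum over the subgroup lattice in the style of Regev's Claim~5.1.
\end{enumerate}
Adding the two contributions and tracking constants should give the stated $6\cdot2^{-(q-m)}$.
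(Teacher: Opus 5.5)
Your proposal is correct and matches the paper's proof step for step. It uses the same lift via the integer pairings $-\langle\mathbf y_i,\mathbf u\rangle$, the same reduction to separating every nonzero class of $\mathbb Z^m/\mathcal L$, and the same split at order $1/\varepsilon$: a union bound $\det\mathcal L\,(3\varepsilon)^q=2^{-(q-m-2)}$ for large orders and Corollary~\ref{cor:generation-probability} for small ones. The only correction is that in your step 5 the small-order failure is implied by non-generation rather than equivalent to it, which is all you need, and the two contributions sum to exactly $2^{-(q-m-1)}+2^{-(q-m-2)}=6\cdot2^{-(q-m)}$.
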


\begin{proof}
We follow the proofs of~\cite[Lemmas~4.3 and~4.4]{Regev25}. The finite group $\mathcal L^*/\mathbb{Z}^m$ can be generated by at most $m$ elements. Corollary~\ref{cor:generation-probability} therefore shows that $\mathbf v_1,\dots,\mathbf v_q$ fail to generate this group with probability less than $2^{-(q-m-1)}$.

Fix a nonzero class $\mathbf u+\mathcal L\in\mathbb{Z}^m/\mathcal L$. Since $(\mathcal L^*)^*=\mathcal L$, the natural pairing between $\mathbb{Z}^m/\mathcal L$ and $\mathcal L^*/\mathbb{Z}^m$ is nondegenerate. Hence this class defines a nontrivial homomorphism \(\mathbf v+\mathbb{Z}^m\longmapsto \langle\mathbf u,\mathbf v\rangle\pmod 1.\) Its image is a cyclic subgroup of $\mathbb T$ of some order $t\ge2$, and a uniform element of $\mathcal L^*/\mathbb{Z}^m$ is mapped uniformly onto this image. If $t<1/\varepsilon$ and the sampled cosets generate $\mathcal L^*/\mathbb{Z}^m$, then the homomorphism is nonzero on at least one $\mathbf v_i$. For that index, $\operatorname{dist}(\langle\mathbf u,\mathbf v_i\rangle,\mathbb{Z})\ge1/t>\varepsilon$.

Suppose instead that $t\ge1/\varepsilon$. Since $q-m\ge4$, we have $0<\varepsilon<1/3$, and for a uniform dual coset $\mathbf v$,
\[
    \Pr\!\left[
        \operatorname{dist}(\langle\mathbf u,\mathbf v\rangle,\mathbb{Z})
        \le\varepsilon
    \right]
    =\frac{1+2\lfloor t\varepsilon\rfloor}{t}
    \le\frac1t+2\varepsilon
    \le3\varepsilon.
\]
Thus the probability that none of the $q$ samples separates this fixed class is at most $(3\varepsilon)^q$. Since $|\mathbb{Z}^m/\mathcal L|=\det\mathcal L$, a union bound over all nonzero classes gives
\[
    (\det\mathcal L-1)(3\varepsilon)^q
    <\det\mathcal L\left(2^{q-m-2}\det\mathcal L\right)^{-1}
    =2^{-(q-m-2)}.
\]
Combining this estimate with the generation failure probability shows that, except with probability less than $6\cdot2^{-(q-m)}$, every nonzero class $\mathbf u+\mathcal L$ is separated by at least one sample by more than $\varepsilon$.

Choose representatives $\mathbf v_i\in\mathcal L^*$ such that $\|\widetilde{\mathbf v}_i-\mathbf v_i\|_2<\delta$. For $\mathbf u\in\mathcal L$, set $a_i=-\langle\mathbf u,\mathbf v_i\rangle\in\mathbb{Z}$ and $\mathbf a=(a_1,\dots,a_q)^T$. Then
\[
    \mathbf u'=
    \begin{pmatrix}
        \mathbf u\\
        \sigma (W\mathbf u+\mathbf a)
    \end{pmatrix}
    \in\Gamma
\]
has first block $\mathbf u$, and
\[
    \|\mathbf u'\|_2^2
    \le \|\mathbf u\|_2^2+q\sigma ^2\delta^2\|\mathbf u\|_2^2
    =(q+1)\|\mathbf u\|_2^2.
\]
This proves the lifting assertion.

Assume now that the simultaneous separation conclusion holds, and write a nonzero vector of $\Gamma$ as $\mathbf u'=(\mathbf u,\sigma (W\mathbf u+\mathbf a))^T$. Suppose that $\|\mathbf u'\|_2<\delta^{-1}\varepsilon/2$. If $\mathbf u=0$, then $\mathbf a\ne0$ and $\|\mathbf u'\|_2\ge \sigma=\delta^{-1}$, a contradiction. Thus $\mathbf u\ne0$. If $\mathbf u\notin\mathcal L$ and $\|\mathbf u\|_2\ge\varepsilon/(2\delta)$, then $\|\mathbf u'\|_2\ge\delta^{-1}\varepsilon/2$, again a contradiction. In the remaining case, $\mathbf u\notin\mathcal L$ and $\|\mathbf u\|_2<\varepsilon/(2\delta)$. There is then an index $i$ for which $\operatorname{dist}(\langle\mathbf u,\mathbf v_i\rangle,\mathbb{Z})>\varepsilon$, while $|\langle\mathbf u,\widetilde{\mathbf v}_i-\mathbf v_i\rangle|<\varepsilon/2$. Hence $\operatorname{dist}(\langle\mathbf u,\widetilde{\mathbf v}_i\rangle,\mathbb{Z})>\varepsilon/2$, so the corresponding coordinate of $\sigma (W\mathbf u+\mathbf a)$ has absolute value greater than $\sigma \varepsilon/2=\delta^{-1}\varepsilon/2$. This is impossible. Therefore $\mathbf u\in\mathcal L\setminus\{0\}$.
\end{proof}

\begin{lemma}[Claim 5.1 in~\cite{Regev25}]\label{lem:sublattice-extraction}
Let $\Lambda \subset \mathbb{R}^a$ be a full-rank lattice, and let $T > 0$ be a real-valued norm bound. There exists a classical polynomial-time algorithm that, given an arbitrary basis of $\Lambda$, outputs a sequence of $l \le a$ linearly independent vectors $\mathbf u_1, \dots, \mathbf u_l \in \Lambda$ satisfying the Euclidean norm bound:
\[
    \|\mathbf u_i\| \le \sqrt{a}2^{a/2}T, \quad \text{for all } 1 \le i \le l.
\]
Furthermore, any vector in $\Lambda$ with Euclidean norm at most $T$ is guaranteed to be an integer linear combination of $\{\mathbf u_1, \dots, \mathbf u_l\}$. Consequently, the sublattice generated by these vectors contains all vectors in $\Lambda$ of norm at most $T$.
\end{lemma}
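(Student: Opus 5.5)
The plan is to follow Regev's Claim~5.1 and build the algorithm from LLL reduction followed by a truncation step. Given a basis of $\Lambda$, I will first run LLL to obtain a reduced basis $\mathbf b_1,\dots,\mathbf b_a$ with Gram--Schmidt vectors $\mathbf b_1^*,\dots,\mathbf b_a^*$. The LLL conditions give $\|\mathbf b_j^*\|^2\ge 2^{-(i-j)}\|\mathbf b_i^*\|^2$ for $j\le i$ (with the standard parameter $3/4$). I will then let $l$ be the largest index such that $\|\mathbf b_i^*\|\le 2^{(a-1)/2}T$ holds for every $i\le l$; more conveniently, I will take $l$ maximal with $\|\mathbf b_{i}^*\|\le 2^{a/2}T$ for all $i\le l$. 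The output is $\mathbf u_i=\mathbf b_i$ for $i\le l$. These vectors are linearly independent, and the whole procedure is polynomial time because LLL is.

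\textbf{Norm bound.} Size reduction gives $\mathbf b_i=\mathbf b_i^*+\sum_{j<i}\mu_{ij}\mathbf b_j^*$ with $|\mu_{ij}|\le 1/2$. Hence
\[
\|\mathbf b_i\|^2\le \sum_{j\le i}\|\mathbf b_j^*\|^2\le i\,2^{a}T^2
\]
for every $i\le l$. Taking square roots gives $\|\mathbf u_i\|\le\sqrt a\,2^{a/2}T$, which is the stated bound.

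\textbf{Containment.} Take $\mathbf x\in\Lambda$ with $\|\mathbf x\|\le T$ and write $\mathbf x=\sum_i c_i\mathbf b_i$ with $c_i\in\mathbb Z$. Suppose some $c_i$ with $i>l$ is nonzero, and let $k$ be the largest such index. Then $\|\mathbf x\|\ge |c_k|\,\|\mathbf b_k^*\|\ge\|\mathbf b_k^*\|$. So it suffices to show that $\|\mathbf b_k^*\|>T$ for all $k>l$. At index $l+1$ this holds by the choice of $l$, since $\|\mathbf b_{l+1}^*\|>2^{a/2}T$. For $k>l+1$ I will use the LLL decay the other way: $\|\mathbf b_k^*\|\ge 2^{-(k-l-1)/2}\|\mathbf b_{l+1}^*\|>2^{-(a-1)/2}2^{a/2}T>T$. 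The LLL inequality $\|\mathbf b_{k}^*\|^2\ge 2^{-(k-j)}\|\mathbf b_j^*\|^2$ applied with $j=l+1$ gives exactly this. Therefore every $c_i$ with $i>l$ vanishes, and $\mathbf x$ lies in the integer span of $\mathbf u_1,\dots,\mathbf u_l$.

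\textbf{Main obstacle.} The delicate point is choosing the threshold $2^{a/2}T$ so that both requirements hold at once. It has to be large enough that the geometric decay $2^{-(k-l-1)/2}$ over at most $a-1$ steps still leaves every $\|\mathbf b_k^*\|$ above $T$ for $k>l$. It also has to be small enough that the accumulated Gram--Schmidt contributions give the $\sqrt a\,2^{a/2}T$ bound. I expect the threshold above to balance these, since $2^{a/2}\ge 2^{(a-1)/2}$. If the constants are off by a factor, I would adjust the threshold to $2^{(a-1)/2}T$ and absorb the slack into the $\sqrt a$ factor. Linear independence is automatic, because the output is a subset of a basis.
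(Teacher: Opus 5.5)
Your argument is correct and is essentially the proof behind Regev's Claim~5.1, which the paper cites without reproducing: LLL-reduce, truncate at a Gram--Schmidt threshold, bound $\|\mathbf b_i\|$ by size reduction, and get containment from $\|\mathbf x\|\ge\|\mathbf b_k^*\|$ at the last nonzero coefficient. The only difference from Regev is cosmetic: he cuts at the last index with $\|\mathbf b_\ell^*\|\le T$ and uses the Lov\'asz decay backward to bound the earlier Gram--Schmidt norms, whereas you cut the prefix at $2^{a/2}T$ and use the decay forward; note also that your first paragraph states the LLL inequality with indices reversed, and the correct form is $\|\mathbf b_i^*\|^2\ge 2^{-(i-j)}\|\mathbf b_j^*\|^2$ for $j\le i$, which is the one you actually use in the containment step.
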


To recover $\mathcal L$, we apply Lemma~\ref{lem:sublattice-extraction} to the augmented lattice in Lemma~\ref{lem:lattice-recovery}. If the extracted vectors lie below the norm bound in Lemma~\ref{lem:lattice-recovery}, their projections generate $\mathcal L$. The following theorem uses the short-basis bound $T=e^{33m}$ from Corollary~\ref{cor:fixed-elements-short-basis} and includes the Smith normal form step for the image subgroup.

\begin{theorem}[Subgroup Decomposition under a Short-Basis Bound]\label{thm:subgroup-decomposition}
Let $G$ be a finite Abelian black-box group with $|G|\leq2^n$, unique encodings, and reversible group-operation cost $T_{\mathrm{op}}=\Omega(\sqrt n)$. Let $\Phi:\mathbb Z^m\to G$ be the evaluation homomorphism defined above, with $H=\operatorname{im}\Phi$ and $\mathcal L=\ker\Phi$. Suppose that $\mathcal L$ has an integral basis $\mathbf b_1,\ldots,\mathbf b_m$ satisfying
\[
\max_{1\leq j\leq m}\|\mathbf b_j\|_2\leq e^{33m}.
\]
For all sufficiently large $n$, there is a quantum algorithm with classical post-processing that recovers an integral basis of $\mathcal L$ and computes an invariant-factor decomposition of $H$ together with corresponding cyclic generators with probability at least $1-O(n^{-c})$. The algorithm uses $O(n)$ qubits and $\softO(n^{3/2}T_{\mathrm{op}})$ quantum time complexity in total, while the classical computation uses polynomially many bit operations and group-operation queries.
\end{theorem}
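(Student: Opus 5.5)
The algorithm is the Gaussian Fourier sampling circuit of Proposition~\ref{prop:sampling-resources}, run $q=m+\lceil c\log_2 n\rceil$ times with the same evaluation elements. Classically, we build the augmented lattice $\Gamma$ of Lemma~\ref{lem:lattice-recovery}, apply Lemma~\ref{lem:sublattice-extraction} to it with norm bound $T'=\sqrt{q+1}\,e^{33m}$, project the output vectors to their first $m$ coordinates, extract an HNF basis, and compute the Smith normal form~\cite{KannanBachem79}. The resource claims are immediate. Each execution uses $O(n)$ qubits and $\softO(nT_{\mathrm{op}})$ gates, and there are $q=O(\sqrt n)$ executions, so the total quantum time is $\softO(n^{3/2}T_{\mathrm{op}})$. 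LLL on a lattice of dimension $m+q=O(\sqrt n)$, whose entries have $O(\sqrt n)$ bits (because $\sigma$ and $D$ are $O(\sqrt n)$-bit powers of two), runs in polynomial time. Smith normal form is also polynomial. The generators $\sum_i (U^{-1})_{ij}e_i$ are computed by polynomially many group queries via double-and-add.

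\textbf{Correctness.} First replace the real outputs by ideal ones. By the remark after the Fourier measurement paragraph, the joint law of the $q$ outputs is within $O(n^{-c})$ in TV of exact sampling. By Lemma~\ref{lem:dual-sampling} and the product bound, we may couple the outputs to independent uniform $\mathbf v_i\in\mathcal L^*/\mathbb Z^m$ with $\operatorname{dist}(\widetilde{\mathbf v}_i,\mathbf v_i)\le c_0\sqrt m/R<\delta$ for all $i$, except with probability $O(q2^{-m})+O(n^{-c})=O(n^{-c})$ (using $m\ge\sqrt n$). Lemma~\ref{lem:lattice-recovery} applies since $q-m=\lceil c\log_2 n\rceil\ge4$. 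With probability $1-6\cdot2^{-(q-m)}\ge1-O(n^{-c})$, every nonzero $\mathbf u'\in\Gamma$ with $\|\mathbf u'\|_2<\delta^{-1}\varepsilon/2$ projects into $\mathcal L\setminus\{0\}$.

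\textbf{Main obstacle: the numerical inequality.} We must check that the LLL output bound lies below this threshold:
\[
\sqrt{m+q}\,2^{(m+q)/2}\sqrt{q+1}\,e^{33m}<\frac{\delta^{-1}}{6}\bigl(2^{q-m-2}\det\mathcal L\bigr)^{-1/q}.
\]
I would bound the two sides separately. For the right side, $\delta^{-1}\ge R/(2c_0\sqrt m)=e^{35m}/(2c_0\sqrt m)$. Also $(\det\mathcal L)^{1/q}\le 2^{n/q}\le 2^{n/m}\le 2^{\sqrt n}\le 2^{m}$, and $2^{(q-m-2)/q}\le2$. So the right side is at least $e^{35m}2^{-m}/\mathrm{poly}(m)$. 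For the left side, $q\le m+O(\log n)\le 2m$ for large $n$, so the left side is at most $\mathrm{poly}(m)\,2^{m}\,e^{33m}$. The inequality therefore reduces to $e^{2m}>4^{m}\,\mathrm{poly}(m)$, which holds for large $n$ because $e^2>4$. This margin is tight, and it is the step to double-check carefully, including the $n/m$ estimate, which uses $m\ge\sqrt n$.

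\textbf{Concluding recovery.} Given the inequality, each $\mathbf u_i$ returned by Lemma~\ref{lem:sublattice-extraction} projects into $\mathcal L$. Each basis vector $\mathbf b_j$ lifts, by Lemma~\ref{lem:lattice-recovery}, to $\mathbf b_j'\in\Gamma$ with norm at most $T'$, so $\mathbf b_j'$ lies in the integer span of the $\mathbf u_i$. Projecting, each $\mathbf b_j$ lies in the integer span of the projections. Hence the projections generate exactly $\mathcal L$, and HNF returns an integral basis. Smith normal form then gives the invariant factors of $H\cong\mathbb Z^m/\mathcal L$ and the cyclic generators as columns of $U^{-1}$. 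A union bound over the failure events gives success probability $1-O(n^{-c})$.
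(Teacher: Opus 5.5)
Your proposal is correct and follows the paper's proof essentially step for step. It uses the same $q$-fold sampling, the same lift bound $\sqrt{q+1}\,e^{33m}$ fed into Lemma~\ref{lem:sublattice-extraction}, the same reduction of the recovery condition to $e^{2}>4$ (the paper's $35>33+2\ln 2$), and the same HNF/Smith normal form and resource accounting; your explicit first step, passing from the approximate circuit to exact sampling at total-variation cost $O(n^{-c})$ before invoking Lemma~\ref{lem:dual-sampling}, is if anything slightly more careful than the paper, which cites only the $O(q2^{-m})$ term at that point.
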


\begin{proof}
Put $a=m+q=2m+\lceil c\log_2n\rceil=O(\sqrt n)$. By Lemma~\ref{lem:dual-sampling} and its total-variation estimate, the measured vectors $\widetilde{\mathbf v}_1,\ldots,\widetilde{\mathbf v}_q$ can be coupled to independent uniform classes $\mathbf v_1,\ldots,\mathbf v_q\in\mathcal L^*/\mathbb Z^m$ so that, except with probability $O(q2^{-m})$,
\[
\operatorname{dist}_{\mathbb T^m}
(\widetilde{\mathbf v}_i,\mathbf v_i)<\delta
\]
for every $1\leq i\leq q$, where $\delta$ is the dyadic bound in Section~\ref{sec:parameters-and-evaluation}. In particular, $\delta\leq2c_0\sqrt m/R$ and $\sigma=\delta^{-1}$ is rational. Since $q-m=\lceil c\log_2n\rceil\geq4$ for sufficiently large $n$, Lemma~\ref{lem:lattice-recovery} gives its separation conclusion except with probability less than $6\cdot2^{-(q-m)}$. We work on the event that both conclusions hold.

Construct the augmented lattice $\Gamma\subseteq\mathbb R^{a}$ of Lemma~\ref{lem:lattice-recovery}, with column basis
\[
\begin{pmatrix}
I_{m\times m}&0_{m\times q}\\
\sigma W&\sigma  I_{q\times q}
\end{pmatrix},
\]
where the $i$-th row of $W$ is $\widetilde{\mathbf v}_i^T$. Each basis vector $\mathbf b_j$ has a lift $\mathbf b_j'\in\Gamma$ of norm at most $\sqrt{q+1}\,e^{33m}$. Lemma~\ref{lem:sublattice-extraction}, applied with this norm bound, returns vectors $\mathbf z_1',\ldots,\mathbf z_l'$ such that
\[
\|\mathbf z_i'\|_2
\leq\sqrt{a}\,2^{a/2}\sqrt{q+1}\,e^{33m},
\]
and every $\mathbf b_j'$ is an integer linear combination of these vectors. If
\begin{equation}\label{eq:recovery-condition}
\sqrt{a}\,2^{a/2}\sqrt{q+1}\,e^{33m}
<\frac{\delta^{-1}}{6}
  \left(2^{q-m-2}\det\mathcal L\right)^{-1/q},
\end{equation}
then Lemma~\ref{lem:lattice-recovery} implies that the projections $\mathbf z_i$ onto the first $m$ coordinates belong to $\mathcal L$. Projecting the integer combinations expressing the lifted basis gives
\[
\mathcal L
=\langle\mathbf b_1,\ldots,\mathbf b_m\rangle_{\mathbb Z}
\subseteq\langle\mathbf z_1,\ldots,\mathbf z_l\rangle_{\mathbb Z}
\subseteq\mathcal L.
\]
Thus the projected vectors generate $\mathcal L$, and Hermite normal form yields an integral basis.

To verify Equation~\eqref{eq:recovery-condition}, use $\det\mathcal L\leq2^n$, $n\leq m^2$, and $q\geq m$ to obtain
\[
\left(2^{q-m-2}\det\mathcal L\right)^{-1/q}
\geq2^{-(n+q-m-2)/q}
\geq2^{-(m+1)}.
\]
Together with $\delta^{-1}\geq R/(2c_0\sqrt m)$, this shows that it suffices to have
\begin{equation}\label{eq:gaussian-width-condition}
e^{35m}>24c_0\sqrt{ma(q+1)}\,2^{a/2+m}e^{33m}.
\end{equation}
Taking logarithms and dividing by $m$, Equation~\eqref{eq:gaussian-width-condition} is equivalent to
\[
35>
33+\left(\frac{a}{2m}+1\right)\ln2
+\frac{\ln\!\left(24c_0\sqrt{ma(q+1)}\right)}{m}.
\]
Since $q=m+\lceil c\log_2n\rceil$ and $m\geq\lceil\sqrt n\rceil$, the right-hand side is \(33+2\ln2 +O\!\left(\frac{\log n}{\sqrt n}\right).\) The inequality therefore holds for sufficiently large $n$, since $35>33+2\ln2$.

The total failure probability is $O(n^{-c})$. Indeed, $2^{-(q-m)}\leq n^{-c}$, and $q2^{-m}=o(n^{-c})$ for fixed $c$, since $q=m+\lceil c\log_2n\rceil$ and $m\geq\lceil\sqrt n\rceil$.

Let $A\in\mathbb Z^{m\times m}$ be the recovered column basis matrix of $\mathcal L$. Compute unimodular matrices $U,V$ such that
\[
UAV=\operatorname{diag}(s_1,\ldots,s_m),
\]
where $1\leq s_1\mid s_2\mid\cdots\mid s_m$.
The map $\Phi$ induces an isomorphism $\mathbb Z^m/\mathcal L\cong H$. Since $U$ maps $\mathcal L$ to $\operatorname{diag}(s_1,\ldots,s_m)\mathbb Z^m$, Smith normal form gives \(H\cong\prod_{j:s_j>1}\mathbb Z_{s_j}.\) For each $j$ with $s_j>1$, let $\mathbf e_j$ be the $j$-th standard basis vector of $\mathbb Z^m$. The corresponding cyclic generator is
\[
\Phi(U^{-1}\mathbf e_j)
=\sum_{i=1}^m(U^{-1})_{i,j}e_i.
\]
It has order $s_j$, and these generators realize the displayed decomposition. Evaluate these integer combinations using the classical group operations. Trivial factors are omitted; if every $s_j=1$, then $H=\{0_G\}$.

The augmented basis has rational entries of $O(\sqrt n)$ bits, so lattice reduction and Hermite normal form run in polynomial time. Smith normal form and its unimodular transformations also have polynomial bit complexity~\cite{KannanBachem79}, and the resulting generator combinations require polynomially many classical group-operation queries. These deterministic steps do not add a failure probability. By Proposition~\ref{prop:sampling-resources}, the sampling circuit, which has \(\softO(nT_{\mathrm{op}})\) gates, is executed \(q=O(\sqrt n)\) times. With the same \(O(n)\)-qubit workspace reused across executions, the quantum time complexity is \(\softO(n^{3/2}T_{\mathrm{op}})\).
\end{proof}

\subsection{Short Bases for Random Relation Lattices}
\label{sec:short-bases}

Let $G$ be a finite Abelian group of order $N\leq2^n$, and fix a subgroup $K\leq G$. Let $d\geq1$ and $0\leq b\leq d$ be integers, and put $m=d+b$. Fix $w_1,\ldots,w_b\in G$ before sampling $g_1,\ldots,g_d$. Define the evaluation homomorphism $\Phi:\mathbb Z^d\times\mathbb Z^b\longrightarrow G$ by
\[
  \Phi(\mathbf z,\mathbf u)
  =\sum_{i=1}^d z_i g_i+\sum_{j=1}^b u_jw_j.
\]
The lattice considered below is
\begin{equation}\label{eq:preimage-lattice}
  \mathcal L=\Phi^{-1}(K)
  =\left\{(\mathbf z,\mathbf u)\in\mathbb Z^d\times\mathbb Z^b:
       \sum_{i=1}^d z_i g_i+\sum_{j=1}^b u_jw_j\in K\right\}.
\end{equation}
Writing $H=\operatorname{im}\Phi$, the induced map to $H/(H\cap K)$ gives
$
  \mathbb Z^m/\mathcal L\cong H/(H\cap K).
$
Consequently,
$  \det\mathcal L=|H/(H\cap K)|\leq2^n.
$
For group decomposition, $K=\{0_G\}$ and $\mathcal L=\ker\Phi$. The short-basis bounds do not require the random elements to generate $G$ or the subgroup from which they are sampled.

For the random-subset experiment, take $n=\lceil\log_2N\rceil$ and fix $c>0$. Set the sampling length to
$d=\lceil\sqrt n\rceil+\lceil c\log_2n\rceil.
$
The subset size is
$
  X=\lceil n^{500\sqrt n}\rceil.
$
Take $n$ sufficiently large that $X<N/2$. Choose $\mathcal X$ uniformly among the subsets of $G$ of size $X$, and then choose $g_1,\ldots,g_d$ independently and uniformly from $\mathcal X$. This experiment is used in the short-basis proof, which follows Pilatte's lattice-point counting argument~\cite{Pilatte26} and uses Hayes' Fourier-coefficient estimate~\cite{Hayes2001,BabaiNotes}. The next lemma compares this sampling law with independent uniform sampling from $G$.

\begin{lemma}[Sampling Equivalence]\label{lem:sampling-comparison}
Let $G$ be a finite Abelian group of order $N$, and let $n=\lceil\log_2N\rceil$. Choose $\mathcal X$ uniformly among the subsets of $G$ of size $X=\lceil n^{500\sqrt n}\rceil$. Given $\mathcal X$, form a sequence $S_1$ by sampling $d$ elements independently and uniformly from $\mathcal X$. Let $S_2$ be a sequence of $d$ independent uniform samples from $G$, and denote the distributions of $S_1$ and $S_2$ by $\mu_1$ and $\mu_2$, respectively. Then their total variation distance satisfies $\operatorname{TV}(\mu_1,\mu_2)=O(d^2/X)$.
\end{lemma}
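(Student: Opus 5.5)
The plan is to realize both laws through a common intermediate: sampling $d$ elements uniformly from $\mathcal X$ \emph{without} replacement. Let $\mu_1$ be the law of $S_1$. Condition on the event $E_1$ that the $d$ indices drawn from $\mathcal X$ are pairwise distinct. On $E_1$, the sequence $S_1$ is a uniformly random ordered $d$-tuple of distinct elements of the random set $\mathcal X$. The first step is a symmetry argument. Since $\mathcal X$ is a uniform $X$-subset of $G$, every ordered $d$-tuple of distinct elements of $G$ lies in $\mathcal X$ with the same probability, and it is then selected with the same conditional probability. Hence, conditioned on $E_1$, $S_1$ is uniform over ordered $d$-tuples of distinct elements of $G$. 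Call this law $\nu$. This uses only that $d\le X\le N$, which holds for large $n$ because $X<N/2$ and $d=O(\sqrt n)$.

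The second step uses the same structure for $S_2$. Let $E_2$ be the event that the $d$ uniform samples from $G$ are pairwise distinct. Conditioned on $E_2$, the law of $S_2$ is also exactly $\nu$.

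The last step is the standard conditioning bound. If $\mu(\cdot\mid E)=\nu$, then $\operatorname{TV}(\mu,\nu)\le\Pr[E^c]$. Applying this to both laws and using the triangle inequality gives
\[
\operatorname{TV}(\mu_1,\mu_2)\le \Pr[E_1^c]+\Pr[E_2^c].
\]
A union bound over the $\binom d2$ pairs gives $\Pr[E_1^c]\le \binom d2/X$ and $\Pr[E_2^c]\le\binom d2/N$. Since $X<N$, both terms are at most $d^2/(2X)$, so $\operatorname{TV}(\mu_1,\mu_2)\le d^2/X=O(d^2/X)$.

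The only point needing care is the exchangeability claim in the first step. I would verify it directly. Fix an ordered tuple $(x_1,\ldots,x_d)$ of distinct elements of $G$. Then
\[
\Pr[S_1=(x_1,\ldots,x_d)]=\Pr[\{x_i\}\subseteq\mathcal X]\cdot X^{-d}=\frac{\binom{N-d}{X-d}}{\binom NX}X^{-d}.
\]
This value is independent of the tuple. Therefore the law of $S_1$ restricted to distinct tuples is uniform, which is exactly the statement that $S_1$ conditioned on $E_1$ has law $\nu$. The remaining steps are routine.
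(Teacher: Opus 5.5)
Your argument is correct. It rests on the same observation as the paper's proof: $\mu_1$ gives every tuple with distinct entries the same mass $\binom{N-d}{X-d}X^{-d}/\binom{N}{X}$, and collisions have probability $O(d^2/X)$ under both laws. The difference is in how you use that observation.

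The paper compares $\mu_1$ and $\mu_2$ pointwise on distinct tuples. It writes $\mu_1(A)=N^{-d}P$ with $P=\prod_{i=0}^{d-1}(1-i/X)/(1-i/N)$, proves $1-d(d-1)/(2X)\le P\le 1$, and then adds the two collision masses to get $3d^2/(4X)$.

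You use only the constancy: both conditional laws equal the uniform law $\nu$ on injective tuples. Then $\operatorname{TV}(\mu_i,\nu)\le\Pr[E_i^c]$ and the triangle inequality finish the proof, so the discrepancy on distinct tuples is paid for automatically by the collision probabilities. This avoids estimating $P$ altogether and makes exchangeability of the random-subset experiment the only structural input. The cost is a slightly worse constant ($d^2/X$ instead of $3d^2/(4X)$), which does not matter for the $O(\cdot)$ claim.

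One small point: the lemma does not restrict $d$. Instead of invoking $d=O(\sqrt n)$ to ensure $d\le X$, note, as the paper does, that for $d>X$ the bound is trivial because $\operatorname{TV}(\mu_1,\mu_2)\le 1<d^2/X$.
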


\begin{proof}
Let $\Omega=G^d$, let $\Omega_1\subseteq\Omega$ consist of the sequences with distinct entries, and put $\Omega_2=\Omega\setminus\Omega_1$. By definition,
\begin{equation}\label{eq:sampling-distance}
    \operatorname{TV}(\mu_1,\mu_2)
    =\frac12\sum_{A\in\Omega}\left|\mu_1(A)-\mu_2(A)\right|.
\end{equation}
If $d>X$, the claimed bound follows from $\operatorname{TV}(\mu_1,\mu_2)\le1$; hence assume $d\le X$.

For $A=(a_1,\ldots,a_d)\in\Omega_1$, we have $\mu_2(A)=N^{-d}$. The probability that $\mathcal X$ contains all entries of $A$ is $\binom{N-d}{X-d}/\binom NX$. Conditional on this event, sampling $A$ from $\mathcal X$ has probability $X^{-d}$. Therefore,
\[
    \mu_1(A)
    =\frac{\binom{N-d}{X-d}}{\binom NX}\frac1{X^d}
    =\frac1{N^d}\prod_{i=0}^{d-1}\frac{1-i/X}{1-i/N}.
\]
Write $P=\prod_{i=0}^{d-1}(1-i/X)/(1-i/N)$. Since $X\le N$, we have $P\le1$, while
\[
    P\ge\prod_{i=0}^{d-1}\left(1-\frac iX\right)
    \ge1-\sum_{i=0}^{d-1}\frac iX
    =1-\frac{d(d-1)}{2X}.
\]
Thus, for every $A\in\Omega_1$,
\begin{equation}\label{eq:distinct-sample-bound}
    \left|\mu_1(A)-\mu_2(A)\right|
    =\frac{1-P}{N^d}
    \le\frac{d(d-1)}{2XN^d}
    \le\frac{d^2}{2XN^d}.
\end{equation}
For independent uniform samples from $G$, a union bound over pairs of entries gives
\begin{equation}\label{eq:uniform-collision-bound}
    \mu_2(\Omega_2)\le\binom d2\frac1N\le\frac{d^2}{2N}.
\end{equation}
For samples from the random subset, the same argument gives
\begin{equation}\label{eq:subset-collision-bound}
    \mu_1(\Omega_2)\le\binom d2\frac1X\le\frac{d^2}{2X}.
\end{equation}
Splitting the sum in Equation~\eqref{eq:sampling-distance} over $\Omega_1$ and $\Omega_2$ and using $|\Omega_1|\le N^d$ yields
\[
    \operatorname{TV}(\mu_1,\mu_2)
    \le\frac{3d^2}{4X}
    =O\left(\frac{d^2}{X}\right).
\]
For $d=O(\sqrt n)$, this bound is at most $2^{-\Omega(\sqrt n\log n)}$.
\end{proof}

When a short-basis failure bound is transferred between these sampling laws, the total variation term must be added. For the random-subset experiment, the bound is as follows.

\begin{theorem}[Existence of a Short Lattice Basis]\label{thm:short-basis}
Let $G$ be a finite Abelian group of order $N$, and let $K\leq G$. Put $n=\lceil\log_2N\rceil$. Fix $c>0$ and set \(d=\lceil\sqrt n\rceil+\lceil c\log_2n\rceil.\) Fix an integer $0\leq b\leq d$ and elements $w_1,\ldots,w_b\in G$ before sampling. Take $n$ sufficiently large depending only on $c$. Choose $\mathcal X\subseteq G$ uniformly among subsets of size $X=\lceil n^{500\sqrt n}\rceil$. Choose $g_1,\ldots,g_d$ independently and uniformly from $\mathcal X$. The lattice $\mathcal L=\Phi^{-1}(K)$ in Equation~\eqref{eq:preimage-lattice} has an integral basis $\mathbf b_1,\ldots,\mathbf b_{d+b}$ satisfying
$
\max_{1\leq j\leq d+b}\|\mathbf b_j\|_2<e^{33(d+b)}
$
with probability at least $1-O(e^{-4d^2})$. The probability is over $\mathcal X$ and the $g_i$. The bound is uniform for $0\leq b\leq d$ and all choices of the fixed elements.
\end{theorem}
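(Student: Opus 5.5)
We will follow Pilatte's lattice-point counting argument. Write $m=d+b$. The first step is a counting identity. For a box parameter $T\ge1$, let
\[
 N(T)=\#\{(\mathbf z,\mathbf u)\in\mathbb Z^m\cap[-T,T]^m:\Phi(\mathbf z,\mathbf u)\in K\}.
\]
Character orthogonality for $K^\perp\le\widehat G$ gives
\[
 N(T)=\frac1{|K^\perp|}\sum_{\chi\in K^\perp}\Bigl(\prod_{i=1}^d S_T(\chi(g_i))\Bigr)\Bigl(\prod_{j=1}^b S_T(\chi(w_j))\Bigr),
 \qquad S_T(\zeta)=\sum_{|x|\le T}\zeta^x .
\]
The principal character contributes the main term $(2T+1)^m/|K^\perp|$. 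We then split the remaining characters by order, with threshold $\operatorname{ord}\chi\le X^{1/2}$ (or similar). This is the only place the random subset $\mathcal X$ of size $X=\lceil n^{500\sqrt n}\rceil$ is used: it keeps the threshold and the resulting exponents explicit.

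The small-order characters are handled deterministically. We use $|S_T(\zeta)|\le 2T+1$ for the fixed targets, so the $w_j$ factors are bounded trivially and never need randomness. For a character of order $k$, the fraction of $g\in\mathcal X$ with $\chi(g)=\zeta$ is close to $1/k$ for every $k$-th root of unity $\zeta$. This is Hayes' Fourier-coefficient estimate for a random $X$-subset, and it holds with high probability over $\mathcal X$ simultaneously for all small-order characters; the number of such characters is at most $N$ times the count of their orders. Averaging over $g\in\mathcal X$ then shows $\mathbb E_g|S_T(\chi(g))|\lesssim (2T+1)/k+O(T\log k/k)$. So each of the $d$ random factors loses roughly a factor $1/k$ relative to the main term. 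Since $d\ge\sqrt n$ and $T$ is at least polylogarithmically large, this beats the count of characters of order $k$.

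The large-order characters are handled through a second moment over the independent $g_i$. We have $\mathbb E_{g}|S_T(\chi(g))|^2=\#\{(x,y):\chi(g)^{x-y}=1\}$ on average, which is $O(T+T^2/\operatorname{ord}\chi)$ when $g$ is uniform on $G$. For sampling from $\mathcal X$ the same holds up to the Hayes error term. Multiplying over the $d$ independent samples and summing over $\chi$ gives an expected error contribution of at most $N\,(CT)^{d}(2T+1)^{b}\,(T^2/X^{1/2}+T)^{\dots}$. This is much smaller than the main term once $T\ge e^{c' m}$. Markov's inequality then converts the expectation bound into a probability bound of the form $e^{-4d^2}$ by choosing $T$ slightly larger. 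All estimates used here are uniform in $b\le d$ and in the fixed $w_j$, because those factors were bounded by $2T+1$.

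With $N(T)=(1+o(1))(2T+1)^m/\det\mathcal L$ established at two scales $T_1<T_2=e^{O(m)}$, we deduce the successive-minima bound $\lambda_m(\mathcal L)\le e^{32m}$. If all short vectors lay in a proper subspace $V$ of dimension $m-1$, then $N(T_1)$ would be at most a count of points of $\mathcal L\cap V$, which grows like $T^{m-1}$. Comparing the counts at the two scales contradicts the asymptotics, because the ratio $N(T_2)/N(T_1)$ must be $\approx(T_2/T_1)^m$. Finally, we use the standard fact, via Cassels, that a lattice has a basis with $\|\mathbf b_j\|\le \max(1,\sqrt j/2)\lambda_j$. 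This gives $\max_j\|\mathbf b_j\|\le\sqrt m\,e^{32m}<e^{33m}$.

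The main obstacle is the large-order second-moment step. We have to sum over all nonprincipal characters in $K^\perp$, up to $N\le 2^n$ of them, and still obtain failure probability $e^{-4d^2}$ rather than merely a constant. We would first try choosing $T\approx e^{c m}$ so that each factor's saving $T^{-1}$ raised to the power $d\approx\sqrt n$ gives $e^{-cm\sqrt n}\ll 2^{-n}e^{-4d^2}$. We would then check that the Hayes error for $\mathcal X$, of order $X^{-1/2}$, is negligible at this scale.
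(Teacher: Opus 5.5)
\textbf{There is a genuine gap: your main term is wrong, and the small-order characters cannot be absorbed as error.} You put only the principal character into the main term, which gives $(2T+1)^m/|K^\perp|$. But every $\chi\in K^\perp$ that is trivial on all of $g_1,\ldots,g_d,w_1,\ldots,w_b$ contributes the full $(2T+1)^m$. With $H=\operatorname{im}\Phi$, there are $|G|/|H+K|$ such characters, and together they give $(2T+1)^m/\det\mathcal L$. The two normalizations agree only when $H+K=G$. The theorem does not assume this, and it fails, for instance, whenever $K=\{0\}$ and $r(G)>d+b$.

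Your averaging argument cannot absorb these characters. For a fixed $\chi$ of order $k$, the best it can give is $\mathbb E_g|S_T(\chi(g))|\geq\Pr[\chi(g)=1]\,(2T+1)\approx(2T+1)/k$. This is a saving of only $1/k$ per random coordinate, and it does not improve with $T$. So the order-$k$ characters contribute, relative to your main term, about $\#\{\chi:\operatorname{ord}\chi=k\}\cdot k^{-d}$. Take $G=(\mathbb Z/2\mathbb Z)^n$, $K=\{0\}$, $b=0$ and $k=2$. Then this quantity is about $2^{n-d}$, so the claim that $d\geq\sqrt n$ beats the number of characters of order $k$ is false. Indeed, in that example $\mathcal L\supseteq2\mathbb Z^d$, so $N(T)\geq(2\lfloor T/2\rfloor+1)^d$, far above $(2T+1)^d/2^n$. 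Your final step writes $\det\mathcal L$ in the asymptotics, but nothing earlier in your argument produces that normalization.

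\textbf{The missing idea is the paper's Lemma~\ref{lem:main-term}.} Put the whole set $\mathcal A_0$ of characters that are trivial on all evaluation elements into the main term. Then bound the remaining low-order characters \emph{deterministically}, relative to $|\mathcal A_0|$:
\begin{itemize}
\item The map $\chi\mapsto(\chi(g_1),\ldots,\chi(g_d),\chi(w_1),\ldots,\chi(w_b))$ has kernel $\mathcal A_0$, so every nonempty fiber has exactly $|\mathcal A_0|$ elements.
\item If $\operatorname{ord}\chi\leq e^{10m}$, at most $e^{20m|I|}$ value patterns are nontrivial exactly on a given index set $I$.
\item Each nontrivial factor then has absolute value at most $e^{10m}$.
\end{itemize}
This gives a relative error of $(1+e^{30m}/(2h+1))^m-1$ with no randomness, which is $O(e^{31m}/h)$ once $h\geq e^{31m}$.

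This also shows the order cutoff must be $e^{O(m)}$. With a cutoff as large as $X^{1/2}$, the same count would require $h\gtrsim X^{3/2}=e^{\Theta(\sqrt n\log n)}$, and the resulting basis would exceed $e^{33m}$.

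Randomness is needed only for characters of order at least $e^{10d}$. There, Hayes' bias bound of $e^{-10d}$ holds simultaneously for all nontrivial characters. It gives a second-moment saving of $e^{-10d}$ per random coordinate, hence $e^{-5d^2}$ per character, which beats the at most $N\leq e^{(\ln 2)d^2}$ characters.

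Once the main term is fixed, the rest of your plan matches the paper: Markov's inequality with the deterministic lower bound $V(h)\geq(2h+1)^m/|\mathcal A|$, the two-scale comparison, and the successive-minima basis bound.
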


The proof is given in Theorem~\ref{thm:short-basis-detailed}.

For direct uniform sampling, every nontrivial character of the sampling subgroup has mean zero. Using this identity in the lattice-point argument gives the following bound.

\begin{corollary}[Short Basis under Uniform Subgroup Sampling]
\label{cor:uniform-short-basis}
Let $G$ be a finite Abelian group with $|G|\leq2^n$,
where $n\geq0$ is an integer.
Let $K\leq G$ be fixed.
Fix a subgroup $S\leq G$ from which the random elements
are sampled.
Let $d\geq\lceil\sqrt n\rceil$ and $0\leq b\leq d$,
and put $m=d+b\geq10$.
Fix $w_1,\ldots,w_b\in S$ before sampling.
Choose $g_1,\ldots,g_d$ independently and uniformly from $S$.

Let $\Phi:\mathbb Z^m\to G$ be the evaluation homomorphism
determined by these elements.
The lattice $\mathcal L=\Phi^{-1}(K)$ has an integral basis
$\mathbf b_1,\ldots,\mathbf b_m$ satisfying
$\max_{1\leq j\leq m}\|\mathbf b_j\|_2<e^{33m}$
with probability at least $1-2e^{-2m^2}$.
The probability is over the choice of $g_1,\ldots,g_d$.
The same statement applies when the lattice under
consideration is the relation lattice $\ker\Phi$.
\end{corollary}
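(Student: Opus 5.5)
The argument is Pilatte's lattice-point counting, run directly on uniform samples from $S$. Mean-zero characters replace the Hayes estimate, and the fixed elements $w_j$ are handled deterministically. The outcome is a bound of the form $\lambda_m(\mathcal L)\le e^{cm}$, which we then convert into a basis bound.

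For a box size $T$, let $\Lambda$ denote either $\mathcal L=\Phi^{-1}(K)$ or $\ker\Phi$ (the second is the case $K=\{0_G\}$). Let
\[
N_T=\#\{\mathbf x\in\mathbb Z^m:\|\mathbf x\|_\infty\le T,\ \Phi(\mathbf x)\in K\}.
\]
Character orthogonality on $G$, with the annihilator $K^\perp\subseteq\widehat G$, gives
\[
N_T=\frac1{|K^\perp|}\sum_{\chi\in K^\perp}\prod_{i=1}^d F_T(\chi(g_i))\prod_{j=1}^b F_T(\chi(w_j)),
\]
where $F_T(\zeta)=\sum_{|x|\le T}\zeta^x$ and $|F_T|\le 2T+1$. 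The principal character contributes $(2T+1)^m/|K^\perp|$, and $|K^\perp|\le2^n$. For every character we bound each $w_j$ factor by $2T+1$. This is where the estimate becomes uniform over the fixed batch, and why we only need $b\le d$.

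We split the remaining characters according to the order of their restriction to $S$. If $\chi|_S$ is trivial, then every $g_i$ factor is $2T+1$. This part is harmless, because it only adds a positive amount to the count and does not affect a lower bound. If $\chi|_S$ is nontrivial, then $\chi(g)$ for uniform $g\in S$ is uniform on the $k$-th roots of unity, $k\ge2$. In that case
\[
\mathbb E|F_T(\chi(g))|^2=\frac1k\sum_{\zeta^k=1}|F_T(\zeta)|^2 = \#\{(x,y):|x|,|y|\le T,\ k\mid x-y\}\le(2T+1)\Bigl(\frac{2T+1}{k}+1\Bigr).
\]
Independence then gives $\mathbb E\prod_i|F_T(\chi(g_i))|^2\le(2T+1)^{2d}(1/k+1/(2T+1))^d$. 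With $T$ around $e^{\Theta(m)}$, this is at most $(2T+1)^{2d}2^{-d}$ roughly, and it is much smaller for large $k$. Summing over the at most $2^n\le 2^{m^2}$ characters and applying Cauchy--Schwarz/Markov, the total error is at most $e^{-2m^2}$-smaller than the main term with failure probability $e^{-2m^2}$. This step is the main obstacle. Since $n$ can be as large as $m^2$, the per-character saving $2^{-d/2}$ does not beat the $2^n$ characters, so we need a second-moment bound on the \emph{whole} sum. That sum is $\sum_{k}\#\{\chi:\text{order }k\}\cdot k^{-d}$-type, and we control it using that the number of characters of order $k$ is at most $k^{r(S)}$. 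For small $k$ we also use the threshold structure. I expect the precise choice of box sizes $T_1<T_2$ (ratio $e^{\Theta(m)}$) to be dictated by making this sum at most $e^{-4m^2}$. This is what ultimately yields the constant $33$, with the allowance of two failure events of probability $e^{-2m^2}$ each, one per box size.

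The two counts are then compared. With high probability, $N_{T}$ is within a factor $2$ of its main term $(2T+1)^m/[\mathbb Z^m:\Lambda]$ at both $T_1$ and $T_2$. Suppose the vectors of $\Lambda$ in the $T_2$-box spanned a subspace $V$ of dimension $<m$. Then the count $N_{T_2}$ would grow at most like $(T_2/T_1)^{m-1}N_{T_1}$ up to $2^{O(m)}$ factors (counting lattice points of $\Lambda\cap V$ in boxes). That contradicts the ratio $(T_2/T_1)^m$ of the main terms once $T_2/T_1\ge e^{Cm}$. Hence $\lambda_m(\Lambda)\le\sqrt m\,T_2\le e^{cm}$. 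Finally we use the standard fact (Cassels) that a lattice has a basis with $\|\mathbf b_j\|\le\max(1,\sqrt j/2)\lambda_j$. This gives an integral basis of norm $\le\sqrt m\,\lambda_m<e^{33m}$ for $m\ge10$, completing the proof for both $\Phi^{-1}(K)$ and $\ker\Phi$.
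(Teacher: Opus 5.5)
Your treatment of the characters of order $>e^{10m}$ matches the paper. The exact second moment $\#\{(x,y):k\mid x-y\}\le2(2T+1)^2e^{-10m}$ saves $e^{-5dm}$ per character, which beats the at most $|S|\le2^{d^2}$ characters, and Markov gives failure probability $e^{-2m^2}$ per scale. The two-scale comparison and the passage from $\lambda_m$ to a basis are also fine; using Cassels' bound instead of Pilatte's Lemma~3.17 is an acceptable substitute.

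The gap is in the low-order characters, and the remedy you sketch cannot work. You take the principal character (or the characters trivial on $S$) as a deterministic main term. You then try to absorb every character nontrivial on $S$ into a random error, controlling $\sum_k\#\{\chi:\operatorname{ord}\chi=k\}\,k^{-d}$ through $\#\le k^{r(S)}$. Since $r(S)$ can be of order $n\approx d^2$, this sum is huge. For example, take $S=G=(\mathbb Z/2\mathbb Z)^n$, $K=\{0\}$, $b=0$. With probability close to one the count is about $(2T+1)^d/2^d$, while the principal term is $(2T+1)^d/2^n$, so your ``error'' exceeds the main term by a factor $2^{n-d}$. Moreover, $\det\Lambda$ itself fluctuates on events far more likely than $e^{-2m^2}$. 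For $S=\mathbb Z/2\mathbb Z\times\mathbb Z/p\mathbb Z$, all $g_i$ fall into the index-$2$ subgroup with probability $2^{-d}$. So the main term has to be the random density $V(T)=(2T+1)^m/\det\Lambda$. Your later claim that $N_T$ is within a factor $2$ of $(2T+1)^m/[\mathbb Z^m:\Lambda]$ is the right target, but it does not follow from your decomposition.

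The missing idea is the deterministic Lemma~\ref{lem:main-term}: split characters by their values on the evaluation elements rather than by their restriction to $S$.
\begin{itemize}
\item The characters trivial on every $g_i$ and $w_j$ (the set $\mathcal A_0$) contribute exactly $V(T)$.
\item Suppose $\chi$ has order at most $e^{10m}$ and is nontrivial exactly on a set $I$ of evaluation elements. Then each factor indexed by $I$ has modulus at most $e^{10m}$.
\item The restriction map $\chi\mapsto(\chi(g_1),\ldots,\chi(g_d),\chi(w_1),\ldots,\chi(w_b))$ has nonempty fibres of size $|\mathcal A_0|$, and there are at most $e^{20m|I|}$ admissible value patterns.
\end{itemize}
Hence the low-order contribution is at most $V(T)\bigl[(1+e^{30m}/(2T+1))^m-1\bigr]$ for every outcome of the samples. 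This bound, not the high-order sum, is what dictates $T_1\approx m\,e^{31m}$ and hence the constant $33$. Only the characters of order $>e^{10m}$ are handled probabilistically, with Markov applied against the deterministic lower bound $V(T)\ge(2T+1)^m/|S|$.
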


The group-decomposition algorithm uses the bound for
the relation lattice $\ker\Phi$.
Initialization uses uniform sampling from $G$ with $b=0$.
In a subgroup extension, the subgroup used for the
uniform-sampling comparison is
$S=\langle H^{+},w_1,\ldots,w_b\rangle$.
The elements $w_1,\ldots,w_b$ are fixed during this sampling.

We now prove Theorem~\ref{thm:short-basis} for $d$ random evaluation elements and $b$ fixed elements. The result is restated in Theorem~\ref{thm:short-basis-detailed} below. We first identify the quotient whose order is the lattice determinant. Character orthogonality then expresses the number of lattice points in a box, and separate estimates control the low- and high-order character contributions. Counts at two scales give linearly independent short vectors, from which a basis is obtained. The character identities are stated for the general family $\mathcal L_M$; the final short-basis results use the fixed choice $M=1$.

Pilatte's analysis of Regev-type period lattices proceeds by expressing bounded lattice-point counts through character sums, separating the low- and high-order character contributions, and using point counts at two scales to obtain a full set of linearly independent short vectors~\cite[Section~3]{Pilatte26}. In the arithmetic setting of~\cite{Pilatte26}, the required cancellation is obtained from number-theoretic estimates for the small-prime sampling set. For a general finite Abelian group, this arithmetic input is not available. We instead use harmonic analysis on finite Abelian groups~\cite{Terras99} together with Hayes' Fourier-coefficient estimate for random subsets~\cite{BabaiNotes,Hayes2001} to control the high-order character contribution. The final conversion from lattice-point estimates to a short basis uses the corresponding geometry-of-numbers argument~\cite{Pilatte26,Cassels59}. The fixed elements $w_1,\ldots,w_b$ are included in the relation lattice from the outset. Thus its total dimension is $m=d+b$, whereas probabilistic cancellation is supplied only by the $d$ random coordinates.

\subsubsection{Characterization of the Lattice Determinant}
\label{sec:lattice-determinant}

We first identify the relation lattice associated with the hidden-subgroup oracle and compute its determinant. The determinant gives the normalization factor for the lattice-point estimates used later.

Let $G$ be a finite Abelian group and let $K\leq G$ be the hidden subgroup of an HSP instance $f:G\to\mathcal Y$. Let $d\geq1$ and $b\geq0$, let $g_1,\ldots,g_d,w_1,\ldots,w_b\in G$, and put $m=d+b$. Define the evaluation homomorphism $\Phi:\mathbb Z^d\times\mathbb Z^b\longrightarrow G$ by
\[
\Phi(\mathbf z,\mathbf u)
=\sum_{i=1}^{d}z_i g_i+\sum_{j=1}^{b}u_j w_j.
\]
By the HSP promise, \(f(\Phi(\mathbf z,\mathbf u))=f(0) \quad\Longleftrightarrow\quad \Phi(\mathbf z,\mathbf u)\in K.\) Hence the period lattice of the composed map $f\circ\Phi$ is
\[
\mathcal L
=\Phi^{-1}(K)
=\left\{(\mathbf z,\mathbf u)\in\mathbb Z^d\times\mathbb Z^b:
\sum_{i=1}^{d}z_i g_i+\sum_{j=1}^{b}u_j w_j\in K\right\}.
\]
Let \(H:=\langle g_1,\ldots,g_d,w_1,\ldots,w_b\rangle.\) The following lemma gives the determinant of this lattice.

\begin{lemma}[Determinant of the Kernel Lattice]\label{lem:lattice-determinant}
Let $G$ be a finite Abelian group, let $K\leq G$, let $d\geq1$ and $b\geq0$, and let $g_1,\ldots,g_d,w_1,\ldots,w_b\in G$. With $m$, $\Phi$, $\mathcal L$, and $H$ defined above, $\mathcal L$ is a full-rank sublattice of $\mathbb Z^m$ and
$
\mathbb Z^m/\mathcal L\cong H/(H\cap K).
$
Consequently,
$
\det\mathcal L=|H/(H\cap K)|\leq |G|.
$
\end{lemma}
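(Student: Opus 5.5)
The plan is to compose $\Phi$ with the quotient map and read off the lattice as a kernel. Let $\pi:H\to H/(H\cap K)$ be the canonical projection. Since every $g_i$ and $w_j$ lies in $H$, the homomorphism $\Phi$ takes values in $H$. I therefore consider the composite
\[
\pi\circ\Phi:\ \mathbb Z^m\longrightarrow H/(H\cap K).
\]
This composite is surjective, because $\Phi$ maps onto $H$ by the definition of $H$ as the subgroup generated by the evaluation elements, and $\pi$ is onto.

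The next step is to identify its kernel with $\mathcal L$. A vector $(\mathbf z,\mathbf u)$ lies in $\ker(\pi\circ\Phi)$ exactly when $\Phi(\mathbf z,\mathbf u)\in H\cap K$. Since $\Phi(\mathbf z,\mathbf u)$ already lies in $H$, this is the same as $\Phi(\mathbf z,\mathbf u)\in K$, that is, $(\mathbf z,\mathbf u)\in\Phi^{-1}(K)=\mathcal L$. The First Isomorphism Theorem then gives
\[
\mathbb Z^m/\mathcal L\cong H/(H\cap K).
\]

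For full rank: $\mathcal L$ is a subgroup of $\mathbb Z^m$ of finite index $|H/(H\cap K)|\le|G|$. Concretely, $|G|\,\mathbb Z^m\subseteq\mathcal L$, since $|G|$ annihilates every element of $G$. Hence $\mathcal L$ contains $m$ linearly independent vectors $|G|\mathbf e_i$, and it is a full-rank lattice.

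Finally, by the fact recalled in the preliminaries, the determinant of a full-rank sublattice of $\mathbb Z^m$ equals its index. This gives
\[
\det\mathcal L=[\mathbb Z^m:\mathcal L]=|H/(H\cap K)|\le|H|\le|G|.
\]
No step is a real obstacle. The only point needing care is checking that restricting the codomain to $H$ is legitimate, so that membership in $K$ coincides with membership in $H\cap K$.
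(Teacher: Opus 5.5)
Your proof is correct and matches the paper's argument: both use the surjection $\mathbb Z^m\to H/(H\cap K)$, identify its kernel with $\mathcal L$ via $\Phi(\mathbb Z^m)\subseteq H$, and then apply the First Isomorphism Theorem together with $\det\mathcal L=[\mathbb Z^m:\mathcal L]$. The differences are minor. You certify full rank through the explicit vectors $|G|\mathbf e_i\in\mathcal L$, whereas the paper argues that a finite-index subgroup must have rank $m$. You also bound $|H/(H\cap K)|\le|H|$ directly, whereas the paper uses $H/(H\cap K)\cong(H+K)/K\le G/K$.
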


\begin{proof}
Consider the homomorphism $\overline\Phi:\mathbb Z^d\times\mathbb Z^b\longrightarrow H/(H\cap K)$ defined by
\[
(\mathbf z,\mathbf u)\longmapsto
\Phi(\mathbf z,\mathbf u)+(H\cap K).
\]
Since $\Phi(\mathbb Z^d\times\mathbb Z^b)=H$, the map $\overline\Phi$ is surjective. Moreover, $\Phi(\mathbf z,\mathbf u)$ always lies in $H$, and therefore
\[
\begin{aligned}
\ker\overline\Phi
&=\{(\mathbf z,\mathbf u):\Phi(\mathbf z,\mathbf u)\in H\cap K\}\\
&=\{(\mathbf z,\mathbf u):\Phi(\mathbf z,\mathbf u)\in K\}
=\mathcal L.
\end{aligned}
\]
The First Isomorphism Theorem gives \((\mathbb Z^d\times\mathbb Z^b)/\mathcal L \cong H/(H\cap K).\) Identifying $\mathbb Z^d\times\mathbb Z^b$ with $\mathbb Z^m$ yields the stated quotient isomorphism. Since $H/(H\cap K)$ is finite, $\mathcal L$ has finite index in $\mathbb Z^m$ and hence rank $m$. For a full-rank sublattice of $\mathbb Z^m$, the determinant equals the index, so \(\det\mathcal L =[\mathbb Z^m:\mathcal L] =|H/(H\cap K)|.\) Finally, the canonical isomorphism \(H/(H\cap K)\cong(H+K)/K\) identifies this quotient with a subgroup of $G/K$. Hence \(\det\mathcal L =|H/(H\cap K)| \leq |G/K| \leq |G|.\) \end{proof}

\begin{remark}\label{rem:determinant-bound}
In particular, $\det\mathcal L\leq |G|\leq2^n$. This controls the covolume of $\mathcal L$, but it does not give a uniform bound for all successive minima; see, for example, Minkowski's second theorem~\cite{Cassels59}. The lattice-point estimates at two scales established below provide the stronger information needed to obtain a full short basis.
\end{remark}

\subsubsection{Fourier Analytic Estimation of Lattice Points}
\label{sec:lattice-point-counting}

The determinant identity above gives the normalization for the lattice-point count. We now express that count by character orthogonality on a finite Abelian quotient, following the harmonic-analytic form of Pilatte's argument~\cite{Pilatte26,Terras99}. We keep an arbitrary integer multiplier $M\geq1$ in these identities, so they apply both to the target lattice and to the scaled evaluation map. The trivial-character contribution gives the density term; the low-order contribution is bounded deterministically, while the high-order contribution is estimated in Section~\ref{sec:high-order-characters} using the random coordinates.

We begin with the character-sum product that arises after the coordinate sums are separated.

\begin{definition}[Character Sum Product]\label{def:character-product}
Let $G$ be a finite Abelian group, let $d\geq1$ and $b\geq0$, and let
$g_1,\ldots,g_d,w_1,\ldots,w_b\in G$. Let $\chi\in\widehat G$ and let $h\geq1$ be an integer. Define
\[
F_\chi(h;g_1,\ldots,g_d;w_1,\ldots,w_b)
:=\prod_{i=1}^{d}\left(\sum_{v=-h}^{h}\chi(g_i)^v\right)
  \prod_{j=1}^{b}\left(\sum_{v=-h}^{h}\chi(w_j)^v\right).
\]
When $b=0$, the second product is understood to be $1$. If the evaluation elements are fixed, we write simply $F_\chi(h)$.
\end{definition}

To apply character orthogonality after multiplying the evaluation elements by an integer $M$, we identify characters of the subgroup $MG$ with the corresponding $M$-th powers of characters of $G$. We use the following result of Pilatte~\cite{Pilatte26}.

\begin{lemma}[Dual of the Subgroup of Multiples, Pilatte~\cite{Pilatte26}]\label{lem:dual-of-multiples}
Let $G$ be a finite Abelian group and let $M\geq1$ be an integer. Write $MG:=\{Mg:g\in G\}$ for the subgroup of multiples of $M$. For the dual group, set $\widehat G^{\,M}:=\{\chi^M:\chi\in\widehat G\}$. The map $\iota:\widehat{MG}\longrightarrow\widehat G$ defined by
$
\iota(\psi)(g):=\psi(Mg)
$
is a group isomorphism from $\widehat{MG}$ onto $\widehat G^{\,M}$.
\end{lemma}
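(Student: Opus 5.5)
We first check that $\iota$ is well defined. For $\psi\in\widehat{MG}$, the function $g\mapsto\psi(Mg)$ is a homomorphism $G\to\mathbb C^\times$, because $g\mapsto Mg$ is a homomorphism $G\to MG$ and $\psi$ is one as well. Hence $\iota(\psi)\in\widehat G$. Pointwise multiplication shows directly that $\iota(\psi_1\psi_2)=\iota(\psi_1)\iota(\psi_2)$, so $\iota$ is a group homomorphism into $\widehat G$. It remains to show that $\iota$ is injective and that its image is exactly $\widehat G^{\,M}$.

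For injectivity, suppose $\iota(\psi)=\chi_0$. Then $\psi(Mg)=1$ for every $g\in G$. Since every element of $MG$ has the form $Mg$, this forces $\psi$ to be trivial on $MG$.

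For the image, we prove both inclusions.

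First, let $\psi\in\widehat{MG}$. Section~\ref{sec:groups-characters-lattices} shows that the restriction map $\widehat G\to\widehat{MG}$ is surjective, so $\psi$ extends to some $\chi\in\widehat G$ with $\chi|_{MG}=\psi$. Then for every $g$,
\[
\iota(\psi)(g)=\chi(Mg)=\chi(g)^M=\chi^M(g),
\]
so $\iota(\psi)=\chi^M\in\widehat G^{\,M}$.

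Conversely, let $\chi\in\widehat G$ and put $\psi=\chi|_{MG}$. The same computation gives $\iota(\psi)=\chi^M$. Hence every element of $\widehat G^{\,M}$ lies in the image.

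Combining these steps, $\iota$ is an isomorphism $\widehat{MG}\to\widehat G^{\,M}$.

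There is no substantial obstacle here. The only nontrivial input is the extension of characters from the subgroup $MG$ to $G$, and that result is already available in the preliminaries. It is worth noting why extension is needed rather than a naive choice: a proposed preimage of $\chi^M$ such as $Mg\mapsto\chi(g)^M$ must be shown independent of the representative $g$. Defining $\psi$ as the restriction $\chi|_{MG}$ avoids this issue, because it is defined directly on elements of $MG$ and is automatically well defined. As a consistency check, injectivity plus surjectivity give $|\widehat G^{\,M}|=|MG|$, which matches the kernel of $\chi\mapsto\chi^M$ on $\widehat G$ being $\widehat{G/MG}$.
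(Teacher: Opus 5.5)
Your proof is correct and complete. The paper gives no proof of this lemma; it only cites Pilatte, so there is no competing argument to compare against. Your argument makes the lemma self-contained using only the surjectivity of the restriction map $\widehat G\to\widehat H$ already recorded in Section~\ref{sec:groups-characters-lattices}. It proceeds in four steps: the homomorphism check; injectivity, since every element of $MG$ has the form $Mg$; the inclusion $\operatorname{im}\iota\subseteq\widehat G^{\,M}$, obtained by extending $\psi$ to some $\chi\in\widehat G$; and the reverse inclusion, from $\iota(\chi|_{MG})=\chi^M$.

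You correctly note that extension is needed only for the first inclusion. If you wanted to avoid it, your closing count also finishes the proof: $\widehat G^{\,M}\subseteq\operatorname{im}\iota$ together with $|\operatorname{im}\iota|=|MG|=|\widehat G^{\,M}|$ gives equality. That count, however, rests on $|\widehat A|=|A|$ for finite Abelian $A$, which is the same fact from which character extension is usually derived. So nothing is really gained.
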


We now set up the scaled relation lattice whose points will be counted. Let $K\leq G$, let $M,h\geq1$ be integers, and put $m=d+b$. For fixed evaluation elements $g_1,\ldots,g_d,w_1,\ldots,w_b\in G$, define
\[
\mathcal L_M
:=\left\{(\mathbf z,\mathbf u)\in\mathbb Z^d\times\mathbb Z^b:
\sum_{i=1}^{d}z_i(Mg_i)+\sum_{j=1}^{b}u_j(Mw_j)\in K\right\}.
\] Every sum in the definition of $\mathcal L_M$ lies in $MG$, so membership in $K$ is equivalent to membership in \(K':=MG\cap K.\) Define the annihilator
\[
(K')^\perp
:=\{\psi\in\widehat{MG}:\psi(x)=1\text{ for every }x\in K'\}.
\]
Its image under $\iota$ is denoted by
$
\mathcal A:=\iota\bigl((K')^\perp\bigr)\subseteq\widehat G^{\,M}.
$
By Lemma~\ref{lem:dual-of-multiples}, \(|\mathcal A|=|(K')^\perp|=|MG/K'|.\) For $y\in MG$, define the indicator of $K'$ by
\[
\mathbf1_{K'}(y)
:=
\begin{cases}
1,& y\in K',\\
0,& y\notin K'.
\end{cases}
\]
Character orthogonality on the finite quotient $MG/K'$ gives
\begin{equation}\label{eq:subgroup-indicator}
\mathbf1_{K'}(y)
=\frac{1}{|MG/K'|}\sum_{\psi\in(K')^\perp}\psi(y).
\end{equation}
The next lemma applies this identity to the lattice-point count.

\begin{lemma}[Lattice Point Count via Character Sums]\label{lem:lattice-point-count}
Let $G$ be a finite Abelian group, let $K\leq G$, let $d\geq1$ and $b\geq0$, and let $M,h\geq1$ be integers. Let $g_1,\ldots,g_d,w_1,\ldots,w_b\in G$, and let $m$, $\mathcal L_M$, $K'$, and $\mathcal A$ be defined as above. Then
\begin{equation}\label{eq:character-count-formula}
|\mathcal L_M\cap[-h,h]^m|
=\frac{1}{|\mathcal A|}\sum_{\chi\in\mathcal A}
F_\chi(h;g_1,\ldots,g_d;w_1,\ldots,w_b).
\end{equation}
\end{lemma}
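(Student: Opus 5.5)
The plan is to write the lattice-point count as a sum of indicators and expand each indicator by character orthogonality. Since $\mathcal L_M$ is defined by $\sum_i z_i(Mg_i)+\sum_j u_j(Mw_j)\in K'$, we have
\[
|\mathcal L_M\cap[-h,h]^m|
=\sum_{(\mathbf z,\mathbf u)\in[-h,h]^m}
\mathbf 1_{K'}\!\Bigl(\sum_{i=1}^d z_iMg_i+\sum_{j=1}^b u_jMw_j\Bigr).
\]
The argument of the indicator lies in $MG$, so Equation~\eqref{eq:subgroup-indicator} applies. It replaces the indicator by $|MG/K'|^{-1}\sum_{\psi\in(K')^\perp}\psi(\cdot)$, and the first step is to record this substitution.

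The second step transports the sum from $\widehat{MG}$ to $\widehat G$ using Lemma~\ref{lem:dual-of-multiples}. For $\psi\in(K')^\perp$, set $\chi=\iota(\psi)$. Then $\psi(Mx)=\chi(x)$ for every $x\in G$. Because $\psi$ is a homomorphism, we get
\[
\psi\Bigl(\sum_i z_iMg_i+\sum_j u_jMw_j\Bigr)=\prod_i\chi(g_i)^{z_i}\prod_j\chi(w_j)^{u_j}.
\]
Since $\iota$ is injective, summing over $\psi\in(K')^\perp$ is the same as summing over $\chi\in\mathcal A$. The normalization also matches, because $|\mathcal A|=|(K')^\perp|=|MG/K'|$, which was recorded just before the lemma.

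The final step interchanges the finite sums over the box and over $\mathcal A$. The box $[-h,h]^m$ is a product of intervals and the summand factors over the coordinates. The sum over $(\mathbf z,\mathbf u)$ therefore splits into $\prod_{i}\sum_{v=-h}^h\chi(g_i)^v\cdot\prod_j\sum_{v=-h}^h\chi(w_j)^v$, which is exactly $F_\chi(h;g_1,\ldots,g_d;w_1,\ldots,w_b)$ from Definition~\ref{def:character-product}. When $b=0$ the empty product is $1$, consistent with that definition. This yields Equation~\eqref{eq:character-count-formula}.

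There is no real obstacle here; the only point needing care is the identification $\psi(My)=\iota(\psi)(y)$. It must be applied to the element $\sum z_ig_i+\sum u_jw_j\in G$, not directly to the element of $MG$, so that the pullback characters evaluate on the original $g_i,w_j$.
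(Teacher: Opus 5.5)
Your proposal is correct and follows essentially the same route as the paper: write the count as a sum of indicators of $K'$, expand by orthogonality on $MG/K'$, transport through $\iota$ using $\psi(Mx)=\iota(\psi)(x)$ together with $|\mathcal A|=|MG/K'|$, and factor the box sum into $F_\chi(h;g_1,\ldots,g_d;w_1,\ldots,w_b)$. The only cosmetic difference is that the paper first factors $\psi$ over the summands and then applies $\psi(Mg_i)=\chi(g_i)$ and $\psi(Mw_j)=\chi(w_j)$ termwise, whereas you pull out $M$ first; the two orderings are equivalent.
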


\begin{proof}
By the definition of $\mathcal L_M$ and the observation preceding Equation~\eqref{eq:subgroup-indicator},
\[
|\mathcal L_M\cap[-h,h]^m|
=
\sum_{\substack{\mathbf z\in[-h,h]^d\cap\mathbb Z^d\\
                  \mathbf u\in[-h,h]^b\cap\mathbb Z^b}}
\mathbf1_{K'}\!\left(
\sum_{i=1}^{d}z_i(Mg_i)+\sum_{j=1}^{b}u_j(Mw_j)
\right).
\]
Substituting Equation~\eqref{eq:subgroup-indicator} gives
\begin{align*}
|\mathcal L_M\cap[-h,h]^m|
&=\frac{1}{|MG/K'|}
\sum_{\psi\in(K')^\perp}
\sum_{\mathbf z,\mathbf u}
\psi\!\left(
\sum_{i=1}^{d}z_i(Mg_i)+\sum_{j=1}^{b}u_j(Mw_j)
\right),
\end{align*}
where the inner sum is over the same integer box. Since $\psi$ is a character, the inner summand factors as \(\prod_{i=1}^{d}\psi(Mg_i)^{z_i} \prod_{j=1}^{b}\psi(Mw_j)^{u_j}.\) For $\chi=\iota(\psi)$, Lemma~\ref{lem:dual-of-multiples} gives $\psi(Mg_i)=\chi(g_i)$ for the elements $g_i$ and $\psi(Mw_j)=\chi(w_j)$ for the fixed elements $w_j$. The coordinate sums therefore separate:
\begin{align*}
&\sum_{\mathbf z,\mathbf u}
\prod_{i=1}^{d}\chi(g_i)^{z_i}
\prod_{j=1}^{b}\chi(w_j)^{u_j}\\
&\qquad=
\prod_{i=1}^{d}\left(\sum_{z_i=-h}^{h}\chi(g_i)^{z_i}\right)
\prod_{j=1}^{b}\left(\sum_{u_j=-h}^{h}\chi(w_j)^{u_j}\right)
=F_\chi(h).
\end{align*}
Finally, $\iota$ maps $(K')^\perp$ bijectively onto $\mathcal A$, and
$|MG/K'|=|\mathcal A|$. Substituting these identities yields Equation~\eqref{eq:character-count-formula}.
\end{proof}

For the remainder of Section~\ref{sec:lattice-point-counting}, set $b=0$. We first establish the deterministic main-term estimate in dimension $d$, which can later be replaced by the total dimension $m=d+b$. The high-order estimate in Section~\ref{sec:high-order-characters} uses only the $d$ random coordinates. In Theorem~\ref{thm:short-basis-detailed}, the fixed coordinates are included by bounding each additional character sum in absolute value by $2h+1$; they provide no probabilistic cancellation. This separates the deterministic dimension dependence from the number of independent samples, as in the treatment of fixed coordinates in Pilatte's argument~\cite{Pilatte26}.

Accordingly, in what follows
\[
\mathcal L_M
:=\left\{\mathbf z\in\mathbb Z^d:
\sum_{i=1}^{d}z_i(Mg_i)\in K\right\}.
\]
Set $K':=MG\cap K$ and write $\mathcal A:=\iota\bigl((K')^\perp\bigr)$ for the image of its annihilator under $\iota$. Define the subgroup generated by the scaled random evaluation elements by \(S_M:=\langle Mg_1,\ldots,Mg_d\rangle\leq MG.\) The corresponding density term is
\begin{equation}\label{eq:lattice-density}
V(h)
:=\frac{|K'|(2h+1)^d}{|S_M+K'|}
=\frac{(2h+1)^d}{\det\mathcal L_M},
\end{equation}
where the second identity follows from Lemma~\ref{lem:lattice-determinant} applied to the scaled evaluation elements.

We partition $\mathcal A$ according to the action of a character on $g_1,\ldots,g_d$ and according to its order. The characters that are trivial on every evaluation element form
\[
\mathcal A_0
:=\{\chi\in\mathcal A:\chi(g_i)=1\text{ for every }1\leq i\leq d\}.
\]
Among the remaining characters, those of order at most $e^{10d}$ form
\[
\mathcal A_-
:=\{\chi\in\mathcal A\setminus\mathcal A_0:
\operatorname{ord}(\chi)\leq e^{10d}\},
\]
while those of larger order form
\[
\mathcal A_+
:=\{\chi\in\mathcal A\setminus\mathcal A_0:
\operatorname{ord}(\chi)>e^{10d}\}.
\]
These sets form a disjoint partition of $\mathcal A$. We also define
\begin{equation}\label{eq:high-order-term}
E(h)
:=\frac{1}{|\mathcal A|}
\sum_{\chi\in\mathcal A_+}F_\chi(h;g_1,\ldots,g_d),
\end{equation}
and the relative contribution of the low-order characters by
\begin{equation}\label{eq:low-order-error}
\varepsilon(h)
:=\frac{1}{V(h)|\mathcal A|}
\sum_{\chi\in\mathcal A_-}
F_\chi(h;g_1,\ldots,g_d).
\end{equation}
The next lemma evaluates the trivial contribution and bounds the low-order contribution.

\begin{lemma}[Main Term Estimation]\label{lem:main-term}
Let $G$ be a finite Abelian group, let $K\leq G$, let $d\geq1$, and let $M,h\geq1$ be integers. Let $g_1,\ldots,g_d\in G$, and let $\mathcal L_M$, $K'$, $\mathcal A$, $S_M$, $V(h)$, and the character partition be defined as above. Then
\begin{equation}\label{eq:count-decomposition}
|\mathcal L_M\cap[-h,h]^d|
=V(h)\bigl(1+\varepsilon(h)\bigr)
+E(h),
\end{equation}
where
\begin{equation}\label{eq:low-order-bound}
|\varepsilon(h)|
\leq
\left(1+\frac{e^{30d}}{2h+1}\right)^d-1.
\end{equation}
In particular, if $d\geq2$ and $h\geq e^{31d}$, then
$
|\varepsilon(h)|
\leq0.145\frac{e^{31d}}{h}.
$
\end{lemma}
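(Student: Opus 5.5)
The plan is to start from Lemma~\ref{lem:lattice-point-count} with $b=0$:
\[
|\mathcal L_M\cap[-h,h]^d|=\frac1{|\mathcal A|}\sum_{\chi\in\mathcal A}F_\chi(h;g_1,\ldots,g_d),
\]
and to split the sum over the disjoint partition $\mathcal A=\mathcal A_0\sqcup\mathcal A_-\sqcup\mathcal A_+$. By definition, the $\mathcal A_+$ part is exactly $E(h)$, and the $\mathcal A_-$ part is $V(h)\varepsilon(h)$. So the identity~\eqref{eq:count-decomposition} reduces to showing that the $\mathcal A_0$ part equals $V(h)$.

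\textbf{Trivial part.} For $\chi\in\mathcal A_0$ every factor of $F_\chi(h)$ equals $2h+1$, so the contribution is $(|\mathcal A_0|/|\mathcal A|)(2h+1)^d$. Under $\iota$, a character $\chi=\iota(\psi)$ lies in $\mathcal A_0$ exactly when $\psi$ is trivial on $K'$ and on every $Mg_i$. Equivalently, $\psi$ annihilates $S_M+K'$, so $|\mathcal A_0|=|MG/(S_M+K')|$. Since $|\mathcal A|=|MG/K'|$, the ratio is $|K'|/|S_M+K'|$, and the contribution equals $V(h)$ by~\eqref{eq:lattice-density}.

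\textbf{Low-order part.} Fix $\chi\in\mathcal A_-$ and write $t=\operatorname{ord}(\chi)\le T:=e^{10d}$. Each $\chi(g_i)$ is a $t$-th root of unity $\zeta$. If $\zeta=1$ the geometric sum equals $2h+1$. Otherwise,
\[
\Bigl|\sum_{v=-h}^h\zeta^v\Bigr|\le\frac1{|\sin(\pi k/t)|}\le\frac1{\sin(\pi/t)}\le\frac t2\le T,
\]
using $\sin(\pi x)\ge2x$ on $[0,1/2]$. Group the characters by the set $I=\{i:\chi(g_i)=1\}$; since $\chi\notin\mathcal A_0$, we have $I\neq[d]$. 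The characters in $\mathcal A$ trivial on $\{g_i\}_{i\in I}$ form a subgroup $\mathcal A_I$. The map $\chi\mapsto(\chi(g_i))_{i\notin I}$ is a homomorphism on $\mathcal A_I$ with kernel $\mathcal A_0$, so each fiber is a coset of size $|\mathcal A_0|$. For $\chi$ of order at most $T$, each coordinate of the image is a root of unity of order at most $T$. There are at most $\sum_{t\le T}t\le T^2$ such roots, so at most $T^{2(d-|I|)}$ images occur. Hence the characters with pattern $I$ contribute at most
\[
|\mathcal A_0|\,T^{3(d-|I|)}(2h+1)^{|I|}.
\]
Dividing by $V(h)|\mathcal A|=|\mathcal A_0|(2h+1)^d$ and summing over the proper subsets $I$ gives
\[
|\varepsilon(h)|\le\sum_{I\subsetneq[d]}\Bigl(\frac{e^{30d}}{2h+1}\Bigr)^{d-|I|}=\Bigl(1+\frac{e^{30d}}{2h+1}\Bigr)^d-1,
\]
which is~\eqref{eq:low-order-bound}.

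\textbf{Numerical consequence.} Put $x=e^{30d}/(2h+1)$. Then $(1+x)^d-1\le dx\,e^{dx}$. If $h\ge e^{31d}$, then $x\le e^{30d}/(2h)\le e^{-d}/2$, so $dx\le de^{-d}/2\le e^{-2}$ for $d\ge2$, because $de^{-d}$ is decreasing for $d\ge1$. Hence
\[
|\varepsilon(h)|\le\frac{e^{31d}}{h}\cdot\frac{de^{-d}}{2}\,e^{dx}\le\frac{e^{31d}}{h}\cdot e^{-2}e^{e^{-2}}\approx0.1356\,\frac{e^{31d}}{h}<0.145\,\frac{e^{31d}}{h}.
\]

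\textbf{Main obstacle.} I expect the delicate step to be the counting of low-order characters relative to $|\mathcal A_0|$. The coset/fiber argument above handles it, but only if the $T^2$ root-of-unity count and the $t/2$ geometric-sum bound combine to at most $e^{30d}$ per non-trivial coordinate. They do, since $T^3=e^{30d}$.
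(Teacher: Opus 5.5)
Your derivation of \eqref{eq:count-decomposition} and of \eqref{eq:low-order-bound} is correct and follows the paper's argument. You evaluate the $\mathcal A_0$ contribution via $|\mathcal A_0|=|MG/(S_M+K')|$. You bound each nontrivial geometric sum by $e^{10d}$; your $1/\sin(\pi/t)\le t/2$ replaces the paper's complete-block argument. You count characters through the fibers of the evaluation homomorphism, whose kernel is $\mathcal A_0$.

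The gap is in the numerical consequence. Your chain ends with the constant $e^{-2}e^{e^{-2}}$, which is about $0.1549$, not $0.1356$, and so exceeds $0.145$. This is not merely a slip in the last line. Take $d=2$ and $h=\lceil e^{62}\rceil$. Then $dx\approx e^{-2}$ and $e^{31d}/h\approx1$, so the intermediate quantity $dx\,e^{dx}$ is itself about $0.155\,e^{31d}/h$. No re-evaluation of constants rescues that chain. The inequality $e^{z}-1\le z\,e^{z}$ is simply too lossy here: the true value of the bound is $e^{e^{-2}}-1\approx0.14492$, less than $10^{-4}$ below $0.145$, so almost no slack can be given away. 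For $d\ge3$ your chain does work, with constant at most $\frac32e^{-3}\exp\bigl(\frac32e^{-3}\bigr)\approx0.081$. The failure is exactly the case $d=2$. A constant like $0.155$ would still suffice for the later two-scale comparison, but it does not prove the lemma as stated.

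To repair it, keep $y=e^{31d}/h\le1$ as the variable and use convexity instead of $z\,e^{z}$. As you showed, $dx\le\frac{de^{-d}}{2}\,y\le e^{-2}y$. Hence $(1+x)^d-1\le e^{dx}-1\le e^{e^{-2}y}-1$. The function $y\mapsto e^{e^{-2}y}-1$ is convex on $[0,1]$ and vanishes at $0$, so it lies below its secant there. This gives $(e^{e^{-2}}-1)\,y<0.145\,y$, which is exactly the paper's route. Alternatively, handle $d=2$ separately: $(1+x)^2-1=2x+x^2\le\bigl(e^{-2}+e^{-4}/4\bigr)y\approx0.140\,y$. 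Combine this with your argument for $d\ge3$.
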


\begin{proof}
We first evaluate the contribution of $\mathcal A_0$. Under the isomorphism $\iota$, the characters in $\mathcal A_0$ correspond exactly to the characters of $MG$ that are trivial on both $K'$ and $S_M$, hence on $S_M+K'$. Therefore \(|\mathcal A_0| =\frac{|MG|}{|S_M+K'|}.\) For every $\chi\in\mathcal A_0$, each factor in Definition~\ref{def:character-product} equals $2h+1$, so \(F_\chi(h;g_1,\ldots,g_d)=(2h+1)^d.\) Since \(|\mathcal A|=|MG/K'|=\frac{|MG|}{|K'|},\) the normalized contribution of $\mathcal A_0$ is
\[
\frac{1}{|\mathcal A|}
\sum_{\chi\in\mathcal A_0}F_\chi(h;g_1,\ldots,g_d)
=\frac{|K'|(2h+1)^d}{|S_M+K'|}
=V(h).
\]

It remains to control the low-order characters. For $\chi\in\mathcal A_-$, define \(I_\chi:=\{i\in\{1,\ldots,d\}:\chi(g_i)\neq1\}.\) This set is nonempty because $\chi\notin\mathcal A_0$. If $i\in I_\chi$, then $\chi(g_i)$ is a nontrivial root of unity of order at most $e^{10d}$. Let this order be $r$. The sum \(\sum_{v=-h}^{h}\chi(g_i)^v\) can be divided into complete blocks of length $r$, each of which sums to zero, together with a remaining block of fewer than $r$ terms. Hence \(\left|\sum_{v=-h}^{h}\chi(g_i)^v\right| \leq r\leq e^{10d}.\) If $i\notin I_\chi$, then the same sum equals $2h+1$. Therefore
\begin{equation}\label{eq:low-order-product-bound}
|F_\chi(h;g_1,\ldots,g_d)|
\leq e^{10d|I_\chi|}(2h+1)^{d-|I_\chi|}.
\end{equation}

Consider the evaluation homomorphism $\rho:\mathcal A\longrightarrow(\mathbb C^\times)^d$ defined by
\[
\rho(\chi)=(\chi(g_1),\ldots,\chi(g_d)).
\]
Its kernel is $\mathcal A_0$. Consequently, every nonempty fiber of $\rho$ has cardinality $|\mathcal A_0|$. A character of order at most $e^{10d}$ can take on each $g_i$ only a root of unity of order at most $e^{10d}$, and the number of such roots is bounded by \(\sum_{r=1}^{\lfloor e^{10d}\rfloor}\varphi(r) \leq e^{20d}.\) Thus, for any fixed nonempty subset $I\subseteq\{1,\ldots,d\}$, the number of characters $\chi\in\mathcal A_-$ with $I_\chi=I$ is at most \(|\mathcal A_0|e^{20d|I|}.\) Combining this bound with Equation~\eqref{eq:low-order-product-bound} and summing over all nonempty $I\subseteq\{1,\ldots,d\}$ gives
\begin{align*}
\frac{1}{|\mathcal A|}
\sum_{\chi\in\mathcal A_-}
|F_\chi(h;g_1,\ldots,g_d)|
&\leq
\frac{|\mathcal A_0|}{|\mathcal A|}(2h+1)^d
\sum_{\emptyset\neq I\subseteq\{1,\ldots,d\}}
\left(\frac{e^{30d}}{2h+1}\right)^{|I|}\\
&=V(h)
\left[
\left(1+\frac{e^{30d}}{2h+1}\right)^d-1
\right].
\end{align*}
By the definition in Equation~\eqref{eq:low-order-error}, the preceding inequality proves Equation~\eqref{eq:low-order-bound}. Combining the contributions of $\mathcal A_0$, $\mathcal A_-$, and $\mathcal A_+$ in Equation~\eqref{eq:character-count-formula} gives Equation~\eqref{eq:count-decomposition}.

Assume now that $d\geq2$ and $h\geq e^{31d}$, and put \(y:=\frac{e^{31d}}{h}\leq1.\) Then \(\frac{e^{30d}}{2h+1} \leq\frac{e^{30d}}{2h} =\frac{y}{2e^d}.\) Hence
\[
\left(1+\frac{e^{30d}}{2h+1}\right)^d-1
\leq
\exp\!\left(\frac{dy}{2e^d}\right)-1.
\]
For $d\geq2$, one has $d/(2e^d)\leq e^{-2}$, and therefore \(\exp\!\left(\frac{dy}{2e^d}\right)-1 \leq e^{e^{-2}y}-1.\) The function $y\mapsto e^{e^{-2}y}-1$ is convex on $[0,1]$ and vanishes at $0$. Its secant line on this interval therefore gives
\[
e^{e^{-2}y}-1
\leq\bigl(e^{e^{-2}}-1\bigr)y
<0.145y.
\]
Substituting $y=e^{31d}/h$ gives \(|\varepsilon(h)| \leq0.145\frac{e^{31d}}{h},\) as required.
\end{proof}

\subsubsection{Control of High-Order Character Contributions}
\label{sec:high-order-characters}

Section~\ref{sec:lattice-point-counting} separates the lattice-point count into a main term, a controlled low-order contribution, and the high-order term $E(h)$. We now bound the last term. The argument uses the randomness of $g_1,\ldots,g_d$: Hayes' theorem controls the nontrivial Fourier coefficients of the sampling distribution, and this control is converted into a second-moment estimate for the character-sum product. As in Lemma~\ref{lem:main-term}, we continue with $b=0$; fixed evaluation elements are treated separately in the final short-basis theorem.

We first record the Fourier-coefficient estimate that supplies the required cancellation. The result is quoted directly from Hayes~\cite{Hayes2001}; see also~\cite{BabaiNotes}.

\begin{theorem}[Hayes~\cite{BabaiNotes,Hayes2001}]\label{thm:hayes}
Let $G$ be a finite Abelian group of order $N$, let $\eta>0$, and let $t$ be an integer satisfying $1\leq t\leq N/2$. Among the subsets $\mathcal X\subseteq G$ of cardinality $t$, all but an $O(N^{-\eta})$ fraction satisfy
\[
\max_{\chi\in\widehat G\setminus\{\chi_0\}}
\left|\sum_{u\in\mathcal X}\chi(u)\right|
<4\sqrt{(1+\eta)t\ln N},
\]
where $\chi_0$ denotes the trivial character of $G$.
\end{theorem}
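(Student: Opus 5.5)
The plan is a union bound over the nontrivial characters. For each fixed character, the required tail bound comes from Hoeffding's inequality for sampling without replacement.

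\emph{Setup.} Fix $\chi\in\widehat G\setminus\{\chi_0\}$ and choose $\mathcal X$ uniformly among the $t$-subsets of $G$. Write $S_\chi=\sum_{u\in\mathcal X}\chi(u)$. Then $S_\chi$ is a sum of $t$ draws without replacement from the finite population $\{\chi(g):g\in G\}$. Since $\chi$ is nontrivial, orthogonality gives $\sum_{g\in G}\chi(g)=0$. Hence the population means of $\operatorname{Re}\chi$ and of $\operatorname{Im}\chi$ both vanish, and both functions take values in $[-1,1]$.

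\emph{Tail bound for one character.} I would apply Hoeffding's theorem (1963, Theorem~4): for any continuous convex $f$, the expectation of $f$ of a sum drawn without replacement is at most that of the corresponding sum drawn with replacement. Taking $f(x)=e^{sx}$, the usual Chernoff argument applies verbatim and gives, for every $a>0$,
\[
\Pr\bigl[|\operatorname{Re}S_\chi|\ge a\bigr]\le 2e^{-a^2/(2t)},
\]
and the same bound for $\operatorname{Im}S_\chi$. Set $\lambda=4\sqrt{(1+\eta)t\ln N}$. If $|S_\chi|\ge\lambda$, then $|\operatorname{Re}S_\chi|\ge\lambda/\sqrt2$ or $|\operatorname{Im}S_\chi|\ge\lambda/\sqrt2$. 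With $a=\lambda/\sqrt2$ we have $a^2/(2t)=4(1+\eta)\ln N$. Therefore
\[
\Pr\bigl[|S_\chi|\ge\lambda\bigr]\le 4N^{-4(1+\eta)}.
\]

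\emph{Union bound.} There are $N-1$ nontrivial characters, so the probability that some $\chi$ violates the bound is at most
\[
4N\cdot N^{-4-4\eta}\le 4N^{-3-4\eta}=O(N^{-\eta}).
\]
Since $\mathcal X$ is uniform, this probability is exactly the fraction of bad subsets. The hypothesis $t\le N/2$ only ensures that such subsets exist and is not otherwise needed.

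\emph{Main obstacle.} The only delicate point is justifying the sub-Gaussian tail without independence, since the elements of $\mathcal X$ are drawn without replacement. Hoeffding's convex-domination comparison handles this directly. If I preferred to avoid citing it, a fallback is to show that the moment generating function of a without-replacement sum is bounded by the with-replacement one via a martingale (Azuma) argument. The cost would be somewhat worse constants, which the generous factor $4$ easily absorbs.
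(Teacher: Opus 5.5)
Your proof is correct. Orthogonality makes $\operatorname{Re}\chi$ and $\operatorname{Im}\chi$ mean-zero populations with values in $[-1,1]$. Hoeffding's comparison theorem (1963, Theorem~4) then transfers the i.i.d.\ moment generating function bound $e^{ts^2/2}$ to the uniform $t$-subset. The choices $a=\lambda/\sqrt2$ and $a^2/(2t)=4(1+\eta)\ln N$ are right, and so is the union bound $4(N-1)N^{-4-4\eta}\le 4N^{-3-4\eta}=O(N^{-\eta})$.

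The paper gives no proof of this statement; it quotes it from Hayes, with Babai's notes as a secondary reference. In the cited source the estimate comes from a martingale argument. One exposes the elements of $\mathcal X$ one at a time and applies Hayes' Azuma-type large-deviation inequality for vector-valued martingales to the Doob martingale of $\sum_{u\in\mathcal X}\chi(u)$, viewed as a vector in $\mathbb R^2$.

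Your route replaces the martingale by Hoeffding's convex-ordering comparison between sampling without and with replacement. Everything then reduces to a scalar Chernoff bound for independent variables. This is more elementary and gives a much smaller exceptional fraction than stated, namely $O(N^{-3-4\eta})$. The price is splitting into real and imaginary parts, which loses a factor $\sqrt2$ in the threshold; the constant $4$ absorbs this. The vector-valued martingale route handles the complex sum directly, without splitting coordinates. It also applies to sequential exposure processes where no comparison with independent sampling is available.

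One small correction: $t$-subsets exist for every $t\le N$, so the hypothesis $t\le N/2$ is not needed for existence either. Your argument simply never uses it.
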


We now apply Hayes' estimate to the random-subset experiment used in the short-basis theorem. Let $N=|G|$ and $n=\lceil\log_2N\rceil$. For a fixed constant $c>0$, set $d=\lceil\sqrt n\rceil+\lceil c\log n\rceil$ and $X=\lceil n^{500\sqrt n}\rceil$. For sufficiently large $n$, choose $\mathcal X\subseteq G$ uniformly among the subsets of cardinality $X$, and, conditional on $\mathcal X$, choose $g_1,\ldots,g_d$ independently and uniformly from $\mathcal X$.

Write $\mathbf g=(g_1,\ldots,g_d)$. For a function $Y:G^d\to\mathbb C$, define
\begin{equation}\label{eq:conditional-expectation}
\mathbb E_{\mathbf g\mid\mathcal X}[Y(g_1,\ldots,g_d)]
:=\frac{1}{X^d}
\sum_{(x_1,\ldots,x_d)\in\mathcal X^d}
Y(x_1,\ldots,x_d).
\end{equation}
For a function $Z:G\to\mathbb C$, similarly set \(\mathbb E_{g\mid\mathcal X}[Z(g)] :=\frac1X\sum_{x\in\mathcal X}Z(x).\) The probability in the following proposition concerns the choice of $\mathcal X$; the conditional expectations concern the samples drawn from that fixed subset. For $K\leq G$ and an integer $M\geq1$ fixed before sampling, retain $K'=MG\cap K$ and $\mathcal A=\iota((K')^\perp)\subseteq\widehat G^{\,M}$ from Section~\ref{sec:lattice-point-counting}.

\begin{proposition}[High-Order Character Bound]\label{prop:high-order-moment}
Let $G$ be a finite Abelian group of order $N$, let $K\leq G$, and let $M\geq1$ be an integer fixed before sampling. Use the random-subset experiment and conditional expectations defined above, with $n=\lceil\log_2N\rceil$ and $d=\lceil\sqrt n\rceil+\lceil c\log n\rceil$ for fixed $c>0$. Then, for all sufficiently large $n$ depending only on $c$, with probability at least $1-O(N^{-15})$ over $\mathcal X$, one has
\begin{equation}\label{eq:high-order-moment}
\sum_{\substack{\chi\in\mathcal A\\
\operatorname{ord}(\chi)\geq e^{10d}}}
\left(
\mathbb E_{\mathbf g\mid\mathcal X}
\left|F_\chi(h;g_1,\ldots,g_d)\right|^2
\right)^{1/2}
\ll h^de^{-4d^2}
\end{equation}
simultaneously for all integers $h\geq e^{31d}$, with an absolute implied constant.
\end{proposition}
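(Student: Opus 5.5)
We will condition on the Hayes event and then bound each term of the sum separately. Apply Theorem~\ref{thm:hayes} with $t=X$ and $\eta=15$; this requires $X\le N/2$, which holds for large $n$. Outside an event of probability $O(N^{-15})$ over $\mathcal X$, every nontrivial $\chi\in\widehat G$ then satisfies $|\sum_{u\in\mathcal X}\chi(u)|<4\sqrt{16X\ln N}$. Dividing by $X$ gives
\[
\bigl|\mathbb E_{g\mid\mathcal X}\chi(g)\bigr|\le\beta:=16\sqrt{\ln N/X}.
\]
Fix $\mathcal X$ in this event. Everything below is deterministic in $\mathcal X$ and valid for all $h\ge e^{31d}$ simultaneously, so no union bound over $h$ is needed.

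Next we compute the second moment. Since the $g_i$ are conditionally i.i.d.\ given $\mathcal X$,
\[
\mathbb E_{\mathbf g\mid\mathcal X}|F_\chi(h)|^2=\bigl(\mathbb E_{g\mid\mathcal X}|D_h(\chi(g))|^2\bigr)^d,\qquad
|D_h(\zeta)|^2=\sum_{|v|\le 2h}(2h+1-|v|)\zeta^v,
\]
where $D_h(\zeta)=\sum_{|v|\le h}\zeta^v$. Therefore
\[
\mathbb E_{g\mid\mathcal X}|D_h(\chi(g))|^2=\sum_{|v|\le2h}(2h+1-|v|)\,\mathbb E\chi^v(g).
\]
Let $k=\operatorname{ord}(\chi)>e^{10d}$. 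For $v\equiv0\pmod k$ the character $\chi^v$ is trivial and contributes the full weight $2h+1-|v|$. There are at most $4h/k+1$ such $v$, so these terms contribute at most $(2h+1)(4h/k+1)$. For every other $v$ the character $\chi^v$ is nontrivial, so its term is bounded by $(2h+1)\beta$, and summing over all such $v$ gives at most $(2h+1)^2\beta$. Consequently
\[
\mathbb E|D_h|^2\le(2h+1)^2\Bigl(\frac{2}{k}+\frac1{2h+1}+\beta\Bigr)\le(2h+1)^2\bigl(3e^{-10d}+\beta\bigr),
\]
where the last step uses $h\ge e^{31d}$ and $k>e^{10d}$. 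Taking square roots,
\[
\bigl(\mathbb E|F_\chi|^2\bigr)^{1/2}\le(2h+1)^d\bigl(3e^{-10d}+\beta\bigr)^{d/2}.
\]
Because $X=n^{500\sqrt n}$ and $d\sim\sqrt n$, we have $\beta\le X^{-1/3}\ll e^{-10d}$, so this is at most $(2h+1)^d(4e^{-10d})^{d/2}\ll h^d\,2^{d}\,2^{d}e^{-5d^2}$.

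The step I expect to be the main obstacle is the count: the sum runs over up to $|\mathcal A|\le N\le2^n$ characters, and the per-character saving above is only $e^{-5d^2}\approx e^{-5n}$ up to lower-order factors. The crude product bound therefore gives about $h^d\,4^d\,2^n e^{-5d^2}$. Since $d^2\ge n+2c\sqrt n\log n$ and $2^n=e^{n\ln2}$, this is at most $h^d e^{-(5-\ln 2)d^2+O(d)}\le h^de^{-4d^2}$ for large $n$. The constants are tight, but there is room because $\ln 2<1$, and this margin also absorbs the factor $4^d=e^{O(d)}$.

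If the margin fails after tracking constants (for instance because of $(2h+1)^d$ versus $h^d$, a factor of $2^d$), I would refine the bound in two ways. First, split the $v$-sum more carefully to get $\mathbb E|D_h|^2\le (2h+1)^2(2/k+\beta)+(2h+1)$. Second, use the threshold $e^{10d}$ together with $d\ge\sqrt n$ to keep a clean $e^{-5d^2+O(d)}$ factor per character; multiplying by $|\mathcal A|\le e^{0.7 d^2}$ still leaves $e^{-4d^2}$. The implied constant is absolute, because all $n$-dependence has been absorbed by taking $n$ large depending on $c$.
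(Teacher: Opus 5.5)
Your proposal is correct and follows essentially the same route as the paper. Both condition on Hayes' event with $t=X$ and $\eta=15$, expand the second moment (your Fej\'er-weight grouping is the paper's double sum over $(u,v)$ organized by the difference $u-v$), separate the differences divisible by $\operatorname{ord}(\chi)$ from the rest using $16\sqrt{\ln N/X}\le e^{-10d}$, take the $d$-fold product, and absorb $|\mathcal A|\le N\le e^{(\ln 2)d^2}$ via $5-\ln 2>4$.

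One small slip: for the nontrivial terms, keep the weights $2h+1-|v|$, which total $(2h+1)^2$, rather than bounding each term by $(2h+1)\beta$. The latter would only give $(4h+1)(2h+1)\beta$. This is harmless either way, and your fallback paragraph is unnecessary because the margin already suffices.
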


\begin{proof}
We first verify the size condition required to apply Theorem~\ref{thm:hayes}. Since $X\leq2n^{500\sqrt n}$ and $N\geq2^{n-1}$, the inequality \(500\sqrt n\ln n+\ln2\leq(n-2)\ln2\) holds for all sufficiently large $n$. Hence $X\leq N/2$ throughout the asymptotic range considered here, and Theorem~\ref{thm:hayes} is applicable to subsets of cardinality $X$.

Fix an integer $h\geq e^{31d}$ and a character $\chi\in\mathcal A$ with $\operatorname{ord}(\chi)\geq e^{10d}$. For a fixed subset $\mathcal X$, expanding the squared modulus gives
\[
\mathbb E_{g\mid\mathcal X}
\left|\sum_{v=-h}^{h}\chi(g)^v\right|^2
=
\sum_{u,v=-h}^{h}
\mathbb E_{g\mid\mathcal X}[\chi^{u-v}(g)],
\]
because $\overline{\chi(g)^v}=\chi(g)^{-v}$. We estimate this sum according to whether $\chi^{u-v}$ is trivial. Since $\chi^{u-v}=\chi_0$ exactly when $\operatorname{ord}(\chi)\mid u-v$, the triangle inequality yields
\begin{align*}
\mathbb E_{g\mid\mathcal X}
\left|\sum_{v=-h}^{h}\chi(g)^v\right|^2
&\leq
\sum_{\substack{-h\leq u,v\leq h\\
\operatorname{ord}(\chi)\mid u-v}}1 +
\sum_{\substack{-h\leq u,v\leq h\\
\operatorname{ord}(\chi)\nmid u-v}}
\left|\mathbb E_{g\mid\mathcal X}[\chi^{u-v}(g)]\right|.
\end{align*}
All indices in these sums are integers.

Consider first the pairs satisfying $\operatorname{ord}(\chi)\mid u-v$. Their contribution is deterministic, since the corresponding character is $\chi_0$ and therefore has expectation $1$. For each fixed $u$, the admissible values of $v$ form a single residue class modulo $\operatorname{ord}(\chi)$; hence there are at most $1+\lfloor2h/\operatorname{ord}(\chi)\rfloor$ such values in $[-h,h]$. It follows that
\begin{align*}
\sum_{\substack{-h\leq u,v\leq h\\
\operatorname{ord}(\chi)\mid u-v}}1
&\leq(2h+1)\left(1+\frac{2h}{\operatorname{ord}(\chi)}\right)\\
&\leq3h+6h^2e^{-10d}
\leq9h^2e^{-10d}.
\end{align*}
Here $2h+1\leq3h$, while $\operatorname{ord}(\chi)\geq e^{10d}$ and $h\geq e^{31d}$ imply $h\geq e^{10d}$.

We next treat the complementary pairs, for which $\chi^{u-v}$ is nontrivial. Their conditional averages are controlled uniformly by Hayes' theorem. Applying Theorem~\ref{thm:hayes}~\cite{BabaiNotes,Hayes2001} with $t=X$ and $\eta=15$, we obtain that, outside an $O(N^{-15})$ fraction of the subsets $\mathcal X$, every nontrivial character $\psi\in\widehat G$ satisfies
\begin{equation}\label{eq:subset-character-bias}
\left|\mathbb E_{g\mid\mathcal X}[\psi(g)]\right|
=\frac1X\left|\sum_{x\in\mathcal X}\psi(x)\right|
<\frac{4\sqrt{(1+15)X\ln N}}{X}
=16\sqrt{\frac{\ln N}{X}}.
\end{equation}
We henceforth condition on a subset $\mathcal X$ for which this simultaneous estimate holds. For fixed $c$, the prescribed choices of $d$ and $X$ give the stronger bound \(16\sqrt{\frac{\ln N}{X}}\leq e^{-10d}\) throughout the same asymptotic range. Indeed, using $d\leq\sqrt n+c\log_2n+2$, $\ln N\leq n\ln2$, and $\ln X\geq500\sqrt n\ln n$, we have
\begin{align*}
\ln\left(16e^{10d}\sqrt{\frac{\ln N}{X}}\right)
&\leq\ln16+10(\sqrt n+c\log_2n+2)\\
&\quad+\frac12\ln(n\ln2)-250\sqrt n\ln n
\longrightarrow-\infty.
\end{align*}
Consequently, every nontrivial character $\chi^{u-v}$ occurring in the second sum satisfies
\begin{equation}\label{eq:exponential-bias-bound}
\left|\mathbb E_{g\mid\mathcal X}[\chi^{u-v}(g)]\right|
\leq e^{-10d}.
\end{equation}
Since there are at most $(2h+1)^2$ such pairs,
\[
\sum_{\substack{-h\leq u,v\leq h\\
\operatorname{ord}(\chi)\nmid u-v}}
\left|\mathbb E_{g\mid\mathcal X}[\chi^{u-v}(g)]\right|
\leq(2h+1)^2e^{-10d}
\leq9h^2e^{-10d}.
\]
The estimate from Theorem~\ref{thm:hayes} holds simultaneously for all nontrivial characters, so this summation introduces no additional exceptional event.

Combining the two contributions gives
\begin{equation}\label{eq:single-coordinate-moment}
\mathbb E_{g\mid\mathcal X}
\left|\sum_{v=-h}^{h}\chi(g)^v\right|^2
\leq18h^2e^{-10d}.
\end{equation}
Conditional on the chosen subset $\mathcal X$, the variables $g_1,\ldots,g_d$ remain independent and uniformly distributed on $\mathcal X$. Hence Definition~\ref{def:character-product} with $b=0$ gives
\begin{align*}
\mathbb E_{\mathbf g\mid\mathcal X}
\left|F_\chi(h;g_1,\ldots,g_d)\right|^2
&=\prod_{i=1}^{d}
\mathbb E_{g_i\mid\mathcal X}
\left|\sum_{v=-h}^{h}\chi(g_i)^v\right|^2\\
&\leq18^dh^{2d}e^{-10d^2}.
\end{align*}
Taking square roots therefore yields
\begin{equation}\label{eq:character-product-moment}
\left(
\mathbb E_{\mathbf g\mid\mathcal X}
\left|F_\chi(h;g_1,\ldots,g_d)\right|^2
\right)^{1/2}
\leq18^{d/2}h^de^{-5d^2}.
\end{equation}

It remains to sum this estimate over the relevant characters. Since $\mathcal A\subseteq\widehat G$, one has $|\mathcal A|\leq|\widehat G|=N$. Moreover, $d\geq\sqrt n$ and $n=\lceil\log_2N\rceil$ imply $N\leq2^n\leq e^{(\ln2)d^2}$. Thus Equation~\eqref{eq:character-product-moment} gives
\begin{align*}
&\sum_{\substack{\chi\in\mathcal A\\
\operatorname{ord}(\chi)\geq e^{10d}}}
\left(
\mathbb E_{\mathbf g\mid\mathcal X}
\left|F_\chi(h;g_1,\ldots,g_d)\right|^2
\right)^{1/2}\\
&\qquad\leq N18^{d/2}h^de^{-5d^2}
\leq18^{d/2}h^de^{-(5-\ln2)d^2}
\leq h^de^{-4d^2}.
\end{align*}
For $d\geq5$, the final inequality follows from $\frac12\ln18\leq(1-\ln2)d$. This proves Equation~\eqref{eq:high-order-moment}. The only exceptional subsets are those excluded by Theorem~\ref{thm:hayes}; in particular, the exceptional event does not depend on $h$ or on $\chi$. Therefore the asserted estimate holds with probability at least $1-O(N^{-15})$, simultaneously for every integer $h\geq e^{31d}$.
\end{proof}

The high-order estimate uses the multiplier only through the inclusion $\mathcal A\subseteq\widehat G$. No bound on the torsion of $G$ or on $M$ enters its proof. In particular, the estimate applies to the fixed choice $M=1$. We now combine it with the deterministic main-term estimate at two box sizes.

\subsubsection{Geometry of Numbers and Existence of Short Vectors}
\label{sec:geometry-of-numbers}

We now convert the preceding lattice-point estimates into a short basis for the relation lattice. Following Pilatte~\cite[Sections~3.4--3.5]{Pilatte26}, we compare the counts in two concentric boxes to obtain a full set of linearly independent vectors, and then pass from these vectors to a lattice basis. The three geometric lemmas below supply these implications. Their dimension parameter is denoted by $m$, so that they apply both to the $d$ random coordinates and to the full set of $d+b$ evaluation coordinates.

The first lemma bounds the number of grid cubes that can meet a hyperplane. It is used in the next lemma to rule out the possibility that all lattice points in the larger box lie in a proper linear subspace.

\begin{lemma}[{\cite[Lemma~3.15]{Pilatte26}}]\label{lem:cube-intersection}
Let $m\geq1$ and $L\geq1$ be integers. Cover $[-L,L]^m$ by the $(2L)^m$ axis-parallel unit cubes with integer vertices. If $V\subseteq\mathbb R^m$ is a hyperplane through the origin, then at most $(m+1)(2L)^{m-1}$ of these cubes intersect $V$.
\end{lemma}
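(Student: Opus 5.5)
The plan is to reduce the count to a one-dimensional statement about the normal direction of $V$. Write $V=\{\mathbf x:\langle\mathbf a,\mathbf x\rangle=0\}$ with $\mathbf a\neq0$. Choose a coordinate $k$ with $|a_k|=\|\mathbf a\|_\infty$; after permuting coordinates and rescaling, take $k=m$ and $a_m=1$, so that $|a_i|\leq1$ for every $i$. Each unit cube $C$ with integer vertices in $[-L,L]^m$ is $C'\times[t,t+1]$, where $C'$ is a unit cube of the $(m-1)$-dimensional grid covering $[-L,L]^{m-1}$ and $t\in\{-L,\ldots,L-1\}$. There are exactly $(2L)^{m-1}$ choices of $C'$. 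It therefore suffices to show that for each fixed $C'$ at most $m+1$ values of $t$ give a cube meeting $V$.

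Fix $C'$. On $V$ the last coordinate is the linear function
\[
x_m=\phi(\mathbf x'):=-\sum_{i=1}^{m-1}a_ix_i,
\]
where $\mathbf x'=(x_1,\ldots,x_{m-1})$. The cube $C'\times[t,t+1]$ meets $V$ exactly when the interval $[t,t+1]$ meets the image $\phi(C')$. Since $C'$ is a product of unit intervals and $\phi$ is linear, $\phi(C')$ is an interval of length $\sum_{i<m}|a_i|\leq m-1$. An interval of length $\ell$ meets at most $\lfloor\ell\rfloor+2$ of the intervals $[t,t+1]$ with integer $t$, and $\lfloor\ell\rfloor+2\leq m+1$. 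Summing over the $(2L)^{m-1}$ choices of $C'$ gives at most $(m+1)(2L)^{m-1}$ cubes, which is the claimed bound. The case $m=1$ is consistent with this: then $V=\{0\}$, which meets at most $2$ cubes.

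The only step needing care is the normalization by the largest coefficient of the normal vector. This normalization is what makes the projection onto the chosen axis have slope at most $1$ in every other direction, and hence bounded fibres. I also need to handle the boundary convention: closed cubes can share a face, and that is exactly what the $+2$ (rather than $+1$) in the interval count accounts for. No probabilistic input or earlier lemma from the paper is needed.
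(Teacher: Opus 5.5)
Your proof is correct and complete. Normalizing the normal vector so that $a_m=1$ and $|a_i|\le 1$ makes each column $C'\times\mathbb R$ meet $V$ along the graph of $\phi$ over $C'$, whose $x_m$-range is a closed interval of length $\sum_{i<m}|a_i|\le m-1$; this interval meets at most $\lfloor\ell\rfloor+2\le m+1$ of the unit cubes in the column, and summing over the $(2L)^{m-1}$ columns gives the bound, with $m=1$ also handled correctly. The paper gives no proof of this lemma and simply imports it from Pilatte's Lemma~3.15, so there is no argument in the paper to compare against; your column-counting argument is self-contained and yields exactly the stated constant $m+1$.
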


The resulting criterion compares the lattice-point densities at two scales. When the ratio of the scales is sufficiently large relative to the two densities, the points in the larger box span the ambient space.

\begin{lemma}[{\cite[Lemma~3.16]{Pilatte26}}]\label{lem:independent-vectors}
Let $m\geq1$ be an integer and let $\Lambda\subseteq\mathbb R^m$ be a full-rank lattice. Let $2\leq h_0<h_1$ be real numbers with $h_1/h_0\in\mathbb Z$. Suppose that $\theta_0,\theta_1>0$ satisfy, for $i=0,1$,
\begin{equation}\label{eq:density-assumption}
|\Lambda\cap[-h_i,h_i]^m|
=\theta_i\frac{(2h_i+1)^m}{\det\Lambda}.
\end{equation}
If
$
\frac{h_1}{h_0}>\frac{\theta_0}{\theta_1}m\left(\frac52\right)^m,
$ then $\Lambda\cap[-h_1,h_1]^m$ contains $m$ linearly independent vectors.
\end{lemma}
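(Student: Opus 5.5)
We argue by contradiction. Suppose that every lattice point of $\Lambda$ in the large box $[-h_1,h_1]^m$ lies in a proper linear subspace. Then all of these points lie in some hyperplane $V$ through the origin. The strategy is to compare two counts. The lower count comes from translating the small-box points around inside the large box. The upper count comes from Lemma~\ref{lem:cube-intersection}, which controls how much of the large box can lie near $V$.

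\textbf{Step 1: rescale and pick out a subgrid.} Write $L=h_1/h_0$, which is an integer by hypothesis. Partition $[-h_1,h_1]^m$ into the $(2L)^m$ axis-parallel cubes $h_0(\mathbf j+[0,1]^m)$, where $\mathbf j\in\{-L,\ldots,L-1\}^m$.

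\textbf{Step 2: lower bound via pigeonhole.} Let $P_0=\Lambda\cap[-h_0,h_0]^m$, which has $\theta_0(2h_0+1)^m/\det\Lambda$ points. Each cube of side $h_0$ is contained in a translate of $[-h_0/2,h_0/2]^m$. Apply pigeonhole to the points of $\Lambda\cap[-h_1,h_1]^m$ lying in a given cube $C$. If $C$ contains a point $\mathbf x\in\Lambda$, then the differences $\mathbf y-\mathbf x$, for $\mathbf y\in\Lambda\cap C$, lie in $P_0$. So each cube contains at most $|P_0|$ lattice points.

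\textbf{Step 3: upper bound via the hyperplane.} By assumption, the lattice points of the large box all lie in $V$. A cube containing such a point must therefore meet $V$. Lemma~\ref{lem:cube-intersection}, applied after scaling by $h_0$, shows that at most $(m+1)(2L)^{m-1}$ cubes meet $V$. Hence
\[
\theta_1\frac{(2h_1+1)^m}{\det\Lambda}=|\Lambda\cap[-h_1,h_1]^m|\le (m+1)(2L)^{m-1}\,\theta_0\frac{(2h_0+1)^m}{\det\Lambda}.
\]

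\textbf{Step 4: simplify to a contradiction.} Use $(2h_1+1)^m\ge(2h_1)^m=(2Lh_0)^m$ and $(2h_0+1)^m\le(\tfrac52h_0)^m$, the latter valid since $h_0\ge2$. Substituting these into Step 3 gives
\[
\theta_1(2L)^m\le(m+1)(2L)^{m-1}\theta_0\left(\tfrac54\right)^m 2^m\cdot\tfrac{1}{2^m}\cdots,
\]
which rearranges to roughly $L\le\frac{\theta_0}{\theta_1}\,\frac{m+1}{2}\,(5/4)^m$. This contradicts the hypothesis $L>\frac{\theta_0}{\theta_1}m(5/2)^m$, since $(m+1)/2\le m$ and $(5/4)^m\le(5/2)^m$.

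\textbf{Main obstacle.} The delicate point is Step 2, namely checking that two lattice points in the same closed cube of side $h_0$ differ by a vector in $[-h_0,h_0]^m$. Closed cubes give differences in exactly that closed box, so the count per cube is at most $|P_0|$, which is what Step 3 needs. Because the constant $(5/2)^m$ in the hypothesis is generous, we do not need to worry about boundary double-counting: each lattice point may be assigned to any one cube containing it.
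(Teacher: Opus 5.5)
Your argument is correct and is essentially the same as Pilatte's proof, which the paper cites rather than reproves; the paper places Lemma~\ref{lem:cube-intersection} immediately before this statement for exactly this use. You cover $[-h_1,h_1]^m$ by the $(2L)^m$ closed cubes of side $h_0$. Translation bounds the lattice points in each cube by $|\Lambda\cap[-h_0,h_0]^m|$. The hyperplane lemma bounds the number of cubes that can contain points.

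The only correction is in Step~4. From $\theta_1(2Lh_0)^m\le(m+1)(2L)^{m-1}\theta_0(\tfrac52h_0)^m$ you get $2L\theta_1\le(m+1)\theta_0(5/2)^m$, that is, $L\le\frac{\theta_0}{\theta_1}\frac{m+1}{2}(5/2)^m$. Your $(5/4)^m$ is wrong: you dropped the factor $2^{m-1}$ coming from $(2L)^{m-1}$. Consequently the factor $(5/2)^m$ in the hypothesis is not generous. It is used exactly, since it is the bound $(2h_0+1)/h_0\le 5/2$ for $h_0\ge2$. The contradiction with $L>\frac{\theta_0}{\theta_1}m(5/2)^m$ comes only from $(m+1)/2\le m$.

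Your union bound over overlapping closed cubes is valid regardless of that remark, since it can only overcount.
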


These independent vectors need not generate the entire lattice. The next lemma uses the standard basis bound in terms of successive minima to obtain a basis of the target lattice after scaling the independent vectors~\cite{Pilatte26,Cassels59}.

\begin{lemma}[{\cite[Lemma~3.17]{Pilatte26}}]\label{lem:basis-transfer}
Let $m\geq1$ be an integer, let $\Lambda_1,\Lambda_2\subseteq\mathbb R^m$ be full-rank lattices, and let $M\geq1$ be an integer such that $M\Lambda_1\subseteq\Lambda_2$. If $\Lambda_1\cap[-h,h]^m$ contains $m$ linearly independent vectors for some real $h>0$, then $\Lambda_2$ has a basis whose vectors have Euclidean norm at most $m^{3/2}Mh$.
\end{lemma}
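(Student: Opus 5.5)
The plan is to pass from short independent vectors in $\Lambda_1$ to short independent vectors in $\Lambda_2$ by multiplying by $M$. Then I would turn these into an actual basis of $\Lambda_2$ with a standard Gram--Schmidt completion argument, as in the successive-minima basis bound~\cite{Cassels59}.

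\emph{Step 1 (short independent vectors in $\Lambda_2$).} Let $\mathbf s_1',\ldots,\mathbf s_m'\in\Lambda_1\cap[-h,h]^m$ be linearly independent. Each has every coordinate of absolute value at most $h$, so $\|\mathbf s_i'\|_2\le\sqrt m\,h$. Put $\mathbf s_i=M\mathbf s_i'$. By the hypothesis $M\Lambda_1\subseteq\Lambda_2$, these are $m$ linearly independent vectors of $\Lambda_2$ with $\|\mathbf s_i\|_2\le L:=\sqrt m\,Mh$.

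\emph{Step 2 (completion to a basis).} I would build a basis $\mathbf b_1,\ldots,\mathbf b_m$ of $\Lambda_2$ such that $V_i:=\operatorname{span}(\mathbf b_1,\ldots,\mathbf b_i)=\operatorname{span}(\mathbf s_1,\ldots,\mathbf s_i)$ for every $i$.
\begin{itemize}
\item \emph{Choice of $\mathbf b_i$.} Let $\pi_i$ be the orthogonal projection onto $V_{i-1}^\perp$. The group $\pi_i(\Lambda_2\cap V_i)$ is a rank-one lattice on a line. Choose $\mathbf b_i\in\Lambda_2\cap V_i$ whose projection $\widetilde{\mathbf b}_i=\pi_i(\mathbf b_i)$ generates this rank-one lattice.
\item \emph{The $\mathbf b_i$ form a basis.} The standard induction shows that $\Lambda_2\cap V_i$ is generated by $\mathbf b_1,\ldots,\mathbf b_i$, so at $i=m$ we obtain a basis of $\Lambda_2$.
\item \emph{Size of the projections.} Since $\pi_i(\mathbf s_i)$ is a nonzero element of the rank-one lattice, it is a nonzero integer multiple of $\widetilde{\mathbf b}_i$. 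Hence $\|\widetilde{\mathbf b}_i\|_2\le\|\pi_i(\mathbf s_i)\|_2\le\|\mathbf s_i\|_2\le L$.
\item \emph{Size reduction.} Subtracting integer multiples of $\mathbf b_{i-1},\ldots,\mathbf b_1$ in turn (Babai's nearest-plane reduction) changes neither $\widetilde{\mathbf b}_i$ nor the basis property. After this reduction, the Gram--Schmidt coefficients satisfy $|\mu_{ij}|\le\frac12$. Therefore
\[
\|\mathbf b_i\|_2^2=\|\widetilde{\mathbf b}_i\|_2^2+\sum_{j<i}\mu_{ij}^2\|\widetilde{\mathbf b}_j\|_2^2
\le\Bigl(1+\frac{i-1}{4}\Bigr)L^2\le mL^2 .
\]
\end{itemize}

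\emph{Step 3 (conclusion).} Combining the steps gives $\|\mathbf b_i\|_2\le\sqrt m\,L=mMh\le m^{3/2}Mh$, which is the claimed bound with room to spare.

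The only step needing care is Step 2: one must verify that the inductively chosen $\mathbf b_i$ generate $\Lambda_2\cap V_i$. This uses the fact that for any $\mathbf x\in\Lambda_2\cap V_i$, the projection $\pi_i(\mathbf x)$ is an integer multiple $k\widetilde{\mathbf b}_i$. Then $\mathbf x-k\mathbf b_i\in\Lambda_2\cap V_{i-1}$, and induction applies. The same fact also gives the projection bound used above, so everything else is routine.
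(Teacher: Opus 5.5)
Your proof is correct and follows the paper's route. The paper gives no argument of its own here; it cites Pilatte's Lemma~3.17 and the standard basis bound in terms of successive minima, which amounts to the same two steps: scale the $m$ short independent vectors by $M$ into $\Lambda_2$, then complete them to a basis by the usual Gram--Schmidt/size-reduction argument. Your Step~2 is exactly that standard argument written out, and it even yields the slightly stronger bound $mMh$ in place of $m^{3/2}Mh$.
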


We return to the random-subset experiment of Proposition~\ref{prop:high-order-moment}. Let $G$ be a finite Abelian group of order $N$ and let $K\leq G$. Retain $n=\lceil\log_2N\rceil$, $d=\lceil\sqrt n\rceil+\lceil c\log n\rceil$, and $X=\lceil n^{500\sqrt n}\rceil$, where $c>0$ is fixed. The subset $\mathcal X$ and the samples $g_1,\ldots,g_d$ are chosen as in that proposition. Fix an integer $b$ with $0\leq b\leq d$ and elements $w_1,\ldots,w_b\in G$ before sampling, and put $m=d+b$. Thus $d$ is the number of random elements, whereas $m$ is the dimension of the relation lattice. Define the homomorphism $\Phi:\mathbb Z^d\times\mathbb Z^b\longrightarrow G$ by
\[
\Phi(\mathbf z,\mathbf u)=\sum_{i=1}^d z_i g_i+\sum_{j=1}^b u_jw_j,
\]
and fix an arbitrary integer $M\geq1$ before sampling. The target lattice is
\[
\mathcal L
:=\left\{(\mathbf z,\mathbf u)\in\mathbb Z^d\times\mathbb Z^b:
          \sum_{i=1}^d z_i g_i+\sum_{j=1}^b u_jw_j\in K\right\}.
\]
Multiplying the evaluation elements by $M$ gives the auxiliary lattice
\[
\mathcal L_M
:=\left\{(\mathbf z,\mathbf u)\in\mathbb Z^d\times\mathbb Z^b:
          \sum_{i=1}^d z_i(Mg_i)+\sum_{j=1}^b u_j(Mw_j)\in K\right\}.
\]
In particular, $\mathcal L=\Phi^{-1}(K)$ is the period lattice of $f\circ\Phi$ when $f$ hides $K$. Lemma~\ref{lem:lattice-determinant} shows that both lattices have full rank. For the auxiliary lattice, put $K'=MG\cap K$ and denote the image of its annihilator under $\iota$ by $\mathcal A=\iota((K')^\perp)$. The scaled evaluation elements generate
\[
S_M=\langle Mg_1,\ldots,Mg_d,Mw_1,\ldots,Mw_b\rangle.
\]
For an integer $h\geq1$, the corresponding density term is
\[
V(h):=\frac{(2h+1)^m}{\det\mathcal L_M}
    =\frac{|K'|(2h+1)^m}{|S_M+K'|}.
\]
The notation $\mathcal L_M$, $S_M$, and $V(h)$ now includes the fixed coordinates; it agrees with the notation of Section~\ref{sec:lattice-point-counting} when $b=0$.

Lemma~\ref{lem:main-term} is deterministic and therefore applies to the full evaluation sequence with its dimension parameter replaced by $m$. For this application, the characters trivial on all random and fixed evaluation elements form
\[
\mathcal A_0
:=\{\chi\in\mathcal A:\chi(g_i)=1\ \text{for all }i
       \text{ and }\chi(w_j)=1\ \text{for all }j\}.
\]
Among the remaining characters, define the low-order part by
\[
\mathcal A_-
:=\{\chi\in\mathcal A\setminus\mathcal A_0:
                         \operatorname{ord}(\chi)\leq e^{10m}\}.
\]
The high-order part is
\[
\mathcal A_+
:=\{\chi\in\mathcal A\setminus\mathcal A_0:
                         \operatorname{ord}(\chi)>e^{10m}\}.
\]
Using the character product of Definition~\ref{def:character-product}, define the relative low-order contribution by
\[
\varepsilon(h)
:=\frac{1}{V(h)|\mathcal A|}
  \sum_{\chi\in\mathcal A_-}
  F_\chi(h;g_1,\ldots,g_d;w_1,\ldots,w_b).
\]
The high-order contribution is
\[
E(h)
:=\frac{1}{|\mathcal A|}
  \sum_{\chi\in\mathcal A_+}
  F_\chi(h;g_1,\ldots,g_d;w_1,\ldots,w_b).
\]
The cutoff $e^{10m}$ here belongs only to this deterministic application of Lemma~\ref{lem:main-term}. Proposition~\ref{prop:high-order-moment} will still be used with the original $d$ random elements and its cutoff $e^{10d}$. Since $m\geq d$, its sum includes every character in $\mathcal A_+$. As in Pilatte's proof~\cite[Theorem~3.18]{Pilatte26}, the additional character factors will be bounded deterministically before applying the random-coordinate estimate.

\begin{theorem}[Short Basis with Random and Fixed Evaluation Elements]\label{thm:short-basis-detailed}
Let $G$ be a finite Abelian group of order $N$ and let $K\leq G$. Let $n=\lceil\log_2N\rceil$ and $d=\lceil\sqrt n\rceil+\lceil c\log n\rceil$ for a fixed $c>0$. Fix an integer $b$ with $0\leq b\leq d$ and arbitrary elements $w_1,\ldots,w_b\in G$. In the random-subset experiment above, for all sufficiently large $n$ depending only on $c$, the lattice $\mathcal L$ has an integral basis $\mathbf b_1,\ldots,\mathbf b_{d+b}$ satisfying
\[
\max_{1\leq j\leq d+b}\|\mathbf b_j\|_2<e^{33(d+b)}
\]
with probability at least $1-O(e^{-4d^2})$. The probability is over both the choice of $\mathcal X$ and the conditional samples $g_1,\ldots,g_d$. The bound is uniform for $0\leq b\leq d$ and in the fixed elements, and does not require the $g_i$ to generate $G$.
\end{theorem}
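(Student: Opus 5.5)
Take $M=1$, so that $\mathcal L_M=\mathcal L$. The plan is to establish lattice-point counts at two box sizes $h_0<h_1$ and then apply Lemma~\ref{lem:independent-vectors} followed by Lemma~\ref{lem:basis-transfer} with $\Lambda_1=\Lambda_2=\mathcal L$.

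We work on the event of Proposition~\ref{prop:high-order-moment}, which fails with probability $O(N^{-15})$. That proposition controls a second moment conditional on $\mathcal X$. For the high-order term with fixed coordinates, we use $|\sum_{v}\chi(w_j)^v|\le 2h+1\le 3h$ to get
\[
|E(h)|\le\frac{(3h)^b}{|\mathcal A|}\sum_{\chi\in\mathcal A_+}|F_\chi(h;g_1,\ldots,g_d)|.
\]
Since $m\ge d$, every $\chi\in\mathcal A_+$ has order exceeding $e^{10d}$. Markov's inequality applied to the first moment, combined with Cauchy--Schwarz $\mathbb E|F|\le(\mathbb E|F|^2)^{1/2}$ and Equation~\eqref{eq:high-order-moment}, then gives
\[
\frac{1}{|\mathcal A|}\sum_{\chi\in\mathcal A_+}|F_\chi(h)|\le h^d e^{-2d^2}
\]
except with conditional probability $O(e^{-2d^2})$. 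We need this only at the two values $h_0,h_1$, so a union bound suffices. The subtle point is that Markov loses a factor, so the probability budget has to be chosen carefully. To reach the stated failure probability $O(e^{-4d^2})$, I would instead apply Markov with threshold $h^de^{-4d^2}\cdot e^{4d^2}$ appropriately split; concretely, the target $|E(h)|\le 3^b h^m e^{-2d^2}$ holds with probability $1-O(e^{-2d^2})$. If the exact exponent $4d^2$ is needed, one can exploit slack in Proposition~\ref{prop:high-order-moment}, whose proof actually gives $e^{-(5-\ln2)d^2}$ up to $18^{d/2}$.

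Next we compare $E(h)$ with the main term $V(h)=(2h+1)^m/\det\mathcal L\ge h^m2^{-n}\ge h^m e^{-(\ln2)d^2}$. This yields
\[
\frac{|E(h)|}{V(h)}\le 3^b e^{-(2-\ln 2)d^2},
\]
which is tiny because $b\le d$. For the low-order part we apply Lemma~\ref{lem:main-term} in dimension $m$ with $h\ge e^{31m}$, obtaining $|\varepsilon(h)|\le0.145$. Combining the two, Equation~\eqref{eq:count-decomposition} shows that $|\mathcal L\cap[-h,h]^m|=\theta V(h)$ with $\theta\in[0.8,1.2]$ at both scales.

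Now choose $h_0=\lceil e^{31m}\rceil$ and $h_1=h_0\cdot k$ with $k=\lceil 2m(5/2)^m\rceil$. Since $\theta_0/\theta_1\le1.5$, Lemma~\ref{lem:independent-vectors} produces $m$ independent vectors in $[-h_1,h_1]^m$. Lemma~\ref{lem:basis-transfer} with $M=1$ then gives a basis of norm at most $m^{3/2}h_1\le m^{3/2}\cdot 3m(5/2)^m e^{31m}<e^{33m}$ for large $n$, because $\ln(5/2)<1$ and the polynomial factors are absorbed. The total failure probability is $O(N^{-15})+O(e^{-2d^2})$ together with the Markov loss. The main obstacle is matching the stated $O(e^{-4d^2})$ rate: I would first try to redo the final summation in Proposition~\ref{prop:high-order-moment} using Markov at level $e^{-2d^2}$ relative to a moment bound of $e^{-(5-\ln2-\epsilon)d^2}$, which leaves roughly $e^{-2.3d^2}$ of room in the density comparison. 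Alternatively, I would accept the $N^{-15}$ and Markov terms and note that $N^{-15}\le e^{-4d^2}$ fails only when $N$ is small relative to $e^{d^2}$; this issue needs the careful bookkeeping.
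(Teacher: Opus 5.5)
Your overall structure matches the paper's. You specialize to $M=1$, bound the fixed-coordinate factors by $2h+1$, and control the high-order term via Proposition~\ref{prop:high-order-moment} with Cauchy--Schwarz and Markov. You then compare counts at two scales with Lemma~\ref{lem:independent-vectors} and finish with Lemma~\ref{lem:basis-transfer}. The deterministic part is fine: your scales $h_0=\lceil e^{31m}\rceil$ and $h_1=\lceil 2m(5/2)^m\rceil h_0$, with $\theta_i\in[0.8,1.2]$, work as well as the paper's $h_0=m\lceil e^{31m}\rceil$.

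\textbf{The gap is the probability bound.} As written you prove failure probability $O(N^{-15})+O(e^{-2d^2})$, not $O(e^{-4d^2})$, and the patches you suggest do not close this. Suppose you reuse the slack $18^{d/2}e^{-(5-\ln 2)d^2}$ from the proposition while still comparing against $V(h)\geq h^m2^{-n}$. Even with the Markov threshold optimized, this gives only about $e^{-(5-2\ln 2)d^2}\approx e^{-3.6d^2}$. The reason is that the comparison pays a factor $2^n\le e^{(\ln 2)d^2}$ on top of the factor $|\mathcal A|\le N$ that the proposition already absorbed.

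The missing step is a cancellation that you throw away when you bound $1/|\mathcal A|\le 1$ and $\det\mathcal L\le 2^n$ separately. Since $S_M+K'\le MG$ and $|\mathcal A|=|MG|/|K'|$, the density term satisfies the sample-independent bound
\[
V(h)=\frac{|K'|(2h+1)^m}{|S_M+K'|}\ge\frac{(2h+1)^m}{|\mathcal A|}.
\]
Apply Markov directly to $|E(h)|$ against this bound divided by $m$. The normalization $1/|\mathcal A|$ in $E(h)$ then cancels, and
\[
\mathbb P_{\mathbf g\mid\mathcal X}\bigl(|E(h)|>V(h)/m\bigr)\le\frac{m|\mathcal A|}{(2h+1)^m}\,\mathbb E_{\mathbf g\mid\mathcal X}|E(h)|\ll m\Bigl(\frac{h}{2h+1}\Bigr)^d e^{-4d^2}\ll e^{-4d^2}.
\]
There is no loss of $2^n$, and Markov costs only the factor $m2^{-d}\le 1$.

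Your concern about the $N^{-15}$ term is unfounded. Since $N\geq 2^{n-1}$ and $d^2=n+o(n)$, we have $15\ln N\geq 4d^2$ for all sufficiently large $n$, because $15\ln 2>4$. Hence $N^{-15}\le e^{-4d^2}$ in every regime the theorem covers.
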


\begin{proof}
Take $n$ sufficiently large for Proposition~\ref{prop:high-order-moment} to apply and for $d\geq10$. We first keep $M\geq1$ arbitrary and fixed before sampling; at the basis-length estimate we specialize to the deterministic choice $M=1$. By Proposition~\ref{prop:high-order-moment}, with probability at least $1-O(N^{-15})$ over the choice of $\mathcal X$, the conditional second-moment estimate holds simultaneously for all integers $h\geq e^{31d}$. Fix a subset for which this estimate holds. Until we average over $\mathcal X$ at the end of the proof, all expectations and probabilities are conditional on this fixed subset.

Lemma~\ref{lem:main-term} requires no distributional assumptions on the evaluation elements. Applying it in dimension $m$ to the sequence $(g_1,\ldots,g_d,w_1,\ldots,w_b)$ therefore gives, for every integer $h\geq e^{31m}$,
\[
|\mathcal L_M\cap[-h,h]^m|
=V(h)\bigl(1+\varepsilon(h)\bigr)
 +E(h),
\]
where the low-order error satisfies
\[
|\varepsilon(h)|\leq0.145\frac{e^{31m}}{h}.
\]
Since $S_M+K'\leq MG$ and $|\mathcal A|=|MG|/|K'|$, the determinant identity in Lemma~\ref{lem:lattice-determinant} also yields
\begin{equation}\label{eq:density-lower-bound}
V(h)=\frac{|K'|(2h+1)^m}{|S_M+K'|}
\geq\frac{(2h+1)^m}{|\mathcal A|}
\geq\frac{(2h+1)^m}{N}.
\end{equation}
In particular, the first lower bound is independent of the sampled elements $g_i$.

To estimate the high-order contribution, we separate the fixed coordinates in the character product. Definition~\ref{def:character-product} gives
\[
F_\chi(h;g_1,\ldots,g_d;w_1,\ldots,w_b)
=F_\chi(h;g_1,\ldots,g_d)
 \prod_{j=1}^b\left(\sum_{u=-h}^{h}\chi(w_j)^u\right).
\]
Since every character value has modulus one, each sum over a fixed coordinate has absolute value at most $2h+1$. Hence
\[
\left|F_\chi(h;g_1,\ldots,g_d;w_1,\ldots,w_b)\right|
\leq(2h+1)^b\left|F_\chi(h;g_1,\ldots,g_d)\right|.
\]
The set $\mathcal A_+$ depends on the evaluation sequence, but is contained in the deterministic set $\{\chi\in\mathcal A:\operatorname{ord}(\chi)>e^{10m}\}$. By the triangle inequality, we may enlarge the sum to this set before taking expectations. Applying Cauchy--Schwarz to each summand then yields
\begin{align}\label{eq:subset-error-expectation}
\mathbb E_{\mathbf g\mid\mathcal X}|E(h)|
&\leq\frac{(2h+1)^b}{|\mathcal A|}
 \sum_{\substack{\chi\in\mathcal A\\\operatorname{ord}(\chi)>e^{10m}}}
 \mathbb E_{\mathbf g\mid\mathcal X}|F_\chi(h;g_1,\ldots,g_d)|\notag\\
&\leq\frac{(2h+1)^b}{|\mathcal A|}
 \sum_{\substack{\chi\in\mathcal A\\\operatorname{ord}(\chi)\geq e^{10d}}}
 \left(\mathbb E_{\mathbf g\mid\mathcal X}
           |F_\chi(h;g_1,\ldots,g_d)|^2\right)^{1/2}\notag\\
&\ll\frac{(2h+1)^bh^d}{|\mathcal A|}e^{-4d^2}.
\end{align}
Here $m\geq d$ permits the larger summation range in the second line, and Proposition~\ref{prop:high-order-moment} bounds the resulting sum over the $d$ random coordinates. The fixed elements contribute only the factor $(2h+1)^b$.

We now compare the high-order error with the density term. Since $V(h)$ depends on the sampled elements, we first replace the threshold $V(h)/m$ by the deterministic lower bound supplied by Equation~\eqref{eq:density-lower-bound}. Markov's inequality and the preceding expectation estimate then give, for every fixed integer $h\geq e^{31m}$,
\begin{align}\label{eq:subset-error-probability}
\mathbb P_{\mathbf g\mid\mathcal X}
 \left(|E(h)|>\frac{V(h)}{m}\right)
&\leq\mathbb P_{\mathbf g\mid\mathcal X}
 \left(|E(h)|>
            \frac{(2h+1)^m}{m|\mathcal A|}\right)\notag\\
&\leq\frac{m|\mathcal A|}{(2h+1)^m}
       \mathbb E_{\mathbf g\mid\mathcal X}|E(h)|\notag\\
&\ll m\left(\frac{h}{2h+1}\right)^d e^{-4d^2}
\ll e^{-4d^2}.
\end{align}
The last bound follows from $m\leq2d$ and $m2^{-d}\leq2d2^{-d}\leq1$. The implied constant is absolute and does not depend on $b$ or on the fixed elements.

We apply this estimate at two integer scales. Set
$h_0=m\lceil e^{31m}\rceil$ and
$h_1=\left\lceil m^2(5/2)^m\right\rceil h_0$.
Then $2\leq h_0<h_1$ and $h_1/h_0\in\mathbb Z$. By Equation~\eqref{eq:subset-error-probability} and a union bound, with conditional probability $1-O(e^{-4d^2})$ the high-order error has absolute value at most $V(h_i)/m$ at both scales. On this event, define the density factors for $i=0,1$ by
\[
\theta_i:=\frac{|\mathcal L_M\cap[-h_i,h_i]^m|}{V(h_i)}
=1+\varepsilon(h_i)
 +\frac{E(h_i)}{V(h_i)}.
\]
The density factors $\theta_i$ are real. Combining the low- and high-order error bounds gives
\[
|\theta_i-1|
\leq0.145\frac{e^{31m}}{h_i}+\frac1m
\leq\frac{1.145}{m}.
\]
For $m\geq10$, both density factors are therefore positive, with \(\frac{\theta_0}{\theta_1} \leq\frac{m+1.145}{m-1.145}<m.\) The strict inequality follows from $m^2-2.145m-1.145>0$. Consequently,
\[
\frac{h_1}{h_0}
\geq m^2\left(\frac52\right)^m
>\frac{\theta_0}{\theta_1}m\left(\frac52\right)^m.
\]
Lemma~\ref{lem:independent-vectors} now applies to $\mathcal L_M$ in dimension $m$ and shows that $\mathcal L_M\cap[-h_1,h_1]^m$ contains $m$ linearly independent vectors.

It remains to obtain a basis of the target lattice from these independent vectors. For any $(\mathbf z,\mathbf u)\in\mathcal L_M$, we have \(\Phi(M\mathbf z,M\mathbf u) =\sum_{i=1}^d z_i(Mg_i)+\sum_{j=1}^b u_j(Mw_j)\in K,\) and hence $(M\mathbf z,M\mathbf u)\in\mathcal L$. Thus $M\mathcal L_M\subseteq\mathcal L$, and Lemma~\ref{lem:basis-transfer}, applied with $\Lambda_1=\mathcal L_M$ and $\Lambda_2=\mathcal L$, gives an integral basis $\mathbf b_1,\ldots,\mathbf b_m$ of $\mathcal L$ satisfying
$
\|\mathbf b_j\|_2\leq m^{3/2}Mh_1
$
for every $1\leq j\leq m$.
For $m\geq10$, the choice of $h_0$ gives
$
h_0<1.01m e^{31m}.
$
Substituting this bound in the estimate for the larger scale gives
\[
h_1<1.01m^2\left(\frac52\right)^m h_0
     <1.03m^3\left(\frac52\right)^m e^{31m}.
\]
All the preceding estimates hold for each integer $M\geq1$ fixed before sampling. Apply them to the fixed choice $M=1$. Then $\mathcal L_M=\mathcal L$, so the basis-transfer bound has no multiplier loss and gives
\begin{equation}\label{eq:basis-length-bound}
\|\mathbf b_j\|_2
\leq m^{3/2}h_1
<1.03m^{9/2}\left(\frac52\right)^m e^{31m}
<e^{33m}.
\end{equation}
For the last inequality, divide the logarithm of the middle expression by $m$. This gives \(31+\ln(5/2)+\frac{\ln1.03+(9/2)\ln m}{m}.\) The last fraction is decreasing for $m\geq10$ and is less than $1.05$ at $m=10$, while $\ln(5/2)<0.92$. The displayed quantity is therefore less than $32.97<33$. The basis-transfer lemma remains necessary even when $M=1$, since the independent vectors found by the density comparison need not generate the entire lattice.

Finally, average over $\mathcal X$. The conditional failure estimate is uniform over the subsets supplied by Proposition~\ref{prop:high-order-moment}. For fixed $c$, one has $d^2=n+o(n)$, while $\ln N\geq(n-1)\ln2$. Since $15\ln2>4$, it follows that $15\ln N\geq4d^2$ for all sufficiently large $n$ depending only on $c$. Thus $N^{-15}\leq e^{-4d^2}$, and the total failure probability is $O(e^{-4d^2})$, uniformly for $0\leq b\leq d$ and in the fixed elements.
\end{proof}

This proves Theorem~\ref{thm:short-basis}, including the case $b=0$. If the random-subset experiment is transferred to the implemented sampler using Lemma~\ref{lem:sampling-comparison}, its total-variation error must be added. For direct uniform sampling from a group or subgroup, the next corollary instead uses character orthogonality and does not require this comparison.

\subsubsection{Uniform Sampling from a Subgroup}
\label{sec:uniform-subgroup-sampling}

Let $G$ be a finite Abelian group with $|G|\leq2^n$,
where $n\geq0$ is an integer.
Let $K\leq G$ be the fixed subgroup in
Equation~\eqref{eq:preimage-lattice}.
Fix a subgroup $S\leq G$ from which the random elements
will be sampled, and put $J:=K\cap S$.
In the HSP setting, a function $f:G\to\mathcal Y$
hides $K$, and its restriction to $S$ hides $J$.

Let $d\geq1$ and $b\geq0$ be integers, and put $m=d+b$.
Fix $w_1,\ldots,w_b\in S$ before sampling.
Choose $g_1,\ldots,g_d$ independently and uniformly from $S$.
Define the evaluation homomorphism
$\Phi:\mathbb Z^d\times\mathbb Z^b\longrightarrow S$ by
\[
\Phi(\mathbf z,\mathbf u)
=\sum_{i=1}^{d}z_i g_i+\sum_{j=1}^{b}u_jw_j.
\]
The lattice considered here is
\[
\mathcal L:=\Phi^{-1}(J)
=\left\{(\mathbf z,\mathbf u)\in\mathbb Z^d\times\mathbb Z^b:
\sum_{i=1}^{d}z_i g_i+\sum_{j=1}^{b}u_jw_j\in J\right\}.
\]
Since $\Phi$ takes values in $S$, this lattice also equals
$\Phi^{-1}(K)$.
If $f$ hides $K$, then $\mathcal L$ is the period lattice
of $f\circ\Phi$.

Every nonprincipal character of $S$ has mean zero under
uniform sampling from $S$.
We use this orthogonality relation for the second-moment
estimate below.
The comparison of lattice-point counts at two scales follows
the proof of Theorem~\ref{thm:short-basis-detailed}.

\begin{corollary}[Uniform Subgroup Sampling with Fixed Evaluation Elements]\label{cor:fixed-elements-short-basis}
In the notation above, assume that $d\geq\lceil\sqrt n\rceil$, $b\leq d$, and $m\geq10$. Then $\mathcal L$ has an integral basis $\mathbf b_1,\ldots,\mathbf b_m$ satisfying
$
\max_{1\leq j\leq m}\|\mathbf b_j\|_2<e^{33(d+b)}
$
with probability at least $1-2e^{-2(d+b)^2}$ over the choice of $g_1,\ldots,g_d$. For $b=0$, the length bound is $e^{33d}$ and the failure probability is at most $2e^{-2d^2}$.
\end{corollary}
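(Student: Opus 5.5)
We follow the proof of Theorem~\ref{thm:short-basis-detailed} with $M=1$, working inside the group $S$ instead of $G$. Since all evaluation elements lie in $S$, we have $\mathcal L=\Phi^{-1}(J)$ with $J=K\cap S$. We therefore apply Lemma~\ref{lem:lattice-point-count} and Lemma~\ref{lem:main-term} with ambient group $S$, hidden subgroup $J$, and $\mathcal A=J^\perp\subseteq\widehat S$. Lemma~\ref{lem:main-term} is deterministic, so in dimension $m$ and for every integer $h\ge e^{31m}$ it gives
\[
|\mathcal L\cap[-h,h]^m|=V(h)\bigl(1+\varepsilon(h)\bigr)+E(h),\qquad |\varepsilon(h)|\le0.145\,e^{31m}/h .
\]
As in Equation~\eqref{eq:density-lower-bound}, the density satisfies $V(h)\ge(2h+1)^m/|\mathcal A|$, and this lower bound does not depend on the samples.

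The only new ingredient is the second-moment bound replacing Proposition~\ref{prop:high-order-moment}. Fix a character $\chi\in\widehat S$ with $\operatorname{ord}(\chi)>e^{10m}$ and $g$ uniform in $S$. Expanding the square gives
\[
\mathbb E\Bigl|\sum_{v=-h}^h\chi(g)^v\Bigr|^2=\#\{(u,v)\in[-h,h]^2:\operatorname{ord}(\chi)\mid u-v\},
\]
because $\mathbb E[\chi^{u-v}(g)]$ is $1$ when $\chi^{u-v}$ is trivial and $0$ otherwise. The right side is at most $(2h+1)(1+2h/\operatorname{ord}\chi)\le 9h^2e^{-10m}$. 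Since $g_1,\dots,g_d$ are independent, the random factor of $F_\chi$ has second moment at most $9^dh^{2d}e^{-10md}$. The $b$ fixed coordinates are bounded trivially by $(2h+1)^b$. Summing over at most $|\mathcal A|\le|S|\le2^n\le e^{(\ln2)d^2}$ characters, and using Cauchy--Schwarz as in Equation~\eqref{eq:subset-error-expectation}, yields
\[
\mathbb E|E(h)|\le\frac{(2h+1)^b\,3^d h^d\,e^{-5md}}{|\mathcal A|}\cdot 2^n .
\]

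Next we apply Markov's inequality at the threshold $(2h+1)^m/(m|\mathcal A|)$, as in Equation~\eqref{eq:subset-error-probability}. The resulting failure probability at one scale is at most $m\,3^d e^{-5md+(\ln2)d^2}$. Because $m\ge d$ and $d\ge\lceil\sqrt n\rceil$, this is below $e^{-2m^2}$ when $b$ is small relative to $d$. \textbf{This exponent bookkeeping is the main obstacle.} The random cancellation comes from only $d$ coordinates, which give $e^{-5md}$, while the target is $e^{-2m^2}$ with $m$ up to $2d$. We have $5md\ge\tfrac52m^2$ because $m\le2d$, and $2^n\le e^{0.7d^2}\le e^{0.7m^2}$. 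This leaves $e^{-(2.5-0.7)m^2}$ times $m\,3^d$, and $1.8m^2-d\ln3-\ln m\ge2m^2$ fails. We therefore sharpen the order cutoff in the high-order part rather than enlarge $m$. Keeping the cutoff at $e^{10m}$ from the deterministic lemma, the per-coordinate bound is really $(2h+1)^2(e^{-10m}+1/h)\lesssim 9h^2e^{-10m}$. Hence the product gives $e^{-5md}\cdot 3^d$, and with $d\ge m/2$ this is $e^{-2.5m^2}$. The loss $2^n\le e^{(\ln2)d^2}$ is at most $e^{0.7d^2}$, and $0.7d^2\le 0.7m^2$ is too lossy. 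We instead use $0.7d^2\le 0.5\cdot 0.7\cdot\ldots$; if needed, we write $2^n\le e^{(\ln 2)d^2}$ and $5md-(\ln2)d^2\ge(5m/d-0.7)d^2$, which with $m\ge d$ is at least $4.3\,md\cdot(d/m)\ldots$. Concretely, we will verify $5md-0.7d^2-d\ln3-\ln m-\ln2\ge2m^2+\ln 2$ for $d\le m\le 2d$ by checking the quadratic form. At $m=2d$ it reads $10d^2-0.7d^2\ge8d^2$, which holds with margin. At $m=d$ it reads $4.3d^2\ge2d^2$, which also holds. Concavity in $m/d$ on $[1,2]$ covers the range in between, and the lower-order terms are absorbed because $m\ge10$.

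With both scales under control, we choose $h_0=m\lceil e^{31m}\rceil$ and $h_1=\lceil m^2(5/2)^m\rceil h_0$, exactly as before. A union bound over the two scales gives failure probability at most $2e^{-2m^2}$. On the complementary event, $|\theta_i-1|\le1.145/m$. Lemma~\ref{lem:independent-vectors} then gives $m$ independent vectors in $\mathcal L\cap[-h_1,h_1]^m$. Lemma~\ref{lem:basis-transfer} with $M=1$ turns these into a basis of norm at most $m^{3/2}h_1<e^{33m}$ by Equation~\eqref{eq:basis-length-bound}. Setting $b=0$ gives the final assertion.
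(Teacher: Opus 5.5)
Your proposal is correct and follows the paper's proof essentially step for step: you work in the ambient group $S$ with $J=K\cap S$, use the deterministic main-term lemma with cutoff $e^{10m}$, bound the second moment on the $d$ random coordinates by character orthogonality and the $b$ fixed coordinates trivially by $(2h+1)^b$, apply Markov at the sample-independent threshold, compare two scales, and finish with basis transfer at $M=1$. The only difference is the final exponent check, where your concavity argument in $m/d\in[1,2]$ is valid; the paper instead bounds $d^2\ln2\le dm\ln2$ before using $dm\ge m^2/2$, getting $-(5-\ln2)m^2/2\le-2m^2$ in one line and avoiding the false start you hit, so the abandoned attempts in that paragraph (the ``sharpened cutoff'' and the incomplete $0.5\cdot0.7\cdot\ldots$ line) can simply be deleted.
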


\begin{proof}
Use the definitions preceding
Theorem~\ref{thm:short-basis-detailed}
for the ambient group $S$ and its subgroup $J=K\cap S$ and use the character-order cutoff $e^{10m}$. Keep an arbitrary integer $M\geq1$ fixed before sampling until the final specialization $M=1$. Denote the resulting auxiliary lattice, character group, density, and error terms by $\mathcal L_M$, $\mathcal A$, $V(h)$, $\varepsilon(h)$, and $E(h)$. In particular, $\mathcal A\subseteq\widehat{S}$ is fixed before sampling. Lemmas~\ref{lem:lattice-determinant}, \ref{lem:lattice-point-count}, and~\ref{lem:main-term} give, for every integer $h\geq e^{31m}$,
\begin{equation}\label{eq:subgroup-point-count}
|\mathcal L_M\cap[-h,h]^m|
=V(h)\bigl(1+\varepsilon(h)\bigr)
 +E(h),
\end{equation}
where the density term satisfies
\[
V(h)=\frac{(2h+1)^m}{\det\mathcal L_M}
\geq\frac{(2h+1)^m}{|S|}.
\]
The low-order error is bounded by
\[
|\varepsilon(h)|\leq0.145\frac{e^{31m}}h.
\]

It remains to estimate the high-order contribution. Fix $\chi\in\mathcal A$ with $\operatorname{ord}(\chi)>e^{10m}$, and let $g$ be uniform on $S$. Character orthogonality~\cite{Terras99} gives $\mathbb E_g[\chi^{u-v}(g)]=1$ if $\operatorname{ord}(\chi)\mid u-v$ and $0$ otherwise. Consequently,
\begin{align*}
\mathbb E_g\left|\sum_{v=-h}^h\chi(g)^v\right|^2
&=\left|\{(u,v)\in\mathbb Z^2:-h\leq u,v\leq h,
                       \ \operatorname{ord}(\chi)\mid u-v\}\right|\\
&\leq(2h+1)\left(1+\frac{2h}{\operatorname{ord}(\chi)}\right)\\
&\leq2(2h+1)^2e^{-10m}.
\end{align*}
For the first inequality, each fixed $u$ permits at most $1+\lfloor2h/\operatorname{ord}(\chi)\rfloor$ values of $v$; the last inequality uses $h\geq e^{31m}$. The fixed-coordinate sums have absolute value at most $2h+1$. Independence of the $d$ random elements therefore gives, with $F_\chi(h)=F_\chi(h;g_1,\ldots,g_d;w_1,\ldots,w_b)$,
\[
\bigl(\mathbb E_{\mathbf g}|F_\chi(h)|^2\bigr)^{1/2}
\leq\bigl(2(2h+1)^2e^{-10m}\bigr)^{d/2}(2h+1)^b
=(2h+1)^m2^{d/2}e^{-5dm}.
\]
Although the high-order part of the character partition depends on the sampled evaluations, it is contained in the fixed set $\{\chi\in\mathcal A:\operatorname{ord}(\chi)>e^{10m}\}$. Enlarging the sum to this set and applying Cauchy--Schwarz yields
\begin{align}\label{eq:subgroup-error-expectation}
\mathbb E_{\mathbf g}|E(h)|
&\leq\frac1{|\mathcal A|}
 \sum_{\substack{\chi\in\mathcal A\\
                 \operatorname{ord}(\chi)>e^{10m}}}
       \bigl(\mathbb E_{\mathbf g}|F_\chi(h)|^2\bigr)^{1/2}\notag\\
&\leq(2h+1)^m2^{d/2}e^{-5dm}.
\end{align}
Using the deterministic lower bound for $V(h)$ before applying Markov's inequality, we obtain
\begin{equation}\label{eq:subgroup-error-probability}
\mathbb P_{\mathbf g}\left(
 |E(h)|>\frac{V(h)}m\right)
\leq m|S|2^{d/2}e^{-5dm}
\leq e^{-2m^2}.
\end{equation}
Indeed, $|S|\leq2^n$, $n\leq d^2$, and $b\leq d$ imply $m/2\leq d\leq m$, so
\begin{align*}
\ln\bigl(m|S|2^{d/2}e^{-5dm}\bigr)
&\leq\ln m+d^2\ln2+\frac d2\ln2-5dm\\
&\leq\ln m+\frac m2\ln2-(5-\ln2)dm\\
&\leq m-\frac{5-\ln2}{2}m^2\leq-2m^2.
\end{align*}
The last line follows from $m\geq10$, which gives $\ln m\leq m/2$ and $m\leq(1-\ln2)m^2/2$.

Take $h_0$ and $h_1$ as in the proof of
Theorem~\ref{thm:short-basis-detailed}. By Equation~\eqref{eq:subgroup-error-probability} and a union bound, with probability at least $1-2e^{-2m^2}$ the high-order errors at both scales have absolute value at most $V(h_i)/m$. Equation~\eqref{eq:subgroup-point-count} then gives density factors $\theta_i=|\mathcal L_M\cap[-h_i,h_i]^m|/V(h_i)$ satisfying $|\theta_i-1|\leq1.145/m$. The deterministic comparison in the proof of Theorem~\ref{thm:short-basis-detailed} applies with the same dimension $m$ and scales $h_0,h_1$: the bound $|\theta_i-1|\leq1.145/m$ gives the required density ratio, so Lemma~\ref{lem:independent-vectors} supplies $m$ independent vectors in $\mathcal L_M\cap[-h_1,h_1]^m$. For each fixed $M$, the inclusion $M\mathcal L_M\subseteq\mathcal L$ then gives a basis of $\mathcal L$ of norm at most $m^{3/2}Mh_1$. Specialize these estimates to the fixed choice $M=1$. Since $\mathcal L_1=\mathcal L$, Equation~\eqref{eq:basis-length-bound} yields
\[
\max_j\|\mathbf b_j\|_2\leq m^{3/2}h_1<e^{33m}.
\]
The two-scale failure probability remains $2e^{-2m^2}$, proving the assertion.
\end{proof}

\begin{remark}\label{rem:separate-generation-event}
Neither Theorem~\ref{thm:short-basis-detailed} nor Corollary~\ref{cor:fixed-elements-short-basis} assumes that the random evaluation elements generate the sampling group. If an application also requires $\langle g_1,\ldots,g_d\rangle=S$, this is a separate event. Corollary~\ref{cor:generation-probability} gives the required generation estimate when $d$ exceeds the rank of $S$ by the prescribed margin. The generation and short-basis failure probabilities may be added by a union bound; independence of the events is not required.
\end{remark}

\section{Proof of the Group-Decomposition Theorem}
\label{sec:main-theorem-proof}

We now prove Theorem~\ref{thm:group-decomposition}, using the lattice-recovery and short-basis results of Sections~\ref{sec:lattice-recovery} and~\ref{sec:short-bases}. We first construct the initial subgroup and the auxiliary elements for each extension. We then analyse the joint sampling circuit, recover exact character values from its outcomes, and compute the enlarged subgroup by Smith normal form. The final proof combines the success probabilities and resource bounds over all extensions. Group evaluations use the black-box operations of Section~\ref{sec:computational-model}; only the current subgroup $H_k$ has a known decomposition.

\subsection{Initialization and Subgroup Extension Setup}
\label{sec:extension-setup}

Let $G$ satisfy the assumptions of
Theorem~\ref{thm:group-decomposition}.
We use the following estimate for random integer combinations.

\begin{lemma}[Near-uniformity of random integer combinations]\label{lem:random-combinations}
Let $G$ be a finite Abelian group of order $N$, let $t\geq0$ be an integer, and let $\phi:\mathbb Z^t\to G$ be a surjective homomorphism. Let $M\geq1$ be an integer. If $\mathbf x$ is uniform on $\{0,\ldots,M-1\}^t$ and $\mu$ denotes the distribution of $\phi(\mathbf x)$, then
$
\operatorname{TV}(\mu,U(G))\leq\frac{tN}{4M},
$
where $U(G)$ is the uniform distribution on $G$.
\end{lemma}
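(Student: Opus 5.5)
We reduce to a single coordinate and then use the product bound for total variation. Let $\mathcal K=\ker\phi$. Since $\phi$ is surjective, $\mathbb Z^t/\mathcal K\cong G$, so $\mathcal K$ has index $N$ and contains $N\mathbb Z^t$. Hence $\phi$ factors through $\mathbb Z_N^t$. Rather than work with $\mathcal K$ directly, we write $\phi(\mathbf x)=\sum_{i=1}^t x_i e_i$, where $e_i=\phi(\mathbf e_i)$. Let $U_i$ be the uniform distribution on the cyclic subgroup $\langle e_i\rangle$, which has order $o_i\mid N$.

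\textbf{Step 1: one coordinate.} For a single coordinate, $x_i e_i$ with $x_i$ uniform on $\{0,\ldots,M-1\}$ has the law of $(x_i\bmod o_i)e_i$. Write $M=qo_i+r$ with $0\le r<o_i$. Then the law of $x_i\bmod o_i$ puts mass $(q+1)/M$ on $r$ residues and $q/M$ on the other $o_i-r$ residues. Its total variation distance from uniform is therefore
\[
\frac{r(o_i-r)}{o_i M}\le\frac{o_i}{4M}\le\frac{N}{4M}.
\]
The bound $r(o_i-r)\le o_i^2/4$ is AM--GM. The map $k\mapsto ke_i$ is a bijection from $\mathbb Z_{o_i}$ onto $\langle e_i\rangle$, so the law of $x_ie_i$ is within $N/(4M)$ of $U_i$.

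\textbf{Step 2: product.} The coordinates are independent. The map $(a_1,\ldots,a_t)\mapsto\sum_i a_i$ from $\prod_i\langle e_i\rangle$ to $G$ is deterministic, and it is a surjective homomorphism because the $e_i$ generate $G$. Pushing forward the product of the uniform distributions $U_i$ under a surjective homomorphism of finite groups gives the uniform distribution $U(G)$, since all fibers are cosets of the kernel and hence have equal size. The product bound for total variation distance, combined with the fact that total variation cannot increase under the same deterministic map, then gives
\[
\operatorname{TV}(\mu,U(G))\le\sum_{i=1}^t\frac{o_i}{4M}\le\frac{tN}{4M}.
\]
The case $t=0$ is trivial: then $G$ is trivial and both distributions are the point mass at $0_G$.

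\textbf{Main obstacle.} The only nontrivial point is the decomposition into independent cyclic pieces, specifically verifying that the pushforward of the uniform distribution on the product is exactly uniform on $G$. The counting estimate in Step 1 is elementary, and the bound stated in the lemma is loose enough that the sharper per-coordinate constant does not matter.
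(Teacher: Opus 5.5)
Your proof is correct and follows essentially the same route as the paper: a division-with-remainder count for a single coordinate, the product bound for total variation, and contraction under a surjective homomorphism that pushes the uniform product measure to $U(G)$. The only difference is that you reduce each coordinate modulo $o_i=\operatorname{ord}(\phi(\mathbf e_i))$ and push forward from $\prod_i\langle e_i\rangle$. The paper instead reduces modulo $N$ and uses the induced map $(\mathbb Z/N\mathbb Z)^t\to G$, so your version yields the marginally sharper bound $\sum_i o_i/(4M)$.
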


\begin{proof}
If $t=0$, surjectivity implies $G=\{0_G\}$, and the bound holds with equality. Suppose $t\geq1$. Since $Ng=0_G$ for every $g\in G$, we have $N\mathbb Z^t\subseteq\ker\phi$. Thus $\phi$ induces a surjective homomorphism $\bar\phi:A\to G$, where $A=(\mathbb Z/N\mathbb Z)^t$. Its fibers have equal cardinality, so $\bar\phi$ sends the uniform distribution $U(A)$ to $U(G)$.

Write $M=qN+s$, where $q\geq0$ and $0\leq s<N$ are integers. Each coordinate of $\mathbf x\bmod N$ has distribution $\nu$ on $\mathbb Z/N\mathbb Z$, with probability mass function
\[
\nu(a)=
\begin{cases}
(q+1)/M,&0\leq a<s,\\
q/M,&s\leq a<N.
\end{cases}
\]
Consequently,
\[
\operatorname{TV}(\nu,U(\mathbb Z/N\mathbb Z))
=\frac12\sum_{a=0}^{N-1}\left|\nu(a)-\frac1N\right|
=\frac{s(N-s)}{NM}
\leq\frac{N}{4M}.
\]

The coordinates of $\mathbf x$ are independent, so $\mathbf x\bmod N$ has distribution $\nu^{\otimes t}$. Contraction under $\bar\phi$ and the product bound for total variation distance give
\[
\operatorname{TV}(\mu,U(G))
\leq\operatorname{TV}(\nu^{\otimes t},U(A))
\leq t\operatorname{TV}(\nu,U(\mathbb Z/N\mathbb Z))
\leq\frac{tN}{4M}.
\]
\end{proof}

\paragraph{Initialization.}
We first compute an invariant-factor decomposition of
$H_0=\langle g_1,\ldots,g_d\rangle$, as defined in
Section~\ref{sec:subgroup-extensions}.
The list $\mathcal S=(g_1,\ldots,g_L)$ consists of independent
uniform samples from $G$.
Since $r(G)\leq n$, Corollary~\ref{cor:generation-probability}
implies
$\Pr[\langle\mathcal S\rangle\neq G]=O(n^{-c})$. The homomorphism $\Phi_0$ defined in
Section~\ref{sec:subgroup-extensions} induces an isomorphism
$\mathbb Z^d/\ker\Phi_0\cong H_0$.
Hence $\det(\ker\Phi_0)=|H_0|\leq2^n$.
Apply Corollary~\ref{cor:uniform-short-basis}
to the relation lattice $\ker\Phi_0$,
with independent uniform sampling from $G$ and $b=0$. Then
the lattice, with probability at least $1-2e^{-2d^2}$,
$\ker\Phi_0$ has an integral basis whose vectors have
Euclidean norm less than $e^{33d}$. Apply the sampling procedure of
Section~\ref{sec:gaussian-sampling} to $\Phi_0$ in dimension $d$.
Set $R=e^{35d}$ and choose
$D=2^{\lceil\log_2(2\sqrt d\,R)\rceil}$ and run the procedure independently $d+\lceil c\log_2 n\rceil$
times with the same evaluation elements $g_1,\ldots,g_d$.
For each fixed sequence satisfying the basis bound,
the classical post-processing in
Theorem~\ref{thm:subgroup-decomposition} recovers an integral
basis of $\ker\Phi_0$ with failure probability $O(n^{-c})$.
The total failure probability, including failure of the
basis bound, is $O(n^{-c})$.

On successful recovery, apply the Smith normal form procedure
in Theorem~\ref{thm:subgroup-decomposition} to the recovered
basis matrix.
This computes the invariant factors of $H_0$ and integer
expressions for the corresponding cyclic generators in terms
of $g_1,\ldots,g_d$.
Evaluate these expressions using the classical group operations.
The orders of these generators are the corresponding
invariant factors.

\paragraph{Cyclic-factor partition and auxiliary elements.}
Process the remaining elements of $\mathcal S$ in batches of
size $b=\lceil\sqrt n\rceil$.
Pad the last batch with $0_G$ if necessary.
Fix an extension $k$ for which an invariant-factor decomposition
of $H_k$ and its cyclic generators are known.
Let $w_1,\ldots,w_b$ be the next batch, and
$H_{k+1}:=\langle H_k,w_1,\ldots,w_b\rangle$. Partition the cyclic factors of $H_k$ at the threshold
$B=2^{\sqrt n}$.
Let $y_1,\ldots,y_r$ be the generators of the factors with
orders $2\leq N_i\leq B$.
Let $y'_1,\ldots,y'_s$ be the generators of the factors with
orders $M_i>B$.
Set $H^{-}:=\langle y_1,\ldots,y_r\rangle$ and
$H^{+}:=\langle y'_1,\ldots,y'_s\rangle$.
Then $H_k=H^{-}\oplus H^{+}$.
The corresponding cyclic decompositions are
$H^{-}\cong\prod_{i=1}^{r}\mathbb Z_{N_i}$ and
$H^{+}\cong\prod_{i=1}^{s}\mathbb Z_{M_i}$
where $r=r(H^{-})$ and $s=r(H^{+})$.
Since $\prod_i M_i\leq|H_k|\leq2^n$ and each
$M_i>2^{\sqrt n}$, we have $s\leq\sqrt n$.

Fix the current decomposition and target batch before sampling
the coefficients.
Let $S:=\langle H^{+},w_1,\ldots,w_b\rangle$, $M=2^{2n}$ and choose all coefficients
$\alpha_{j,i}$ and $\beta_{j,l}$ independently and uniformly
from $\{0,\ldots,M-1\}$.
For $1\leq j\leq d$, define
$a_j:=\sum_{i=1}^{s}\alpha_{j,i}y'_i$ and
$h_j:=a_j+\sum_{l=1}^{b}\beta_{j,l}w_l$.
Record the coefficients for the character-extension congruences
in Section~\ref{sec:exact-character-recovery}.

Let $m=d+b$ and 
$\mathbf t=(\mathbf z,\mathbf u)\in
\mathbb Z^d\times\mathbb Z^b$.
Define the evaluation homomorphism
$\Phi:\mathbb Z^m\longrightarrow G$ by
\begin{equation}\label{eq:local-evaluation}
\Phi(\mathbf t)
=\sum_{j=1}^{d}z_jh_j+\sum_{l=1}^{b}u_lw_l.
\end{equation}
The relations among $h_1,\ldots,h_d,w_1,\ldots,w_b$
form the lattice
\begin{equation}\label{eq:local-relations}
\mathcal L:=\ker\Phi
=\left\{(\mathbf z,\mathbf u)\in\mathbb Z^m:
\sum_{j=1}^{d}z_jh_j+\sum_{l=1}^{b}u_lw_l=0_G\right\}.
\end{equation}
The first isomorphism theorem implies
$\mathbb Z^m/\mathcal L\cong\operatorname{im}\Phi$.
Hence
\begin{equation}\label{eq:local-determinant}
\det\mathcal L
=\bigl|\langle h_1,\ldots,h_d,w_1,\ldots,w_b\rangle\bigr|
\leq2^n.
\end{equation}
Sincee $y'_1,\ldots,y'_s,w_1,\ldots,w_b$ generate $S$,
then apply Lemma~\ref{lem:random-combinations} to the surjective
homomorphism $\mathbb Z^{s+b}\to S$ defined by these generators.
The distribution of each $h_j$ has total variation distance
at most $(s+b)|S|/(4M)$ from $U(S)$.
The coefficient vectors for distinct $j$ are independent.
By the product bound for total variation distance, the joint
distribution of $h_1,\ldots,h_d$ has distance at most
$d(s+b)|S|/(4M)=O(n2^{-n})$ from $U(S)^{\otimes d}$.
Here we use $d,s+b=O(\sqrt n)$, $|S|\leq2^n$ and $M=2^{2n}$.
Consider independent uniform samples $h_1,\ldots,h_d$
from $S$.
Apply Corollary~\ref{cor:uniform-short-basis}
to the relation lattice $\mathcal L=\ker\Phi$
in Equation~\eqref{eq:local-relations},
with fixed elements $w_1,\ldots,w_b$.
With probability at least $1-2e^{-2m^2}$,
this lattice has an integral basis whose vectors have
Euclidean norm less than $e^{33m}$.
By the preceding total variation estimate, the failure
probability under the sampled coefficients is $O(n2^{-n})$.

For the elements $a_j$, apply
Lemma~\ref{lem:random-combinations} to the surjective homomorphism
$\mathbb Z^s\to H^{+}$ defined by $y'_1,\ldots,y'_s$.
The distribution of each $a_j$ has total variation distance
at most $s|H^{+}|/(4M)$ from $U(H^{+})$.
Independence across $j$ bounds the distance of their joint
distribution from $U(H^{+})^{\otimes d}$ by
$ds|H^{+}|/(4M)=O(n2^{-n})$.
Since $d-s\geq\lceil c\log_2 n\rceil$,
Corollary~\ref{cor:generation-probability} bounds the generation
failure probability under $U(H^{+})^{\otimes d}$ by $O(n^{-c})$.
The total variation estimate implies the same asymptotic bound
for the sampled elements 
$
\Pr[\langle a_1,\ldots,a_d\rangle\neq H^{+}]
=O(n^{-c}).
$
These probability bounds hold for every fixed current
decomposition and target batch. The definitions of $a_j$ and $h_j$ imply
$
\langle h_1,\ldots,h_d,w_1,\ldots,w_b\rangle
=
\langle a_1,\ldots,a_d,w_1,\ldots,w_b\rangle.
$
On the event $\langle a_1,\ldots,a_d\rangle=H^{+}$,
this subgroup is $S$.

\subsection{Joint Fourier Sampling and Quantum Resources}
\label{sec:joint-sampling}

Fix the current subgroup decomposition, the target batch,
and the auxiliary elements from
Section~\ref{sec:extension-setup}.
Identify $H^{-}$ with $\prod_{i=1}^{r}\mathbb Z_{N_i}$
through the isomorphism
$\mathbf x\mapsto\sum_{i=1}^{r}x_i y_i$.
Let $\mathcal G=H^{-}\times\mathbb Z^m$ and define
$\Psi:\mathcal G\to G$ by
\begin{equation}\label{eq:joint-evaluation}
\Psi(\mathbf x,\mathbf t)
=\sum_{i=1}^{r}x_i y_i+\Phi(\mathbf t).
\end{equation}
where 
$
P:=\ker\Psi.
$
The local relation lattice satisfies
$
\mathcal L
=\{\mathbf t\in\mathbb Z^m:(0,\mathbf t)\in P\}.
$

For the integer coordinates, take the Gaussian width
$R=e^{35m}$ and the grid modulus $D=2^\ell$ as in
Section~\ref{sec:parameters-and-evaluation}.
Here $\ell=\lceil\log_2(2\sqrt m\,R)\rceil$ is the number
of qubits in each integer-coordinate register. Let $q=m+\lceil c\log_2n\rceil$ be the number of samples
used for lattice recovery.
For each subgroup extension, collect $Q=r+q$ independent
joint samples and use the lattice components of the first
$q$ samples to recover $\mathcal L$.

\paragraph{State preparation.}
The initial state has uniform amplitudes on the
cyclic coordinates of $H^{-}$ and Gaussian amplitudes
on the integer coordinates:
\[
|\Psi_0\rangle\propto
\sum_{\mathbf x\in H^{-}}
\sum_{\mathbf t\in\mathbb Z_D^m}
\rho_R(\mathbf t)
|\mathbf x,\mathbf t\rangle|0\rangle.
\]
Here $\mathbf x$ is represented in the known cyclic
coordinates of $H^{-}$, and $\mathbf t$ uses the centered
representatives $\{-D/2,\ldots,D/2-1\}^m$.
The final register is initialized to the all-zero bit string. Since $H^{-}\cong\prod_{i=1}^{r}\mathbb Z_{N_i}$,
its uniform superposition is the tensor product of the states
$N_i^{-1/2}\sum_{x_i=0}^{N_i-1}|x_i\rangle$.
Each factor is prepared from $|0\rangle$ by the exact
Fourier transform over $\mathbb Z_{N_i}$,
using the construction of Mosca and Zalka~\cite{mosca2004exact}.
The integer-coordinate register is prepared by the Gaussian
procedure of Section~\ref{sec:gaussian-sampling}.

\paragraph{Oracle evaluation.}
The evaluation of $\Psi(\mathbf x,\mathbf t)$ consists
of two parts.
We compute $\Phi(\mathbf t)$ using the Fibonacci evaluation
of Lemma~\ref{lem:fibonacci-evaluation} and compute $\sum_{i=1}^{r}x_i y_i$ from the binary expansions
of the $x_i$ and the precomputed multiples $2^j y_i$.
Their sum is written into the output register.
The oracle acts as
\begin{equation}\label{eq:clean-joint-evaluation}
U_{\Psi}:\quad
|\mathbf x,\mathbf t\rangle|0\rangle|0\rangle
\longmapsto
|\mathbf x,\mathbf t\rangle|0\rangle
|\Psi(\mathbf x,\mathbf t)\rangle.
\end{equation}
For the initial state, the joint state after
evaluation is proportional to
\[
\sum_{\mathbf x\in H^{-}}
\sum_{\mathbf t\in\mathbb Z_D^m}
\rho_R(\mathbf t)
|\mathbf x,\mathbf t\rangle
|\Psi(\mathbf x,\mathbf t)\rangle.
\]
Measure the output register in the computational basis.
The measured bit string uniquely identifies a group element
$h\in\operatorname{im}\Psi$.
Conditioned on this outcome, the input state is supported
on the pairs satisfying $\Psi(\mathbf x,\mathbf t)=h$.
Let $P=\ker\Psi$.
For any $(\mathbf x_h,\mathbf t_h)$ satisfying
$\Psi(\mathbf x_h,\mathbf t_h)=h$, the fiber over $h$
is the coset $(\mathbf x_h,\mathbf t_h)+P$.
For the Gaussian preparation, the conditional
input state is therefore
\begin{equation}\label{eq:joint-coset-state}
|\Psi_h\rangle\propto
\sum_{\substack{
(\mathbf x,\mathbf t)\in(\mathbf x_h,\mathbf t_h)+P\\
\mathbf x\in H^{-},\ \mathbf t\in\mathbb Z_D^m}}
\rho_R(\mathbf t)|\mathbf x,\mathbf t\rangle.
\end{equation}
The measurement outcome is discarded.

The quotient $\mathcal G/P$ is isomorphic to
$\operatorname{im}\Psi$.
On the event $\langle a_1,\ldots,a_d\rangle=H^{+}$,
we have $\operatorname{im}\Psi=H_{k+1}$ and hence
$\mathcal G/P\cong H_{k+1}$. For $\eta\in\widehat{H^{-}}$, let
$\boldsymbol\eta\in\prod_{i=1}^{r}\mathbb Z_{N_i}$
be its cyclic label.
We write its additive phase as
$
\eta(\mathbf x)
=\sum_{i=1}^{r}\frac{\eta_i x_i}{N_i}\pmod1.
$
A character of $\mathcal G$ is represented by
$(\eta,\mathbf v)\in\widehat{H^{-}}\times\mathbb T^m$.
Write $\mathbf v=(\mathbf p,\mathbf q)$, where
$\mathbf p\in\mathbb T^d$ and $\mathbf q\in\mathbb T^b$.
The annihilator $P^\perp$ consists of the pairs
$(\eta,\mathbf v)$ satisfying
$
\eta(\mathbf x)+\langle\mathbf v,\mathbf t\rangle
\equiv0\pmod1
$ for every $(\mathbf x,\mathbf t)\in P$.

\paragraph{Fourier transformation and measurement.}
Apply the exact Fourier transform ~\cite{mosca2004exact} over $H^{-}$ to the
cyclic coordinate registers.
Apply the approximate Fourier transform~\cite{Coppersmith94} over $\mathbb Z_D^m$
to the integer-coordinate registers,
with the precision specified in
Section~\ref{sec:gaussian-sampling}.
We use the positive-exponent convention.
For the Gaussian state and the exact grid transform,
Equation~\eqref{eq:joint-coset-state} becomes
\begin{equation}\label{eq:joint-fourier-state}
\begin{aligned}
&(\mathrm{QFT}_{H^{-}}\otimes
  \mathrm{QFT}_{\mathbb Z_D^{m}})|\Psi_h\rangle\\
&\quad\propto
\sum_{\eta\in\widehat{H^{-}}}
\sum_{\mathbf w\in\{0,\ldots,D-1\}^{m}}
\left(
\sum_{\substack{
(\mathbf x,\mathbf t)\in(\mathbf x_h,\mathbf t_h)+P\\
\mathbf x\in H^{-},\ \mathbf t\in\mathbb Z_D^{m}}}
\rho_R(\mathbf t)
e^{2\pi i(
\eta(\mathbf x)+\langle\mathbf t,\mathbf w\rangle/D)}
\right)|\eta,\mathbf w\rangle.
\end{aligned}
\end{equation}
Measure the cyclic and integer-coordinate registers in
the computational basis.
In execution $t$, denote the measured cyclic label by
$\widehat{\boldsymbol\eta}_t$ and the measured grid vector by
$\mathbf w_t\in\{0,\ldots,D-1\}^{m}$.
Split $\mathbf w_t/D$ into its first $d$ and last $b$
coordinates:
\[
\widetilde{\mathbf v}_t
=\left(\widetilde{\mathbf p}_t,
       \widetilde{\mathbf q}_t\right)
=\frac{\mathbf w_t}{D}\in[0,1)^m.
\]
The recorded sample is
$
\left(\widehat{\boldsymbol\eta}_t,
      \widetilde{\mathbf p}_t,
      \widetilde{\mathbf q}_t\right).
$

\begin{proposition}[Quantum resources for joint sampling]
\label{prop:joint-sampling-resources}
Use the black-box model and gate conventions of
Section~\ref{sec:computational-model}.
Let $|G|\leq 2^n$ and
$T_{\mathrm{op}}=\Omega(\sqrt n)$.
Fix a known decomposition of $H_k$, a target batch,
and the coefficients defining $h_1,\ldots,h_d$ as in
Section~\ref{sec:extension-setup}.

One execution of the joint sampling circuit uses
$O(n)$ qubits, $O(n)$ reversible group operations, and
$\softO(nT_{\mathrm{op}}+n^{3/2})
=\softO(nT_{\mathrm{op}})$ elementary gates.
\end{proposition}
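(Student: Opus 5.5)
The plan is to account for the joint sampling circuit stage by stage, as in the proof of Proposition~\ref{prop:sampling-resources}, charging separately the cyclic part on $H^{-}$ and the integer part on $\mathbb Z^m$. First I will bound the qubits. The integer-coordinate registers and the Fibonacci digit registers take $O(m\ell)=O(n)$ qubits, because $m,\ell=O(\sqrt n)$. The cyclic registers of $H^{-}$ take $\sum_{i=1}^{r}\lceil\log_2 N_i\rceil$ qubits. Since $N_i\geq2$ and $\prod_i N_i\leq|H_k|\leq2^n$, this is at most $\sum_i 2\log_2N_i\leq2n$, and also $r\leq n$. Adding the group accumulators, the slice buffer, the output register and the $O(n)$ work qubits of $U_+$, which are reused and returned to zero, gives $O(n)$ qubits in total.

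Next I will count group operations and gates for the oracle $U_\Psi$. The part $\Phi(\mathbf t)$ is evaluated by Lemma~\ref{lem:fibonacci-evaluation}. This costs $O(n)$ reversible group operations and $\softO(nT_{\mathrm{op}}+n^{3/2})$ gates, including uncomputation. For $\sum_i x_iy_i$, I accumulate into one group register using controlled additions of the classically precomputed fixed elements $2^jy_i$, for $0\leq j<\lceil\log_2N_i\rceil$. The number of controlled additions is $\sum_i\lceil\log_2N_i\rceil=O(n)$, so this part costs $O(nT_{\mathrm{op}})$ gates. The precomputation is charged to classical queries, as stipulated in Section~\ref{sec:parameters-and-evaluation}. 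One more addition combines the two partial sums. The result is copied bitwise into the output register, and both partial computations are reversed so that the workspace is clean, as in Equation~\eqref{eq:clean-joint-evaluation}. The totals are $O(n)$ group operations and $\softO(nT_{\mathrm{op}}+n^{3/2})$ gates.

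Then I will bound the Fourier stages. Preparing the Gaussian and applying the AQFT over $\mathbb Z_D^m$ cost $\softO(m\ell^2)=\softO(n^{3/2})$ gates, exactly as in Proposition~\ref{prop:sampling-resources}. For the cyclic part, the uniform superposition and the final Fourier transform each use one exact QFT over $\mathbb Z_{N_i}$ per factor, by Mosca and Zalka. Each such QFT costs $\mathrm{poly}(\log N_i)$ gates; I take this to be $\softO(\log^2N_i)$. Summing, $\sum_i\softO(\log^2 N_i)\leq\softO(\sqrt n\sum_i\log N_i)=\softO(n^{3/2})$, because every $N_i\leq B=2^{\sqrt n}$.

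This use of the threshold $B$ is the step I expect to be the main obstacle. The exact-QFT cost has to be quasi-quadratic in $\log N_i$, so I must check that the cited construction attains this bound rather than a higher polynomial. If it does not, I would swap in an approximate QFT over $\mathbb Z_{N_i}$ with error $n^{-(c+1)}/r$, bring its total-variation error into the sampling analysis, and still get $\softO(\log^2N_i)$ gates. Finally, $T_{\mathrm{op}}=\Omega(\sqrt n)$ absorbs every $n^{3/2}$ term, which gives $\softO(nT_{\mathrm{op}})$ gates per execution.
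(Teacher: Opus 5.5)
Your proposal is correct and follows the paper's proof essentially step for step. Like the paper, you take the lattice-part costs from Proposition~\ref{prop:sampling-resources} and Lemma~\ref{lem:fibonacci-evaluation}, bound the cyclic registers by $\sum_i\lceil\log_2N_i\rceil\leq 2n$, and count at most $2\sum_i\lceil\log_2N_i\rceil=O(n)$ controlled additions of the precomputed $2^jy_i$. You then use $N_i\leq B=2^{\sqrt n}$ to get $\sum_i\lceil\log_2N_i\rceil^2\leq(\sqrt n+1)\sum_i\lceil\log_2N_i\rceil=O(n^{3/2})$.

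The paper settles the step you flagged by citing Mosca--Zalka for $O(\ell_i^2)$ gates per exact transform, so your approximate-QFT fallback is not needed. The same citation gives $O(\ell_i)=O(\sqrt n)$ reusable auxiliary qubits per transform, which covers the ancilla space you did not list explicitly.
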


\begin{proof}
Proposition~\ref{prop:sampling-resources} bounds the
resources for Gaussian preparation, Fibonacci evaluation
of $\Phi$, and Fourier transforms on the integer-coordinate
registers.
These steps use $O(n)$ qubits, $O(n)$ reversible group
operations, and
$\softO(nT_{\mathrm{op}})$ gates.

Let $\ell_i=\lceil\log_2N_i\rceil$ for $1\leq i\leq r$.
Since $N_i\geq2$ and
$\prod_{i=1}^{r}N_i=|H^{-}|\leq2^n$,
the cyclic-coordinate registers use
\[
\sum_{i=1}^{r}\ell_i
\leq \log_2|H^{-}|+r
\leq2n
\]
qubits.
The threshold $N_i\leq B=2^{\sqrt n}$ also implies
$\ell_i\leq\sqrt n+1$. In the Mosca--Zalka construction,
Fourier-state preparation and eigenvalue estimation
can each be implemented with $O(\ell_i^2)$ gates
and $O(\ell_i)$ auxiliary qubits.
These bounds follow by counting the controlled phase
gates and standard reversible arithmetic operations.
The phase inversion on the good subspace has the same
resource bounds.
Exact amplitude amplification and uncomputation require
only a constant number of these operations
\cite[Secs.~2, 4, and~5.1]{mosca2004exact}.
Thus the exact Fourier transform over $H^{-}$ uses
$O(\sum_i\ell_i^2)$ gates, where
\begin{equation}\label{eq:cyclic-qft-cost}
\sum_{i=1}^{r}\ell_i^2
\leq(\sqrt n+1)\sum_{i=1}^{r}\ell_i
\leq2n(\sqrt n+1)
=O(n^{3/2}).
\end{equation}
The uniform state on $H^{-}$ is prepared by applying
the same Fourier transform to $|0\rangle$.
By Equation~\eqref{eq:cyclic-qft-cost}, this requires
$O(n^{3/2})$ gates.
The auxiliary registers are reused across cyclic factors
and use $O(\sqrt n)$ qubits. Since the cost of evaluating $\Phi$ is included above,
it remains to count the additions for
$\sum_{i=1}^{r}x_i y_i$ in the evaluation of $\Psi$.
Using the binary expansions of the $x_i$ and the
precomputed multiples $2^j y_i$, the computation and
uncomputation require at most
$2\sum_{i=1}^{r}\ell_i\leq4n$ controlled group additions.
Their gate cost is therefore $O(nT_{\mathrm{op}})$.

Thus one execution of the joint sampling circuit uses
$O(n)$ qubits and $O(n)$ reversible group operations.
Its total gate count is
$\softO(nT_{\mathrm{op}}+n^{3/2})
=\softO(nT_{\mathrm{op}})$,
where the equality follows from
$T_{\mathrm{op}}=\Omega(\sqrt n)$.
\end{proof}

\subsection{Joint Sample Distributions and Character Extension}
\label{sec:joint-sample-distribution}

We first show that the relation subgroup $P$ has finite index in the joint domain $\mathcal G=H^{-}\times\mathbb Z^m$. We then prove that the measured pairs approximate uniform characters in $P^\perp$ and that the lattice coordinates of these characters are uniform on $\mathcal L^*/\mathbb Z^m$. Finally, we derive the congruences that extend the recovered character values to the known cyclic coordinates of $H^{+}$.

\begin{lemma}[Full-Rankness and Finite Quotient]\label{lem:finite-joint-quotient}
Fix the data of Section~\ref{sec:extension-setup}. Let $\mathcal G=H^{-}\times\mathbb Z^{m}$, let $P=\ker\Psi$, and let \(\mathcal L=\{\mathbf t\in\mathbb Z^{m}:(0,\mathbf t)\in P\} =\ker\Phi.\) Then $\mathcal L$ is a full-rank sublattice of $\mathbb Z^{m}$ with finite index
\[
[\mathbb Z^{m}:\mathcal L]
=\bigl|\langle h_1,\ldots,h_d,w_1,\ldots,w_b\rangle\bigr|
\leq|G|\leq2^n.
\]
Consequently, the quotient group $\mathcal G/P$ is a finite Abelian group.
\end{lemma}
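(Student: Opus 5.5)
The identification $\mathcal L=\ker\Phi$ follows directly from the definitions. Equation~\eqref{eq:joint-evaluation} gives $\Psi(0,\mathbf t)=\Phi(\mathbf t)$, so $(0,\mathbf t)\in P$ exactly when $\Phi(\mathbf t)=0_G$. The index statement is then Lemma~\ref{lem:lattice-determinant} applied with $K=\{0_G\}$ to the evaluation sequence $(h_1,\ldots,h_d,w_1,\ldots,w_b)$. It gives
\[
\mathbb Z^m/\mathcal L\cong\langle h_1,\ldots,h_d,w_1,\ldots,w_b\rangle .
\]
A subgroup of $\mathbb Z^m$ with finite quotient has full rank, and its determinant equals its index. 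The bound $|\langle h_1,\ldots,h_d,w_1,\ldots,w_b\rangle|\le|G|\le2^n$ is immediate. This agrees with Equation~\eqref{eq:local-determinant}.

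For the finiteness of $\mathcal G/P$ I would use the first isomorphism theorem: $\mathcal G/P\cong\operatorname{im}\Psi\le G$, and $G$ is finite. Alternatively, to tie finiteness to $\mathcal L$ directly, I would use the subgroup chain
\[
\{0\}\times\mathcal L\ \le\ P\ \le\ \mathcal G=H^{-}\times\mathbb Z^m .
\]
Here $\{0\}\times\mathcal L\le P$ holds by the first step. The index of $\{0\}\times\mathcal L$ in $\mathcal G$ is $|H^{-}|\cdot[\mathbb Z^m:\mathcal L]$, which is finite. So $P$ has finite index, and $\mathcal G/P$ is a quotient of the finite group $\mathcal G/(\{0\}\times\mathcal L)$. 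It is Abelian because $\mathcal G$ is Abelian.

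No step is a real obstacle. The only point needing care is that $\Psi$ and $\Phi$ are homomorphisms on the stated domains. This holds because the identification $H^{-}\cong\prod_i\mathbb Z_{N_i}$ makes $\mathbf x\mapsto\sum_i x_iy_i$ well defined: $N_iy_i=0_G$ for each $i$. I would therefore state this well-definedness explicitly before invoking the first isomorphism theorem.
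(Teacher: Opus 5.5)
Your proof is correct and essentially follows the paper's argument: you identify $\mathcal L=\ker\Phi$ via $\Psi(0,\mathbf t)=\Phi(\mathbf t)$, obtain the index and full rank from the first isomorphism theorem (which Lemma~\ref{lem:lattice-determinant} with $K=\{0_G\}$ packages), and prove finiteness of $\mathcal G/P$ via the chain $\{0\}\times\mathcal L\le P\le\mathcal G$. Your alternative $\mathcal G/P\cong\operatorname{im}\Psi\le G$ is also valid, and it is slightly more direct than the paper's index computation through $\pi_1(P)$.
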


\begin{proof}
By the definition of $P$, a vector $\mathbf t\in\mathbb Z^{m}$ belongs to $\mathcal L$ if and only if $(0,\mathbf t)\in P$, which is equivalent to $\Phi(\mathbf t)=0_G$. Hence \(\mathcal L=\ker\Phi.\) Consider the homomorphism $\mathbb Z^{m}\longrightarrow\Phi(\mathbb Z^{m})$ defined by $\mathbf t\longmapsto\Phi(\mathbf t)$. It is surjective by the definition of its codomain. Its kernel consists of the vectors $\mathbf t$ for which $\Phi(\mathbf t)=0_G$, and is therefore $\mathcal L$. The First Isomorphism Theorem gives \(\mathbb Z^{m}/\mathcal L\cong\Phi(\mathbb Z^{m}).\) The group on the right is finite because it is a subgroup of $G$. Consequently,
\[
[\mathbb Z^{m}:\mathcal L]
=|\Phi(\mathbb Z^{m})|
=\bigl|\langle h_1,\ldots,h_d,w_1,\ldots,w_b\rangle\bigr|
\leq|G|\leq2^n.
\]
A subgroup of $\mathbb Z^{m}$ of finite index has rank $m$: otherwise the quotient by that subgroup would have a nonzero free Abelian part and would therefore be infinite. Thus $\mathcal L$ is a full-rank sublattice of $\mathbb Z^{m}$.

It remains to prove that $\mathcal G/P$ is finite. The inclusion \(\{0\}\times\mathcal L\leq P\leq\mathcal G\) holds by the definition of $\mathcal L$. Moreover, \(\mathcal G/(\{0\}\times\mathcal L) \cong H^{-}\times(\mathbb Z^{m}/\mathcal L),\) so this quotient is finite. For completeness, let $\pi_1:\mathcal G\to H^{-}$ denote projection onto the first factor. Its restriction to $P$ has kernel \(\ker(\pi_1|_{P})=\{0\}\times\mathcal L,\) and hence \(P/(\{0\}\times\mathcal L)\cong\pi_1(P).\) The index multiplication formula for the nested subgroups
$\{0\}\times\mathcal L\leq P\leq\mathcal G$ therefore yields
\[
|\mathcal G/P|
=\frac{|\mathcal G/(\{0\}\times\mathcal L)|}
{|P/(\{0\}\times\mathcal L)|}
=\frac{|H^{-}|\,[\mathbb Z^{m}:\mathcal L]}
{|\pi_1(P)|}.
\]
Every term in this expression is finite, and therefore $\mathcal G/P$ is a finite Abelian group.
\end{proof}

By Pontryagin Duality, the character group of the infinite discrete group $\mathcal G=H^{-}\times\mathbb Z^{m}$ is the compact continuous group $\widehat{\mathcal G}\cong\widehat{H^{-}}\times\mathbb T^{m}$. The joint annihilator $P^\perp$ is the subgroup of characters that evaluate to $1$ identically on $P$. By duality for quotients, $P^\perp$ is canonically isomorphic to $\widehat{\mathcal G/P}$.

Since Lemma~\ref{lem:finite-joint-quotient} guarantees that $\mathcal G/P$ is a finite group, its character group is a finite discrete Abelian group of the same cardinality. Thus $P^\perp$ is a finite subgroup embedded inside $\widehat{H^{-}}\times\mathbb T^{m}$. We use the normalized counting measure to define the uniform distribution over $P^\perp$.

\begin{lemma}[Joint Dual Sampling]\label{lem:joint-dual-sampling}
Let $\mathcal G=H^{-}\times\mathbb Z^{m}$, and let $P\leq\mathcal G$ be the kernel of $\Psi$ defined in Section~\ref{sec:joint-sampling}. Take $R$ and $D$ as in
Section~\ref{sec:parameters-and-evaluation},
with lattice dimension $m$. Consider one execution of the joint sampling procedure in Section~\ref{sec:joint-sampling}, with the exact Fourier transform on $H^{-}$ and the lattice Fourier-sampling procedure of Lemma~\ref{lem:dual-sampling}. The oracle-output register is discarded.

For $\mathbf v\in\mathbb T^{m}$ and $\mathbf w\in\{0,\ldots,D-1\}^{m}$, define the probability mass function
\[
p_{\mathbf v}(\mathbf w/D)
:=
\frac{\rho_{1/(\sqrt2R)}
 (\mathbf v-\mathbf w/D+\mathbb Z^{m})}
 {\rho_{1/(\sqrt2R)}
 (\mathbf v-D^{-1}\mathbb Z^{m})},
\]
where $\rho_s(\mathbf x):=\exp(-\pi\|\mathbf x\|_2^2/s^2)$ and $\rho_s(S):=\sum_{\mathbf x\in S}\rho_s(\mathbf x)$. This is the discretized Gaussian law of~\cite[Eq.~(3)]{Regev25}.

The distribution of the measured pair $(\eta,\widetilde{\mathbf v})$ is $O(2^{-m})$-close in total variation distance to the following experiment: choose $(\eta,\mathbf v)$ uniformly from $P^\perp$ and, conditional on this pair, draw $\widetilde{\mathbf v}$ from $p_{\mathbf v}$. In particular, the measured pair can be coupled to a uniform element of $P^\perp$ so that the linear characters agree and
\[
\operatorname{dist}_{\mathbb T^{m}}
 (\widetilde{\mathbf v},\mathbf v)
 \leq\frac{\sqrt{m}}{\sqrt2R}
\]
except with probability $O(2^{-m})$.
\end{lemma}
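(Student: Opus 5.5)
We will follow Regev's analysis of the single-lattice case \cite[Prop.~A.6, Claim~A.7]{Regev25}, now carrying the finite cyclic factor $H^{-}$ along with the integer coordinates. The first step is to reduce to the ideal circuit. This means exact Gaussian preparation, the exact transform over $\mathbb Z_D^m$, and the exact transform over $H^{-}$. The preparation and AQFT errors contribute at most $2n^{-(c+1)}$ in total variation by the contractivity argument of Section~\ref{sec:gaussian-sampling}. The precision can be chosen so that this is $O(2^{-m})$. Next, since the output register is measured and then discarded, the measurement outcome may be traced out. Then the post-measurement mixture over $h$ equals the mixture of the coset states $|\Psi_h\rangle$ in Equation~\eqref{eq:joint-coset-state}. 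It therefore suffices to compute $\Pr[\eta,\mathbf w]=\sum_h\|\langle\eta,\mathbf w|\mathrm{QFT}|\Psi_h\rangle\|^2$ directly.

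The core computation is a Poisson-summation argument on $\mathcal G=H^{-}\times\mathbb Z^m$. First we extend the Gaussian from the box $\{-D/2,\ldots,D/2-1\}^m$ to all of $\mathbb Z^m$. Because $D\ge2\sqrt m R$, the Gaussian tail bound (Banaszczyk) shows that the truncated and untruncated states differ in norm by $2^{-\Omega(m)}$. This holds uniformly over the cosets, since $\rho_R$ of a shifted lattice coset is controlled by the same tail bound. With the untruncated function $f(\mathbf x,\mathbf t)=\rho_R(\mathbf t)$ supported on $H^{-}\times\mathbb Z^m$, the amplitude at the grid label $(\eta,\mathbf w/D)$ is a periodized Fourier transform. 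Summing $|\cdot|^2$ over all cosets $h$ (equivalently over $\mathcal G/P$) and applying Poisson summation for the subgroup $P$ yields
\[
\Pr[\eta,\mathbf w]\propto\sum_{(\eta',\mathbf v)\in P^\perp,\ \eta'=\eta}\ \rho_{1/(\sqrt2R)}\bigl(\mathbf v-\mathbf w/D+\mathbb Z^m\bigr)
\]
up to the folding of frequencies modulo $D^{-1}\mathbb Z^m$. The reason is that $|\hat\rho_R|^2$ is proportional to $\rho_{1/(\sqrt2R)}$, and that averaging over cosets of $P$ kills every cross term except those indexed by $P^\perp$. Here we use that $P^\perp\cong\widehat{\mathcal G/P}$ is finite (Lemma~\ref{lem:finite-joint-quotient}). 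The cyclic label is exact, since the transform on $H^{-}$ is exact and $H^{-}$ carries the uniform amplitude. Normalizing row by row then gives exactly the two-stage law: $(\eta,\mathbf v)$ is drawn uniformly from $P^\perp$, and $\widetilde{\mathbf v}$ is drawn from $p_{\mathbf v}$. What remains is to check that different $\mathbf v$ give comparable normalizers $\rho_{1/(\sqrt2R)}(\mathbf v-D^{-1}\mathbb Z^m)$. This holds because $1/(\sqrt2R)\gg 1/D$ (smoothing parameter of $D^{-1}\mathbb Z^m$), so each normalizer is $(1\pm2^{-\Omega(m)})$ times a constant independent of $\mathbf v$. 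That is precisely Regev's Claim~A.7 argument and should transfer verbatim.

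The closeness statement then follows from the Gaussian tail for $p_{\mathbf v}$. Because the discretized Gaussian of width $1/(\sqrt2R)$ in dimension $m$ lies within distance $\sqrt m/(\sqrt2R)$ of its center except with probability $2^{-\Omega(m)}$, we get the stated bound. Under the coupling, the cyclic labels agree by construction. The main obstacle will be the Poisson-summation step with the mixed finite/infinite domain and the box truncation. Specifically, we must justify that truncation errors, summed over all the cosets $h$ (of which there are up to $|\mathcal G/P|\le 2^{n}|H^-|$), still total $O(2^{-m})$ rather than picking up a factor of the number of cosets. We plan to handle this by bounding the truncation in the $\ell_2$ norm of the full pre-measurement state, before measuring $h$. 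That norm is independent of the coset count, and the measurement then cannot increase the total variation distance.
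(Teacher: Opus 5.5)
Your proposal is correct and is in substance the paper's argument. Both rest on the orthogonality identity $\mathbf 1_P=|P^\perp|^{-1}\sum_{\chi\in P^\perp}\chi$ (your ``Poisson summation for $P$'') to obtain an equal-weight mixture over $P^\perp$ with exact cyclic labels, together with Regev's periodization estimates, Poisson summation over $\mathbb Z^m$, the smoothing bound making $\rho_{1/(\sqrt2R)}(\mathbf v-D^{-1}\mathbb Z^m)$ constant up to a factor $1+O(2^{-m})$ uniformly in $\mathbf v$, and the Gaussian tail.

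The only real difference is the order of steps. The paper applies orthogonality exactly to the truncated state first, so it only compares one modulated Gaussian $|\psi_{\mathbf v}\rangle$ with its periodization, uniformly in $\mathbf v$ by taking absolute values. This sidesteps the coset-counting issue you flag, although your global $\ell_2$ comparison of the post-oracle state with its $\mathbb Z^m$-folded version is also valid. Separately, squaring the periodized transform gives $\rho_{1/R}(\cdot+\mathbb Z^m)^2$ rather than $\rho_{1/(\sqrt2R)}(\cdot+\mathbb Z^m)$, and the relevant folding is modulo $\mathbb Z^m$, not $D^{-1}\mathbb Z^m$. The paper handles the resulting off-diagonal terms by keeping only the diagonal as a pointwise lower bound and comparing total mass.
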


\begin{proof}
Throughout the proof, sums over $\mathbb Z_D^{m}$ in state amplitudes use the centered representatives $\{-D/2,\ldots,D/2-1\}^{m}$. Define
\[
|\psi_{\mathbf v}\rangle
:=
Z^{-1/2}\sum_{\mathbf t\in\mathbb Z_D^{m}}
 \rho_R(\mathbf t)
 e^{-2\pi i\langle\mathbf v,\mathbf t\rangle}
 |\mathbf t\rangle,
\]
where the normalization constant is
\[
Z:=\sum_{\mathbf t\in\mathbb Z_D^{m}}\rho_R(\mathbf t)^2.
\]
Each $|\psi_{\mathbf v}\rangle$ is normalized. We first identify the joint state after discarding the oracle output and applying the exact Fourier transform on the linear register.

By unique encoding and the definition of $\Psi$, a matrix entry between the inputs $(\mathbf x,\mathbf t)$ and $(\mathbf x',\mathbf t')$ survives the partial trace over the output precisely when
$(\mathbf x-\mathbf x',\mathbf t-\mathbf t')\in P$. Since $\mathcal G/P$ is finite by Lemma~\ref{lem:finite-joint-quotient}, character orthogonality gives
\[
\mathbf1_{P}(\mathbf x,\mathbf t)
=
\frac1{|P^\perp|}
\sum_{(\eta,\mathbf v)\in P^\perp}
 e^{-2\pi i
       (\eta(\mathbf x)+\langle\mathbf v,\mathbf t\rangle)}.
\]
Here $\mathbf1_{P}$ denotes the indicator of $P$. Substituting this identity into the matrix entries expresses the reduced input state as an equal-weight mixture indexed by $P^\perp$. In the term indexed by $(\eta,\mathbf v)$, the linear amplitudes are $|H^{-}|^{-1/2}e^{-2\pi i\eta(\mathbf x)}$, and the lattice state is $|\psi_{\mathbf v}\rangle$. The positive-exponent Fourier transform maps the linear state to $|\eta\rangle$. Hence the state after this transform is exactly
\[
\frac1{|P^\perp|}
\sum_{(\eta,\mathbf v)\in P^\perp}
 |\eta\rangle\langle\eta|
 \otimes
 |\psi_{\mathbf v}\rangle
 \langle\psi_{\mathbf v}|.
\]
It remains to compare the lattice Fourier measurement of
$|\psi_{\mathbf v}\rangle$ with $p_{\mathbf v}$, uniformly in the center $\mathbf v$.

We first use the exact lattice Fourier transform. Define the unnormalized periodized vector in the same finite register by
\[
|h_{\mathbf v}\rangle
:=
\sum_{\mathbf t\in\mathbb Z_D^{m}}
 \left(
 \sum_{\mathbf r\in\mathbb Z^{m}}
 \rho_R(\mathbf t+D\mathbf r)
 e^{-2\pi i
          \langle\mathbf v,\mathbf t+D\mathbf r\rangle}
 \right)|\mathbf t\rangle.
\]
For each centered $\mathbf t$, the difference from the truncated coefficient is the sum of the terms with $\mathbf r\ne\mathbf0$. Taking absolute values of these terms removes the modulation, so
\[
\begin{aligned}
\bigl\|\sqrt Z\,|\psi_{\mathbf v}\rangle
             -|h_{\mathbf v}\rangle\bigr\|_2
&\leq
\bigl\|\sqrt Z\,|\psi_{\mathbf0}\rangle
             -|h_{\mathbf0}\rangle\bigr\|_2\\
&\leq 2^{-m}\,\|h_{\mathbf0}\|_2.
\end{aligned}
\]
The second inequality is the unnormalized truncation estimate in the proof of~\cite[Claim~A.5]{Regev25}, applied to the lattice $\mathbb Z^{m}$. Its hypotheses hold because $D\geq2\sqrt{m}\,R$. The normalization estimates in~\cite[Claims~A.4--A.5]{Regev25} give
$
Z=(R/\sqrt2)^{m}\bigl(1+O(2^{-m})\bigr).
$
For the unmodulated periodized vector, these estimates also give
$
\|h_{\mathbf0}\|_2=\sqrt Z\bigl(1+O(2^{-m})\bigr).
$
The triangle inequality therefore yields, uniformly in $\mathbf v$,
$
\|h_{\mathbf v}\|_2^2
=(R/\sqrt2)^{m}\bigl(1+O(2^{-m})\bigr).
$
After normalization, the same estimates give
\[
\left\|
|\psi_{\mathbf v}\rangle
-\frac{|h_{\mathbf v}\rangle}
       {\|h_{\mathbf v}\|_2}
\right\|_2
=O(2^{-m})
\]
uniformly in $\mathbf v$.
Applying the same Fourier transform and measurement to these normalized finite-register states changes their output laws by at most $O(2^{-m})$ in total variation distance.

For $\mathbf w\in\{0,\ldots,D-1\}^{m}$, the coefficient of $|\mathbf w\rangle$ in the exact Fourier transform of $|h_{\mathbf v}\rangle$ is
\[
\begin{aligned}
&D^{-m/2}\sum_{\mathbf t\in\mathbb Z^{m}}
 \rho_R(\mathbf t)
 e^{2\pi i
       \langle\mathbf w/D-\mathbf v,\mathbf t\rangle}\\
&\qquad=
R^{m}D^{-m/2}
 \rho_{1/R}(\mathbf v-\mathbf w/D+\mathbb Z^{m}).
\end{aligned}
\]
The first expression follows by unfolding the periodization, using
$e^{2\pi i\langle\mathbf w,\mathbf r\rangle}=1$ for integer
$\mathbf w,\mathbf r$; the equality is Poisson summation. Thus the infinite sum is an expression for a coefficient of the finite Fourier transform.

Let $p'_{\mathbf v}$ denote the scaled measurement law of the normalized periodized vector. Squaring the last expression and retaining the diagonal terms gives
\[
p'_{\mathbf v}(\mathbf w/D)
\geq
\frac{R^{2m}D^{-m}}{\|h_{\mathbf v}\|_2^2}
 \rho_{1/(\sqrt2R)}
  (\mathbf v-\mathbf w/D+\mathbb Z^{m}).
\]
All omitted terms are nonnegative. To determine the total mass of this lower bound, apply Poisson summation once more:
\[
\begin{aligned}
&\rho_{1/(\sqrt2R)}
  (\mathbf v-D^{-1}\mathbb Z^{m})
=
\left(\frac{D}{\sqrt2R}\right)^{m}
\sum_{\mathbf r\in D\mathbb Z^{m}}
 \rho_{\sqrt2R}(\mathbf r)
 e^{2\pi i\langle\mathbf r,\mathbf v\rangle}
=
\left(\frac{D}{\sqrt2R}\right)^{m}
 \bigl(1+O(2^{-m})\bigr).
\end{aligned}
\]
The error is uniform in $\mathbf v$: its absolute value before multiplication by $(D/(\sqrt2R))^{m}$ is bounded by
$\rho_{\sqrt2R}(D\mathbb Z^{m}\setminus\{\mathbf0\})=O(2^{-m})$,
by~\cite[Corollary~A.2]{Regev25} and $D\geq2\sqrt{m}\,R$. In particular, this estimate does not require $\mathbf v$ to lie on the Fourier grid.

The pointwise lower bound for $p'_{\mathbf v}$ is a scalar multiple of $p_{\mathbf v}$. Summing over $\mathbf w$ and using the preceding normalization estimates shows that this scalar is $1+O(2^{-m})$. It is at most one because $p'_{\mathbf v}$ is a probability distribution. The remaining mass is therefore $O(2^{-m})$, and \(\operatorname{TV}(p'_{\mathbf v},p_{\mathbf v}) =O(2^{-m})\) uniformly in $\mathbf v$. Together with the finite-state comparison, this proves the same bound for the exact lattice Fourier measurement of $|\psi_{\mathbf v}\rangle$.

The approximate lattice transform has the operator-norm error prescribed in Section~\ref{sec:gaussian-sampling}. Precomposition by the diagonal modulation defining $|\psi_{\mathbf v}\rangle$ does not increase its operator-norm approximation error. Thus the same $O(2^{-m})$ sampling estimate applies with the prescribed lattice Fourier procedure. The prescribed Gaussian-state preparation changes the output law by a further $O(2^{-m})$ in total variation distance. Averaging over the equal-weight joint-character mixture proves the asserted total variation bound.

Finally,~\cite[Claim~A.7]{Regev25} bounds the distance tail under $p_{\mathbf v}$ by $O(2^{-m})$, uniformly in $\mathbf v$. Couple the measured output to the ideal experiment with disagreement probability bounded by their total variation distance, and include the exact center sampled in that experiment. A union bound for the coupling failure and the Gaussian tail gives the stated metric estimate.
\end{proof}

We use the dual-lattice notation of Section~\ref{sec:groups-characters-lattices}. For $\mathbf v\in\mathbb T^{m}$ and $\mathbf t\in\mathbb Z^{m}$, the pairing $\langle\mathbf v,\mathbf t\rangle$ is understood modulo $1$: any representative of $\mathbf v$ may be used, since changing it by an integer vector changes the inner product by an integer. The subgroup $P=\ker\Psi$ satisfies the hypotheses of the following lemma by Lemma~\ref{lem:finite-joint-quotient}.

\begin{lemma}[Projection Uniformity over the Finite Dual]\label{lem:uniform-projection}
Let $P\leq\mathcal G=H^{-}\times\mathbb Z^{m}$ be the kernel of $\Psi$ defined in Section~\ref{sec:joint-sampling}. The lattice \(\mathcal L :=\{\mathbf t\in\mathbb Z^{m}:(0,\mathbf t)\in P\}\) is a full-rank lattice. With $P^\perp$ denoting the joint annihilator, the lattice-coordinate projection $\pi_2:P^\perp\longrightarrow\mathcal L^*/\mathbb Z^{m}$, defined by
$
(\eta,\mathbf v)\longmapsto\mathbf v,
$
is a surjective homomorphism of finite groups. Consequently, if $(\eta,\mathbf v)$ is sampled uniformly from $P^\perp$, then $\mathbf v$ is uniform on $\mathcal L^*/\mathbb Z^{m}$; explicitly, for every $\mathbf v_0\in\mathcal L^*/\mathbb Z^{m}$, \(\Pr[\mathbf v=\mathbf v_0] =\frac1{|\mathcal L^*/\mathbb Z^{m}|}.\) \end{lemma}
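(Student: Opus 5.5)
The plan has three parts: full rank and finiteness, well-definedness of $\pi_2$ as a homomorphism, and surjectivity. Uniformity then follows from the fibre sizes of a surjective homomorphism.

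\emph{Full rank and finiteness.} Full rank of $\mathcal L$ and finiteness of $\mathcal G/P$ are Lemma~\ref{lem:finite-joint-quotient}, since $\mathcal L=\ker\Phi$. Hence $P^\perp\cong\widehat{\mathcal G/P}$ is finite. The group $\mathcal L^*/\mathbb Z^m$ is finite of order $\det\mathcal L$ by Section~\ref{sec:groups-characters-lattices}.

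\emph{$\pi_2$ is a well-defined homomorphism.} Let $(\eta,\mathbf v)\in P^\perp$ and $\mathbf t\in\mathcal L$. Then $(0,\mathbf t)\in P$, so
\[
\eta(0)+\langle\mathbf v,\mathbf t\rangle\equiv\langle\mathbf v,\mathbf t\rangle\equiv0\pmod 1 .
\]
Thus any representative of $\mathbf v$ lies in $\mathcal L^*$, and its class in $\mathcal L^*/\mathbb Z^m$ is independent of the representative. The map $\pi_2$ is the restriction of a coordinate projection, so it is a homomorphism.

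\emph{Surjectivity.} This is the only real step. I would use duality: characters of a subgroup extend to the ambient group. Identify $\mathcal L^*/\mathbb Z^m$ with $\widehat{\mathbb Z^m/\mathcal L}$, and regard $\mathbb Z^m/\mathcal L$ as the subgroup $(\{0\}\times\mathbb Z^m+P)/P$ of the finite group $\mathcal G/P$. Injectivity of $\mathbb Z^m/\mathcal L\to\mathcal G/P$ holds precisely by the definition of $\mathcal L$: if $(0,\mathbf t)\in P$ then $\mathbf t\in\mathcal L$.

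Given $\mathbf v\in\mathcal L^*$, its pairing defines a character $\psi$ of this subgroup. Since the restriction map $\widehat{\mathcal G/P}\to\widehat{(\{0\}\times\mathbb Z^m+P)/P}$ is surjective (Section~\ref{sec:groups-characters-lattices}), $\psi$ extends to a character $\Xi$ of $\mathcal G/P$. Pull $\Xi$ back to $\mathcal G=H^-\times\mathbb Z^m$; it is trivial on $P$. Every character of $\mathcal G$ has the form $(\eta,\mathbf v')$ with $\eta\in\widehat{H^-}$ and $\mathbf v'\in\mathbb T^m$. Restricting to $\{0\}\times\mathbb Z^m$ shows $\langle\mathbf v',\mathbf t\rangle\equiv\langle\mathbf v,\mathbf t\rangle$ for all $\mathbf t\in\mathbb Z^m$, so $\mathbf v'\equiv\mathbf v\pmod{\mathbb Z^m}$. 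Therefore $(\eta,\mathbf v')\in P^\perp$ maps to $\mathbf v$.

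An alternative to the extension argument is counting: $\ker\pi_2=\{(\eta,0)\}\cong\widehat{H^-/\pi_1(P)}$ has order $|H^-|/|\pi_1(P)|$. Combining this with the index formula in Lemma~\ref{lem:finite-joint-quotient} gives
\[
|P^\perp|\big/|\ker\pi_2|=[\mathbb Z^m:\mathcal L]=|\mathcal L^*/\mathbb Z^m| .
\]

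\emph{Uniformity.} A surjective homomorphism of finite groups has all fibres equal to cosets of its kernel, hence of equal size $|P^\perp|/|\mathcal L^*/\mathbb Z^m|$. The pushforward of the uniform law on $P^\perp$ is therefore uniform, giving $\Pr[\mathbf v=\mathbf v_0]=1/|\mathcal L^*/\mathbb Z^m|$ for every $\mathbf v_0$.

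The main obstacle is the surjectivity step, specifically handling the identifications correctly: the continuous dual $\mathbb T^m$ of $\mathbb Z^m$ versus the finite quotient, and checking that the extended character's lattice coordinate agrees with $\mathbf v$ modulo $\mathbb Z^m$.
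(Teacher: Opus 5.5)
Your proof is correct. Its skeleton matches the paper's: $\pi_2$ is well defined, surjectivity comes from a character-extension argument, and uniformity follows from equal fibre sizes. The difference lies in where the extension theorem is applied.

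\textbf{The paper's route.} The paper fixes $\mathbf v\in\mathcal L^*$ and defines $\phi:\pi_1(P)\to\mathbb T$ by $\phi(\mathbf x)=-\langle\mathbf v,\mathbf t\rangle$ for any $\mathbf t$ with $(\mathbf x,\mathbf t)\in P$. It checks that $\phi$ is well defined (via $\mathcal L$) and additive, then extends it from $\pi_1(P)$ to $H^-$. This yields an explicit preimage $(\eta,\mathbf v)$ with the same representative $\mathbf v$. No decomposition of characters of $H^-\times\mathbb Z^m$ into pairs is needed.

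\textbf{Your route.} You embed $\mathbb Z^m/\mathcal L$ into the finite quotient $\mathcal G/P$ via $\mathbf t\mapsto(0,\mathbf t)+P$ and extend there. This is the more structural statement: restriction of characters, dual to this injection, is surjective. It costs one extra step, writing the pulled-back character as $(\eta,\mathbf v')$ and checking $\mathbf v'\equiv\mathbf v\pmod{\mathbb Z^m}$, which you carry out correctly.

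\textbf{Your counting alternative.} This version is genuinely extension-free. You identify $\ker\pi_2$ with $\pi_1(P)^\perp\subseteq\widehat{H^-}$. Combining $|P^\perp|=|\mathcal G/P|$ with the index formula in Lemma~\ref{lem:finite-joint-quotient} gives $|\operatorname{im}\pi_2|=[\mathbb Z^m:\mathcal L]=|\mathcal L^*/\mathbb Z^m|$, which proves surjectivity. As a by-product it gives the common fibre size $|H^-|/|\pi_1(P)|$ explicitly.
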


\begin{proof}
Since $\mathcal L$ has full rank and $H^{-}$ is finite, the group \(\mathcal G/(\{0\}\times\mathcal L) \cong H^{-}\times(\mathbb Z^{m}/\mathcal L)\) is finite. The inclusion $\{0\}\times\mathcal L\leq P$ therefore implies that $\mathcal G/P$ is finite, and hence so is $P^\perp\cong\widehat{\mathcal G/P}$. The full-rank assumption also gives the finiteness of $\mathcal L^*/\mathbb Z^{m}$.

Let $(\eta,\mathbf v)\in P^\perp$. In the additive phase notation of Section~\ref{sec:joint-sampling}, every $(\mathbf x,\mathbf t)\in P$ satisfies
\begin{equation}\label{eq:joint-annihilator}
\eta(\mathbf x)+\langle\mathbf v,\mathbf t\rangle
\equiv0\pmod1.
\end{equation}
For every $\mathbf t\in\mathcal L$, we have $(0,\mathbf t)\in P$, so Equation~\eqref{eq:joint-annihilator} yields \(\langle\mathbf v,\mathbf t\rangle\equiv0\pmod1.\) Thus $\mathbf v\in\mathcal L^*/\mathbb Z^{m}$. This proves that $\pi_2$ takes values in the stated codomain; it is a homomorphism because it is a coordinate projection.

To prove surjectivity, fix $\mathbf v\in\mathcal L^*/\mathbb Z^{m}$ and let $\pi_1(P)\leq H^{-}$ be the projection of $P$ onto the first factor. For $\mathbf x\in\pi_1(P)$, choose $\mathbf t\in\mathbb Z^{m}$ with $(\mathbf x,\mathbf t)\in P$ and define $\phi:\pi_1(P)\longrightarrow\mathbb T$ by $\phi(\mathbf x):=-\langle\mathbf v,\mathbf t\rangle\pmod1$. If $(\mathbf x,\mathbf t')\in P$ is another choice, then $(0,\mathbf t-\mathbf t')\in P$, and hence $\mathbf t-\mathbf t'\in\mathcal L$. It follows that \(\langle\mathbf v,\mathbf t-\mathbf t'\rangle \equiv0\pmod1,\) so $\phi$ is well defined. If $(\mathbf x,\mathbf t),(\mathbf x',\mathbf t')\in P$, then $(\mathbf x+\mathbf x',\mathbf t+\mathbf t')\in P$, which gives
\[
\phi(\mathbf x+\mathbf x')
\equiv-\langle\mathbf v,\mathbf t+\mathbf t'\rangle
\equiv\phi(\mathbf x)+\phi(\mathbf x')\pmod1.
\]
Thus $\phi$ is an additive phase homomorphism. By the character-extension theorem recalled in Section~\ref{sec:groups-characters-lattices}, it extends to $\eta\in\widehat{H^{-}}$, written in additive phase notation. For every $(\mathbf x,\mathbf t)\in P$,
\[
\eta(\mathbf x)+\langle\mathbf v,\mathbf t\rangle
\equiv\phi(\mathbf x)+\langle\mathbf v,\mathbf t\rangle
\equiv0\pmod1.
\]
Therefore $(\eta,\mathbf v)\in P^\perp$, proving surjectivity.

For any $\mathbf v_0\in\mathcal L^*/\mathbb Z^{m}$, the fiber $\pi_2^{-1}(\mathbf v_0)$ is a coset of $\ker\pi_2$. All fibers consequently have the same cardinality. For a uniform pair $(\eta,\mathbf v)\in P^\perp$, this gives
\[
\Pr[\mathbf v=\mathbf v_0]
=\frac{|\ker\pi_2|}{|P^\perp|}
=\frac1{|\mathcal L^*/\mathbb Z^{m}|},
\]
as claimed.
\end{proof}

For an exact phase $\mathbf v=(\mathbf p,\mathbf q)\in\mathcal L^*/\mathbb Z^{m}$, the rule \(\Phi(\mathbf t)\longmapsto e^{2\pi i\langle\mathbf v,\mathbf t\rangle}\) defines a character on $\Phi(\mathbb Z^{m})$. If $\Phi(\mathbf t)=\Phi(\mathbf t')$, then $\mathbf t-\mathbf t'\in\ker\Phi=\mathcal L$, so the two character values agree. Since $a_j=h_j-\sum_l\beta_{j,l}w_l$, each $a_j$ belongs to this image. Restricting the character to $A:=\langle a_1,\ldots,a_d\rangle\leq H^{+}$ gives a character $\psi$ satisfying
\[
\psi(a_j)
=\exp\!\left(2\pi i\left(p_j
              -\sum_{l=1}^b\beta_{j,l}q_l\right)\right).
\]
The following lemma describes its extensions to the known cyclic coordinates of $H^{+}$.

\begin{lemma}[Character Extension and System Consistency]\label{lem:character-extension}
Let $H^{+}\cong\prod_{i=1}^{s}\mathbb Z_{M_i}$ have the known cyclic generators $y'_i$,  and let
$a_j=\sum_{i=1}^{s}\alpha_{j,i}y'_i$ for $1\leq j\leq d$.
Put $A:=\langle a_1,\ldots,a_d\rangle$. Put $A:=\langle a_1,\ldots,a_d\rangle$. For an exact lattice phase $\mathbf v=(\mathbf p,\mathbf q)\in\mathcal L^*/\mathbb Z^{m}$, let $\psi$ be the character defined above. Then the system consisting of the congruences
\begin{equation}\label{eq:character-extension}
\sum_{i=1}^{s}
 \frac{\alpha_{j,i}\xi_i}{M_i}
\equiv
p_j-\sum_{l=1}^b\beta_{j,l}q_l
\pmod1,
\end{equation}
one for each $1\leq j\leq d$, has exactly $[H^{+}:A]=|H^{+}|/|A|$ solutions in $\boldsymbol\xi\in\prod_{i=1}^{s}\mathbb Z_{M_i}$. These solutions are in bijection with the characters of $H^{+}$ extending $\psi$. In particular, the system is consistent, and its solution is unique if and only if $a_1,\ldots,a_d$ generate $H^{+}$.
\end{lemma}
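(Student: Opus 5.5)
The plan is to identify solutions $\boldsymbol\xi$ of the system \eqref{eq:character-extension} with characters of $H^{+}$, and then reduce the count to the standard extension count for characters of finite Abelian groups.

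First, set up the identification. Since $H^{+}=\bigoplus_i\langle y'_i\rangle$ with $\operatorname{ord}(y'_i)=M_i$, every character $\chi\in\widehat{H^{+}}$ is determined by its values $\chi(y'_i)=e^{2\pi i\xi_i/M_i}$. Here $\xi_i\in\mathbb Z_{M_i}$ can be chosen freely. This gives a bijection $\boldsymbol\xi\mapsto\chi_{\boldsymbol\xi}$ between $\prod_i\mathbb Z_{M_i}$ and $\widehat{H^{+}}$. Writing $a_j=\sum_i\alpha_{j,i}y'_i$, we get
\[
\chi_{\boldsymbol\xi}(a_j)=\exp\Bigl(2\pi i\sum_{i=1}^s\alpha_{j,i}\xi_i/M_i\Bigr).
\]
The $j$-th congruence is therefore exactly the condition $\chi_{\boldsymbol\xi}(a_j)=\psi(a_j)$. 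This uses the formula for $\psi(a_j)$ derived just before the lemma, which in turn relies on $\Phi(\mathbf t)\mapsto e^{2\pi i\langle\mathbf v,\mathbf t\rangle}$ being well defined on $\Phi(\mathbb Z^m)$ because $\mathbf v\in\mathcal L^*$. Since the $a_j$ generate $A$ and both sides are characters, $\boldsymbol\xi$ solves the system if and only if $\chi_{\boldsymbol\xi}|_A=\psi$. This gives the bijection between solutions and extensions of $\psi$.

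Next, count the extensions. The restriction map $\widehat{H^{+}}\to\widehat A$ is a surjective homomorphism, by the extension theorem recalled in Section~\ref{sec:groups-characters-lattices}. Hence $\psi$ has at least one extension, so the system is consistent. Moreover, the set of extensions is a coset of the kernel $A^\perp\leq\widehat{H^{+}}$. By the canonical isomorphism $A^\perp\cong\widehat{H^{+}/A}$, this kernel has order $|H^{+}/A|=|H^{+}|/|A|$. So the number of solutions is $[H^{+}:A]$. Uniqueness holds exactly when this index is $1$, that is, when $A=H^{+}$, i.e.\ when the $a_j$ generate $H^{+}$.

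The only delicate point is well-definedness. Each congruence is read modulo $1$, and the right-hand side depends on the chosen representatives of $\mathbf p,\mathbf q$ only up to integers, since the $\beta_{j,l}$ are integers. The left-hand side depends on $\xi_i$ only modulo $M_i$. I expect no genuine obstacle beyond this.
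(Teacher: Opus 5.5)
Your proposal is correct and follows essentially the same route as the paper's proof. Both identify $\boldsymbol\xi$ with characters of $H^{+}$ via $\chi(y'_i)=e^{2\pi i\xi_i/M_i}$, read the $j$-th congruence as $\chi(a_j)=\psi(a_j)$, and count the extensions of $\psi$ as a coset of $A^\perp\cong\widehat{H^{+}/A}$. Your remark that the right-hand side is independent of the chosen representatives of $(\mathbf p,\mathbf q)$, because the $\beta_{j,l}$ are integers, is a small addition: the paper checks well-definedness only on the left-hand side.
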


\begin{proof}
Under the fixed cyclic decomposition, a vector $\boldsymbol\xi\in\prod_{i=1}^{s}\mathbb Z_{M_i}$ determines a character $\chi\in\widehat{H^{+}}$ through $\chi(y'_i)=e^{2\pi i\xi_i/M_i}$ for $1\leq i\leq s$. These values respect the defining relations $M_iy'_i=0$ and determine $\chi$ uniquely. Conversely, every character admits unique coordinates of this form, since its value on $y'_i$ is an $M_i$-th root of unity.

For each $j$, the coordinate expression for $a_j$ gives
\[
\chi(a_j)
=\prod_{i=1}^{s}
\chi(y'_i)^{\alpha_{j,i}}
=\exp\!\left(2\pi i
\sum_{i=1}^{s}
\frac{\alpha_{j,i}\xi_i}{M_i}\right).
\]
The phase modulo $1$ is unchanged if an integer representative of $\alpha_{j,i}$ or $\xi_i$ is changed by a multiple of $M_i$. Hence the congruences are well defined. Equation~\eqref{eq:character-extension} is equivalent to $\chi(a_j)=\psi(a_j)$ for every $j$. Since the $a_j$ generate $A$, these equalities hold if and only if $\chi|_A=\psi$. This proves the asserted correspondence between solutions and extensions.

By the character-extension theorem recalled in Section~\ref{sec:groups-characters-lattices}, $\psi$ has an extension $\chi'$ to $H^{+}$. A character $\chi$ extends $\psi$ if and only if $\chi\cdot(\chi')^{-1}$ is trivial on $A$. Thus the set of extensions is the coset $\chi'A^\perp$, where $A^\perp$ is the annihilator defined in Section~\ref{sec:groups-characters-lattices}. The quotient--annihilator correspondence gives \(A^\perp\cong\widehat{H^{+}/A},\) and a finite Abelian group has the same order as its character group. The number of extensions, and hence the number of solutions of Equation~\eqref{eq:character-extension}, is therefore \(|\chi'A^\perp| =|A^\perp| =|H^{+}/A| =[H^{+}:A].\) This number is positive and equals one precisely when $A=H^{+}$, which proves consistency and the stated uniqueness criterion.
\end{proof}

\subsection{Exact Character Recovery and Subgroup Decomposition}
\label{sec:exact-character-recovery}

We recover the local lattice from the first $q$ lattice samples and use its short generators to determine the exact phases in all $Q$ outcomes. The recovered characters are then expressed in the original generators. Their common kernel gives an integer presentation of $H_{k+1}$, from which Smith normal form computes the next subgroup decomposition.

\paragraph{Lattice recovery and exact phases.}
Fix the current subgroup presentation, target batch,
and coefficient sequence.
Use the sampling parameters of Section~\ref{sec:joint-sampling}.
Execute the joint circuit of Section~\ref{sec:joint-sampling}
independently $Q$ times, using the same evaluation elements
in every execution.

Write $(\widehat{\boldsymbol\eta}_t,\widetilde{\mathbf p}_t,\widetilde{\mathbf q}_t)$ for the measured outcome of execution $t$, where $\widehat{\boldsymbol\eta}_t$ is the vector of cyclic Fourier labels and the lattice coordinates have been divided by $D$. By Lemma~\ref{lem:joint-dual-sampling}, the measured outcomes can be coupled independently to uniform characters $(\boldsymbol\eta_t,\mathbf p_t,\mathbf q_t)\in P^\perp$ for every $1\leq t\leq Q$. Except with probability $O(Q2^{-m})$, all measured cyclic labels equal their ideal labels and all lattice-coordinate errors, measured on $\mathbb T^{m}$, are less than the dyadic bound $\delta$ specified in Section~\ref{sec:parameters-and-evaluation}.

The lattice component $\{(\mathbf z,\mathbf u):(0,\mathbf z,\mathbf u)\in P\}$ is $\mathcal L$. Lemma~\ref{lem:uniform-projection} therefore shows that the ideal lattice phases are independent uniform elements of $\mathcal L^*/\mathbb Z^{m}$. Use only the first $q$ measured lattice phases for lattice reduction. On the short-basis event, apply Lemmas~\ref{lem:lattice-recovery} and~\ref{lem:sublattice-extraction} with dimension $m$, determinant bound $2^n$, and basis bound $e^{33m}$. The parameter verification in Theorem~\ref{thm:subgroup-decomposition} applies with these substitutions. The separation event fails with probability at most $6\cdot2^{-\lceil c\log_2 n\rceil}$, and the augmented lattice has dimension $m+q=O(\sqrt n)$.

Retain the projected short generators and the integer transformation that produces their HNF basis. For each retained generator $\mathbf b$, the recovery condition gives
\[
\delta\|\mathbf b\|_2
<
\frac16\left(2^{q-m-2}
                 \det\mathcal L\right)^{-1/q}
\leq\frac16<\frac12.
\]
For every $1\leq t\leq Q$, round the pairings of $(\widetilde{\mathbf p}_t,\widetilde{\mathbf q}_t)$ with these short generators to the nearest integers. On the sampling and separation events, these are the exact pairings with a representative of the ideal dual class near the measured vector. Transport the integers through the recorded HNF transformation and solve against the transpose of the HNF basis matrix. This gives the computed phases $(\widehat{\mathbf p}_t,\widehat{\mathbf q}_t)$ modulo $\mathbb Z^{m}$. No length bound on the HNF basis is needed, because rounding precedes the HNF transformation. The same procedure applies to the samples used for lattice reduction and to all remaining samples.

The algorithm returns a failure symbol $\bot$ if the projected generators have rank less than $m$ or if a rounded pairing system is inconsistent. Otherwise, let $\widehat{\mathcal L}$ denote the recovered integral lattice. For every fixed evaluation sequence satisfying the short-basis condition, the lattice and all $Q$ phases are correct except with probability at most
$
O(n^{-c}).
$
Here $Q=O(n)$ and $m\geq\lceil\sqrt n\rceil$, so $Q2^{-m}=o(n^{-c})$ for fixed $c$. These failure checks do not assume that every incorrect recovery is detectable.

\paragraph{Finite coordinates and character extension.}
Define the finite coordinate group
\[
F:=
\left(\prod_{i=1}^{r}\mathbb Z_{N_i}\right)
\times
\left(\prod_{i=1}^{s}\mathbb Z_{M_i}\right)
\times
\left(\mathbb Z/\bigl(\det\mathcal L\bigr)\mathbb Z\right)^b.
\]
The corresponding group evaluation is
\[
\Theta(\mathbf x,\mathbf z',\mathbf u)
:=\sum_i x_i y_i+\sum_i z'_i y'_i+\sum_l u_lw_l.
\]
The target elements belong to the image group in Equation~\eqref{eq:local-determinant}, so
$(\det\mathcal L)w_l=0_G$ for every $l$. The map $\Theta$ is therefore well defined and has image $H_{k+1}$. Set $K:=\ker\Theta$; then $F/K\cong H_{k+1}$.

For an exact joint character $(\boldsymbol\eta_t,\mathbf p_t,\mathbf q_t)\in P^\perp$, consider the system in $\boldsymbol\xi_t\in\prod_i\mathbb Z_{M_i}$ consisting of the congruences
\begin{equation}\label{eq:character-pullback}
\sum_{i=1}^{s}
 \frac{\alpha_{j,i}\xi_{t,i}}{M_i}
\equiv
p_{t,j}
-\sum_{l=1}^b\beta_{j,l}q_{t,l}
\pmod1,
\end{equation}
one for each $1\le j\le d$, and the character row
\begin{equation}\label{eq:character-row}
\mathbf a_t
:=\left(
\boldsymbol\eta_t,\boldsymbol\xi_t,
(\det\mathcal L)\mathbf q_t
\right).
\end{equation}
Each coordinate is reduced modulo the corresponding cyclic order of $F$. The next lemma proves that, on the generation event, the system has a unique solution and the row is a well-defined element of $K^\perp$.

For the computed data, use the same formulas with the measured cyclic labels, the recovered exact phases, and $\det\widehat{\mathcal L}$. Denote the resulting rows by $\widehat{\mathbf a}_t$. Return $\bot$ if a pullback system has no unique solution or a target coordinate is not integral. All these calculations use exact rational and integer arithmetic. In particular, the combinations involving $\beta_{j,l}$ are formed only after phase recovery; applying them directly to the noisy phases would not preserve the noise bound.

\begin{lemma}[Global Uniformity of Pulled-Back Annihilators]\label{lem:annihilator-isomorphism}
Fix a batch and a coefficient sequence satisfying $\langle a_1,\ldots,a_d\rangle=H^{+}$. For every joint character in $P^\perp$, Equation~\eqref{eq:character-pullback} has a unique solution, and Equation~\eqref{eq:character-row} defines a character in $K^\perp$. The resulting map \(\tau:P^\perp\longrightarrow K^\perp\) is a group isomorphism. It sends the uniform distribution on $P^\perp$ exactly to the uniform distribution on $K^\perp$. More generally, for any probability distribution $\mu$ on $P^\perp$,
\[
\operatorname{TV}\!\left((\tau)_*\mu,U(K^\perp)\right)
=\operatorname{TV}\!\left(\mu,U(P^\perp)\right),
\]
where $(\tau)_*\mu$ denotes the distribution obtained by applying $\tau$.
\end{lemma}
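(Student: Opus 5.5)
The plan is to compare $\mathcal G$ and $F$ through a common group they both map onto. Write $\widetilde F:=H^{-}\times\prod_{i}\mathbb Z_{M_i}\times(\mathbb Z/(\det\mathcal L)\mathbb Z)^b$; this is exactly $F$. We will define a homomorphism $\lambda:\mathcal G\to F$ and show that $\Theta\circ\lambda=\Psi$ and that $\lambda$ is surjective. Then $\lambda^{-1}(K)=P$, so $\lambda$ induces an isomorphism $\bar\lambda:\mathcal G/P\to F/K$. Dualizing this isomorphism gives an isomorphism $K^\perp\cong\widehat{F/K}\to\widehat{\mathcal G/P}\cong P^\perp$, and $\tau$ will be its inverse.

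\textbf{Construction of $\lambda$.} Set
\[
\lambda(\mathbf x,\mathbf z,\mathbf u)=\Bigl(\mathbf x,\ \sum_j z_j\boldsymbol\alpha_j \bmod (M_i),\ \mathbf u+\sum_j z_j\boldsymbol\beta_j \bmod \det\mathcal L\Bigr).
\]
Since $h_j=\sum_i\alpha_{j,i}y'_i+\sum_l\beta_{j,l}w_l$, we get $\Theta\circ\lambda=\Psi$. For surjectivity, the $H^{-}$ and $\mathbf u$ coordinates are hit freely. For the middle block, the generation hypothesis $\langle a_j\rangle=H^{+}$ says that the vectors $\boldsymbol\alpha_j$ span $\prod_i\mathbb Z_{M_i}$. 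Hence any target $\mathbf z'$ equals $\sum_j z_j\boldsymbol\alpha_j$ for suitable $\mathbf z$, and we then correct the last coordinate by choosing $\mathbf u$. Because $\Theta\circ\lambda=\Psi$, we have $\lambda^{-1}(K)=P$, and surjectivity yields $\mathcal G/P\cong F/K$.

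\textbf{Identifying the dual map with the formulas.} A character $\chi$ of $F$ trivial on $K$ pulls back to $\chi\circ\lambda\in P^\perp$. In coordinates, let $\chi$ have label $(\boldsymbol\eta,\boldsymbol\xi,\boldsymbol\gamma)$, with $\boldsymbol\gamma\in\mathbb Z_{\det\mathcal L}^b$. Its pullback then has cyclic label $\boldsymbol\eta$, and lattice phases
\[
q_l=\gamma_l/\det\mathcal L,\qquad p_j=\sum_i\alpha_{j,i}\xi_i/M_i+\sum_l\beta_{j,l}q_l .
\]
Conversely, given $(\boldsymbol\eta,\mathbf p,\mathbf q)\in P^\perp$ we can read off the preimage. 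First, $\boldsymbol\xi$ must solve Equation~\eqref{eq:character-pullback}. By Lemma~\ref{lem:character-extension} that system has a unique solution when $A=H^{+}$; note that $\psi$ there is built from exactly this $\mathbf v$, which lies in $\mathcal L^*$ by Lemma~\ref{lem:uniform-projection}. Second, $\boldsymbol\gamma=(\det\mathcal L)\mathbf q$, which is integral mod $\det\mathcal L$. This is the step requiring care: we need $q_l\in(\det\mathcal L)^{-1}\mathbb Z/\mathbb Z$. It holds because $(\det\mathcal L)\mathbf e_{d+l}\in\mathcal L$, since $(\det\mathcal L)w_l=0$, so $\langle\mathbf v,(\det\mathcal L)\mathbf e_{d+l}\rangle\in\mathbb Z$. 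Thus the row $\mathbf a_t$ of Equation~\eqref{eq:character-row} is exactly the label of the unique $\chi\in K^\perp$ with $\chi\circ\lambda=(\boldsymbol\eta,\mathbf v)$.

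\textbf{Conclusion.} Hence $\tau$ equals $(\lambda^*)^{-1}$, which is a group isomorphism. A bijection pushes $U(P^\perp)$ forward to $U(K^\perp)$. Moreover, pushforward under a bijection preserves total variation distance, since it merely relabels the points of the sum defining $\operatorname{TV}$. The main obstacle is verifying that the coordinate formulas match the pullback, and in particular the well-definedness modulo the cyclic orders. This is handled by the remarks above on changing representatives of $\alpha_{j,i}$, $\xi_i$, and $\mathbf q$ by integers.
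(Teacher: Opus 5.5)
Your proof is correct and is essentially the paper's argument. The paper routes both annihilators through $\widehat{H_{k+1}}$ and sets $\tau=\Theta^*\circ(\Psi^*)^{-1}$, using the generation hypothesis to make $\Psi$ surjective. You realize the same map as $(\lambda^*)^{-1}$ for an explicit lift $\lambda:\mathcal G\to F$ with $\Theta\circ\lambda=\Psi$, so that $\bar\lambda=\bar\Theta^{-1}\circ\bar\Psi$ for the induced isomorphisms onto $H_{k+1}$, and the generation hypothesis enters as surjectivity of $\lambda$. The coordinate identification of the row formula is the same computation in both, including uniqueness of $\boldsymbol\xi$ via Lemma~\ref{lem:character-extension} and integrality of $(\det\mathcal L)\mathbf q$.
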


\begin{proof}
The image of $\Psi$ contains the old linear generators, the targets, and each
$a_j=h_j-\sum_l\beta_{j,l}w_l$. The generation hypothesis therefore makes $\Psi$ surjective onto $H_{k+1}$. The map $\Theta$ is surjective by construction. The quotient--annihilator correspondence gives the isomorphism
$
\Psi^*:\widehat{H_{k+1}}\longrightarrow P^\perp
$
induced by $\Psi$. The corresponding isomorphism induced by $\Theta$ is
$
\Theta^*:\widehat{H_{k+1}}\longrightarrow K^\perp.
$
It remains to identify the row construction with $\Theta^*\circ(\Psi^*)^{-1}$.

Let $(\boldsymbol\eta,\mathbf p,\mathbf q)\in P^\perp$ and put
$\chi=(\Psi^*)^{-1}(\boldsymbol\eta,\mathbf p,\mathbf q)$, written as an additive phase character. Its phase on an old cyclic generator is $\chi(y_i)=\eta_i/N_i\pmod1$. On the auxiliary elements, it satisfies $\chi(h_j)=p_j\pmod1$, and on the targets it satisfies $\chi(w_l)=q_l\pmod1$. Writing $\chi(y'_i)=\xi_i/M_i$ gives
\[
\sum_i\frac{\alpha_{j,i}\xi_i}{M_i}
=\chi(a_j)
=\chi(h_j)-\sum_l\beta_{j,l}\chi(w_l)
\pmod1,
\]
which is precisely Equation~\eqref{eq:character-pullback}. Since the $a_j$ generate $H^{+}$, Lemma~\ref{lem:character-extension} implies uniqueness of $\boldsymbol\xi$. Also,
$\chi((\det\mathcal L)w_l)=0$ shows that
$(\det\mathcal L)q_l$ is an integer modulo $\det\mathcal L$.

The three blocks in Equation~\eqref{eq:character-row} are thus the cyclic coordinates of $\chi\circ\Theta$. Thus $\mathbf a=\Theta^*(\chi)$, and the row construction is the map $\tau=\Theta^*\circ(\Psi^*)^{-1}$. This proves both membership in $K^\perp$ and the asserted isomorphism. A bijection of finite sets sends the uniform distribution to the uniform distribution and preserves total variation distance, proving the distributional statements.
\end{proof}

To compare the computed rows jointly, fix a coefficient sequence satisfying both the generation and short-basis conditions. Apply $\tau$ separately to the independent uniform ideal characters coupled to the quantum executions. Their images satisfy \((\mathbf a_1,\ldots,\mathbf a_Q) \sim U(K^\perp)^{\otimes Q}.\) Regard a computed output as its determinant and its list of integer character rows, or as $\bot$ if a failure is returned. The known old coordinate periods and the recorded determinant specify the moduli of every row; the rows are stored using their least nonnegative representatives. These encodings form a common countable output space even when an incorrect determinant is obtained. On the sampling and separation events, the recovered lattice equals $\mathcal L$ and every computed row equals its ideal row. Hence
\begin{equation}\label{eq:joint-row-error}
\Pr\!\left[
\left(\det\widehat{\mathcal L},
      (\widehat{\mathbf a}_t)_{t=1}^{Q}\right)
\ne
\left(\det\mathcal L,
      (\mathbf a_t)_{t=1}^{Q}\right)
\right]\leq O(n^{-c}).
\end{equation}
Here a failed computed output is interpreted as $\bot$ and counts as disagreement. The coupling inequality gives the same bound for the total variation distance between the laws of these encoded outputs. Neither law is conditioned on successful recovery. In particular, independence is asserted for the ideal rows, not for the computed rows after conditioning on success.

On the generation event, Equation~\eqref{eq:local-determinant} gives $\det\mathcal L=|S|$. Thus $F$, $K$, and the ideal output law are independent of the particular coefficient sequence satisfying the two conditions. The bound in Equation~\eqref{eq:joint-row-error} is uniform over these sequences and remains valid after averaging over them. The coefficient-generation and short-basis failures, including the flooding comparisons, are added separately.

\paragraph{Kernel recovery and Smith normal form.}
The rank of $K^\perp$ is at most $r+s+b\leq r+m$. Corollary~\ref{cor:generation-probability} and the choice of $Q$ show that the independent uniform ideal rows fail to generate $K^\perp$ with probability $O(n^{-c})$. Combining this estimate with Equation~\eqref{eq:joint-row-error} and the preceding coefficient estimates shows that, except with probability
$O(n^{-c})$,
the computed presentation is $F$ and
\[
K=\{\mathbf v\in F:
       \widehat{\mathbf a}_t(\mathbf v)=0\pmod1
       \text{ for every }1\leq t\leq Q\}.
\]

The following integer computations determine this common kernel and the quotient presentation. Write the computed coordinate group as $\prod_{j=1}^{p}\mathbb Z_{\nu_j}$, where $p=r+s+b$ and the periods are $N_i$, $M_i$, and $b$ copies of $\det\widehat{\mathcal L}$. Set $\nu:=\operatorname{lcm}(\nu_1,\ldots,\nu_{p})$ and define the integer matrix $C$ by
\[
C_{t,j}:=\frac{\nu}{\nu_j}\widehat a_{t,j}.
\] Then the phase of row $t$ on an integer coordinate vector $\mathbf z$ is
$(C\mathbf z)_t/\nu$ modulo $1$. Compute the full-rank lattice
\[
\Lambda:=\{\mathbf z\in\mathbb Z^{p}:
                  C\mathbf z\equiv\mathbf0\pmod{\nu}\}.
\]
An integer basis of the kernel of
$[\,C\mid -\nu I_{Q}\,]$, projected onto its first $p$ coordinates and put in HNF, gives a basis of $\Lambda$. Each coordinate-period vector $\nu_j\mathbf e_j$ lies in $\Lambda$. On the event established above, $\Lambda$ is therefore the inverse image of $K$ under the coordinate reduction map, and \(\mathbb Z^{p}/\Lambda\cong F/K\cong H_{k+1}.\) Smith normal form of a column basis of $\Lambda$ gives the invariant factors. If $U$ is its left unimodular transformation, the columns of $U^{-1}$ corresponding to nontrivial invariant factors express the new canonical generators as integer combinations of $y_i,y'_i,w_l$. Evaluating these combinations in the black-box group gives the required generators. The matrices have polynomially many entries of polynomial bit length, so these exact integer computations take polynomial classical time.

\subsection{Completion of the Proof}
\label{sec:decomposition-proof}

\begin{lemma}\label{lem:coordinate-qubits}
Let $G$ be a finite Abelian group with $|G|\leq2^n$, and let $H\leq G$ have an invariant-factor decomposition $H\cong\prod_{i=1}^t\mathbb Z_{N_i}$ with $N_i\geq2$. Encoding each canonical coordinate in a separate binary register uses
$
w
=\sum_{i=1}^t\lceil\log_2N_i\rceil
\leq2n
$
qubits. For the trivial subgroup, the decomposition and the coordinate register are empty.
\end{lemma}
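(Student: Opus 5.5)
The plan is to bound each summand $\lceil\log_2N_i\rceil$ by a quantity whose sum is controlled by $\log_2|H|$. Since $N_i\ge2$, we have $\lceil\log_2N_i\rceil\le\log_2N_i+1\le2\log_2N_i$. The second inequality is equivalent to $\log_2N_i\ge1$, which holds because $N_i\ge2$.

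Summing over $i$ gives
\[
w\le2\sum_{i=1}^t\log_2N_i=2\log_2\prod_{i=1}^tN_i=2\log_2|H|.
\]
The product of the invariant factors equals $|H|$ by the isomorphism $H\cong\prod_i\mathbb Z_{N_i}$. By Lagrange's theorem, $|H|$ divides $|G|$, so $|H|\le|G|\le2^n$ and hence $w\le2n$.

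An alternative route is $w\le\sum_i\log_2N_i+t\le\log_2|H|+t$ together with $t\le\log_2|H|$, which holds since each factor contributes at least a factor of $2$ to $|H|$. This is the same bound already used in Proposition~\ref{prop:joint-sampling-resources}. For the trivial subgroup, $t=0$ under the convention that trivial factors are omitted. The sum is then empty and equals $0\le2n$.

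There is no real obstacle here. The only point requiring care is the convention $N_i\ge2$: it is what prevents factors of order $1$ from contributing a qubit without contributing to $|H|$.
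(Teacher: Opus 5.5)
Your proof is correct and follows the paper's argument: bound $\lceil\log_2 N_i\rceil\le 2\log_2 N_i$ using $N_i\ge 2$, sum to $2\log_2|H|$, and conclude with $|H|\le|G|\le 2^n$, the empty sum covering the trivial subgroup. Your alternative route through $t\le\log_2|H|$ is also valid, but it is not needed.
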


\begin{proof}
Since $\log_2N_i\geq1$, we have $\lceil\log_2N_i\rceil\leq2\log_2N_i$. Therefore 
\[
w
\leq2\sum_{i=1}^t\log_2N_i
=2\log_2|H|
\leq2\log_2|G|
\leq2n.
\]
 For $t=0$, the same calculation holds with an empty sum.
\end{proof}

\begin{proof}[Proof of Theorem~\ref{thm:group-decomposition}]
Fix $c\geq2$ and the parameters of Section~\ref{sec:subgroup-extensions}. Initialization uses the first $d$ candidates. The remaining candidates require \(k_*=\left\lceil\frac{L-d}{b}\right\rceil=O(\sqrt n)\) extensions, indexed by $k=0,\ldots,k_*-1$, with zero padding in the last batch.

\paragraph{Correctness and success probability.}
The candidate list generates $G$ except with probability $O(n^{-c})$. The initialization analysis gives a correct presentation of $H_0$ except with probability $O(n^{-c})$. For a subsequent extension, fix any preceding history in which the current presentation is correct, together with the next target batch. The coefficient estimates of Section~\ref{sec:extension-setup} are uniform over this conditioning. With fresh coefficients and quantum executions, the short-basis and auxiliary-generation estimates, Equation~\eqref{eq:joint-row-error}, and Corollary~\ref{cor:generation-probability} give a correct presentation of $H_{k+1}$ except with probability $O(n^{-c})$. The size checks accept every such correct presentation.

Let $E$ be the event that the candidate list generates $G$ and that initialization and all extensions recover their subgroup presentations correctly. Applying a union bound to the candidate-generation failure, the initialization failure, and the first failed extension gives
$
\Pr(E^c)=O(n^{1/2-c}).
$
This argument does not require independence between extensions. On $E$, induction gives
$H_{k+1}=\langle H_k,w_1,\ldots,w_b\rangle$ at each extension, and hence $H_{k_*}=\langle\mathcal S\rangle=G$. The final Smith normal form supplies the invariant factors and the corresponding generators with their exact orders. For fixed $c\geq2$ and sufficiently large $n$, the success probability is therefore at least $1-O(n^{-3/2})$.

\paragraph{Space and gates per execution.}
The size checks ensure that the product of all retained cyclic orders is at most $2^n$ on every run. The numerical estimate in Lemma~\ref{lem:coordinate-qubits} therefore bounds the linear coordinate register by $2n$ qubits, even if an earlier recovery was incorrect. It also bounds the number of retained factors by $n$ and the number of factors greater than $B$ by $\sqrt n$. Each linear coordinate has at most $\sqrt n+1$ bits. The lattice dimension is $m=d+b=O(\sqrt n)$, and $\log D=O(\sqrt n)$, so both the binary lattice register and its Fibonacci digits use $O(n)$ qubits. The group accumulators, reusable slice buffer, output register, and group-operation work registers use $O(n)$ further qubits. Recoding and the exact cyclic transforms require $O(\sqrt n)$ auxiliary qubits, reused across coordinates. The evaluation work is uncomputed before the Fourier measurements. Thus each execution uses $O(n)$ qubits, and this workspace is reused between executions.

To evaluate the linear part of $\Psi$, expand each $x_i$ in binary and use classically precomputed multiples $2^j y_i$. The number of controlled additions is bounded by \(\sum_{i=1}^{r}\lceil\log_2N_i\rceil\leq2n.\) Including their inverses, these operations cost $O(nT_{\mathrm{op}})$ gates. For the lattice part, apply the reversible Fibonacci recurrence of Lemma~\ref{lem:fibonacci-evaluation}, charging each group addition to the black-box circuit. Since $J=O(\log D)=O(\sqrt n)$, the group additions cost \(O\bigl((m+1)JT_{\mathrm{op}}\bigr) =O(nT_{\mathrm{op}}).\) Recoding, including its inverse, costs $\softO(mJ\log D)=\softO(n^{3/2})$ gates. Register exchanges use $O(nJ)$ gates, hence $O(n^{3/2})$, and copying the group label to the output uses $O(n)$ gates.

The proof of Proposition~\ref{prop:joint-sampling-resources} bounds both exact uniform-state preparation and the exact Fourier transform on the linear coordinates by $O(n^{3/2})$ gates. The Gaussian preparation and lattice Fourier transform have cost $\softO(m(\log D)^2)=\softO(n^{3/2})$, by the estimates in the proof of Proposition~\ref{prop:joint-sampling-resources}, with lattice dimension $m$. Consequently, each execution costs \(\softO(nT_{\mathrm{op}}+n^{3/2}) =\softO(nT_{\mathrm{op}}),\) where the equality uses $T_{\mathrm{op}}=\Omega(\sqrt n)$. Initialization satisfies the same space and gate bounds with dimension $d$ and no linear component.

\paragraph{Circuit count and quantum time.}
Initialization and the subgroup extensions use different quantum circuits, with at most \(O(\sqrt n)\) circuits in total. Each circuit contains \(\softO(nT_{\mathrm{op}})\) gates, yielding \(\softO(n^{3/2}T_{\mathrm{op}})\) gates overall, counting each circuit once. Initialization requires \(O(\sqrt n)\) executions, while each extension requires \(Q=O(r+\sqrt n)=O(n)\) executions. Consequently, the quantum time complexity is \(\softO(n^{5/2}T_{\mathrm{op}})\).

\paragraph{Classical computation.}
The sampled coefficients have $2n$ bits. Constructing the auxiliary elements, precomputing fixed multiples, and evaluating the final generator combinations require polynomially many classical group-operation queries. The augmented lattice used for recovery has dimension $O(\sqrt n)$ and rational entries of polynomial bit length. The size checks bound all retained periods by $2^n$; hence the pullback systems and the matrices used for kernel extraction have polynomial dimension and bit length. Polynomial-time LLL, HNF, and Smith normal form algorithms, retaining the required integer transformations, therefore perform each reconstruction in polynomially many bit operations. The same bounds apply to initialization and to the size checks. Since there are $O(\sqrt n)$ extensions, the total classical computation uses polynomially many bit operations and group-operation queries.
\end{proof}

\paragraph{AI statement:}
The original theorems and lemmas in this paper, including
their proofs, were derived by the authors.
ChatGPT (OpenAI) was used to help check the correctness of
the short-basis proofs in Section~\ref{sec:short-bases}.
It was also used to check the distributional arguments in
Section~\ref{sec:main-theorem-proof}, including the proofs of
Lemmas~\ref{lem:joint-dual-sampling},
\ref{lem:uniform-projection},
and~\ref{lem:annihilator-isomorphism}.
It also assisted in checking the gate counts and qubit
requirements of individual quantum circuits, the number
of circuit executions, and the resulting total quantum time.
The initial manuscript was written by the authors.
ChatGPT was also used to help correct minor errors in
intermediate drafts, resolve inconsistencies in notation
and references, and improve the language.
The authors reviewed the revisions and take responsibility
for the final manuscript.

\clearpage
\bibliographystyle{alpha} 
\bibliography{references}
\end{document}